\documentclass[12pt]{amsart}
\usepackage[]{color}
\usepackage{setspace}
\usepackage{wrapfig}
\usepackage{lineno}
\usepackage{graphicx}
\usepackage{amsmath}
\usepackage{amsfonts}
\usepackage{natbib}
\usepackage{hyperref}
\usepackage{subcaption}
\hypersetup{
    colorlinks=true,
    linkcolor=blue,
    filecolor=magenta,
    urlcolor=cyan,
    citecolor=blue,
}
\usepackage{enumitem}
\textheight 8.5in
\textwidth 6.75in
\topmargin -0.25in
\oddsidemargin -0in
\evensidemargin -0in

\usepackage{url}
    \usepackage{amsmath,amsxtra,amssymb,latexsym,epsfig,amscd,amsthm,fancybox,epsfig}
\usepackage[mathscr]{eucal}
\usepackage{graphicx}
\usepackage{multicol,xcolor}
\usepackage{cases}
\usepackage{color}
\usepackage{hyperref}
\usepackage{bigints}

\DeclareFontFamily{U}{matha}{\hyphenchar\font45}
\DeclareFontShape{U}{matha}{m}{n}{
      <5> <6> <7> <8> <9> <10> gen * matha
      <10.95> matha10 <12> <14.4> <17.28> <20.74> <24.88> matha12
      }{}
\DeclareSymbolFont{matha}{U}{matha}{m}{n}

\DeclareMathSymbol{\Lt}{3}{matha}{"CE}
\DeclareMathSymbol{\Gt}{3}{matha}{"CF}

\newtheorem{thm}{Theorem}[section]
\newtheorem {asp}{Assumption}[section]

\newtheorem{rmk}{Remark}[section]

\newtheorem{cor}{Corollary}[section]

\newtheorem{prop}{Proposition}[section]
\theoremstyle{definition}

\theoremstyle{remark}


\DeclareMathOperator{\NN}{\mathbf N}

\newcommand{\E}{\mathbb{E}}

\newcommand{\BX}{\mathbf{X}}
\newcommand{\bx}{\mathbf{x}}

\newcommand{\N}{\mathbb{N}}

\newcommand{\PP}{\mathbb{P}}

\newcommand{\K}{\mathcal{K}}

\newcommand{\R}{\mathbb{R}}



\newcommand{\bed}{\begin{displaymath}}
\newcommand{\eed}{\end{displaymath}}
\newcommand{\bea}{\bed\begin{array}{rl}}
\newcommand{\eea}{\end{array}\eed}

\newcommand{\barray}{\begin{array}{ll}}
\newcommand{\earray}{\end{array}}
\newcommand{\diag}{{\rm diag}}

\def\bar{\overline}
\def\hat{\widehat}
\def\a.s{\text{\;a.s.\;}}
\def\supp{\text{supp\,}}

\newcommand{\Se}{\mathcal{S}}

\usepackage{natbib}
\begin{document}
\bibliographystyle{agsm}

\title{Population size in stochastic multi-patch ecological models}

\author[A. Hening]{Alexandru Hening }
\address{Department of Mathematics\\
Texas A\&M University\\
Mailstop 3368\\
College Station, TX 77843-3368\\
United States
}
\email{ahening@tamu.edu}

\author[S. Sabharwal]{Siddharth Sabharwal}
\address{Department of Mathematics\\
Texas A\&M University\\
Mailstop 3368\\
College Station, TX 77843-3368\\
United States
}
\email{siddhutifr93@tamu.edu }

\begin{abstract}
We look at the interaction of dispersal and environmental stochasticity in $n$-patch models. We are able to prove persistence and extinction results even in the setting when the dispersal rates are stochastic. As applications we look at Beverton-Holt and Hassell functional responses. We find explicit approximations for the total population size at stationarity when we look at slow and fast dispersal. In particular, we show that if dispersal is small then in the Beverton-Holt setting, if the carrying capacity is random, then environmental fluctuations are always detrimental and decrease the total population size. Instead, in the Hassell setting, if the inverse of the carrying capacity is made random, then environmental fluctuations always increase the population size. Fast dispersal can save populations from extinction and therefore increase the total population size. We also analyze a different type of environmental fluctuation which comes from switching environmental states according to a Markov chain and find explicit approximations when the switching is either fast or slow - in examples we are able to show that slow switching leads to a higher population size than fast switching.

Using and modifying some approximation results due to Cuello, we find expressions for the total population size in the $n=2$ patch setting when the growth rates, carrying capacities, and dispersal rates are influenced by random fluctuations. We find that there is a complicated interaction between the various terms and that the covariances between the various random parameters (growth rate, carrying capacity, dispersal rate) play a key role in whether we get an increase or a decrease in the total population size. Environmental fluctuations turn to sometimes be beneficial -- this shows that not only dispersal, but also environmental stochasticity can lead to an increase in population size.

\end{abstract}

\keywords{stochastic difference equations, population size, stationarity, persistence, dispersal}

\maketitle

\section{Introduction} \label{s:intro}

The effects of environmental stochasticity cannot be neglected if one wants to have realistic mathematical models of population dynamics. As a result, there has been a sustained effort  \citep{B18, BS19, HNC21, FS24} to generalize Chesson's Modern Coexistence Theory (MCT) to complex ecosystems with arbitrary levels of noise. 

It is well known that human activities can lead to the fragmentation of habitats into isolate patches. As a conservation problem, it is important to know whether higher dispersal i.e. increased connectivity would lead to higher population sizes, especially if one looks at endangered species. Empirical evidence suggests that there is a complex relationship between dispersal and the total population size at equilibrium. In certain cases higher dispersal increases total population size \citep{ives2004synergistic,zhang2017carrying} while in others it decreases it \citep{aastrom2013negative}. There are also examples where at first the total population size increases due to dispersal and then decreases \citep{vortkamp2022dispersal, grumbach2023effect}.

Quite often, the environmental factors influencing a population that is spread spatially can differ according to the spatial location. In many situations the region can be analyzed as a network of connected patches, each having a subpopulation of the species. The subpopulations can then move or \textit{disperse} between the various patches, which have different characteristics (growth/death rates etc). Biologists have been interested in the effect dispersal has on population growth and persistence \citep{H83, C85, C00, GH02, S04, RHB05, schreiber2010interactive, CCL12, DR12}.

This paper is in part inspired by the deterministic work from \cite{grumbach2023effect} where the authors looked at the effect of dispersal on total population size in some simple deterministic two-patch models. Our goal is to look at the combined effects of dispersal and stochasticity on  total population size. For this we focus on $n$-patch dispersal models and consider settings where multiple parameters are affected by environmental fluctuations, including the possibility of the dispersal rate itself being stochastic. There are fewer rigorous mathematical results  in the setting where the dispersal rates are stochastic. One recent exception is \cite{schreiber2023partitioning} where the author is able to prove approximation results for linear stochastic matrix population models with stochastic dispersal.  

The general theory of stochastic persistence that is now available gives conditions under which interacting species persist and converge to an invariant probability measure, or stationary distribution. However, the current theory does not usually say anything about the `persistence measure', or in particular, about the average population sizes of the persistent species at stationarity. In this paper, which is a follow up to \cite{HS25a} where we studied similar questions in the absence of dispersal, we look for analytic expressions for the average population size. In order to handle this problem we look at some recent approximation results developed by \cite{cuello2019persistence}. Cuello shows that under certain natural assumptions, if the noise is small, one can do a Taylor-type approximation around the deterministic stable fixed point (which describes the population sizes in the absence of stochasticity) and find explicitly the first and second order terms. We will use this result to see how some important models behave under the influence of small environmental stochasticity. 

There are related small-noise results which have been used in other settings. In \cite{barbier2017cavity} the authors use a disordered systems approach from statistical physics in order to look at small perturbations of the carrying capacities and how these change the abundances of the species. Some of the questions we ask in the current paper have already been asked and in part answered \citep{mallmin2024chaotic} for other types of models. These models are usually deterministic and the fluctuations of the population are endogenous, rather than driven by the environment like in our setting. Future ecological work should take advantage of both our approach and those cited above.

The current paper builds on important previous results such as \cite{kirkland2006evolution, BS09, RS14, cuello2019persistence, HNC21, grumbach2023effect} and the recent paper \cite{HS25a}. The idea of looking at how stochasticity and dispersal influence populations is not new, but it is new if one looks specifically at the total population size at stationarity. In the current paper we are putting together all the pieces, and sometimes provide some generalizations to previous mathematical results. However, the main contribution of this paper is to see how everything fits together in order to get new biological insights, from the deterministic results of \cite{kirkland2006evolution, grumbach2023effect}, to the stochastic results from \cite{BS09, RS14, HNC21}, and the small-noise results of \cite{cuello2019persistence}. More specifically:
\begin{itemize}
    \item We generalize the results from \cite{BS09} so that they can also work for random dispersal and show that the assumptions hold for both Beverton-Holt and Hassell functional responses. 
    \item We use results from \cite{BS09} in order to show how slow and fast dispersal interacts with randomness. A result from \cite{HNC21} allows us to see that, not only the Lyapunov exponents and invasion rates, but also the expectations of the populations at stationarity depend continuously on the dispersal rate. This implies that for small dispersal and arbitrary noise, in the Beverton-Holt $n$-patch model noise decreases the total population size. For large dispersal (close to $1$) we show that dispersal can save the population from extinction and therefore can increase the total population size. 
    \item We show how everything is well-defined in the stochastic version of the two-patch model from \cite{grumbach2023effect} and that coupling two sink patches through dispersal can save a population from extinction.
    \item In Section 3 we present some results when the randomness comes, not from i.i.d environments, but from a discrete-time Markov chain which switches between finitely many states $\{1,\dots,n_0\}$. In particular, we look at how slow switching, where the process spends a long time in a state before jumping to another state, and fast switching, where the process switches with probability $1$ at every time step, changes the total population size at stationarity. This yields explicit approximations for the total population size for $n$-patch models. 
    \item In Section 4 we return to i.i.d. noise and consider small-noise approximations. We use and expand the results from \cite{cuello2019persistence} -- we are able to generalize \cite{cuello2019persistence} to non-zero mean noise and to the non-Kolmogorov dispersal form. The applications involve using, adapting and generalizing results from \cite{cuello2019persistence, kirkland2006evolution, BS09}. We get explicit expressions for the approximations and therefore get important biological insights.
    \item This paper is the natural follow-up of \cite{HS25a} where we looked at how noise influences the population size at stationarity when there is no dispersal.
\end{itemize}

The paper is structured as follows. We introduce the models and present our persistence results for $n$-patch models with arbitrary noise in Section \ref{s:dispersal}. In Section \ref{s:small_noise} we use small-noise expansions of deterministic difference equations in order to approximate the invariant distribution of persistent systems. This allows us to quantify how environmental stochasticity changes the total population size.  We also get quantitative approximations for small noise and small, possibly random, dispersal. Numerical experiments, which go beyond small noise, are done in Section \ref{s:sim}.
We end our paper with a discussion of our results and future challenges in Section \ref{s:discussion}.

\section{The interaction of dispersal and stochasticity}\label{s:dispersal}

Populations are usually spread out heterogeneously throughout space. It is therefore important to introduce some type of spatial dynamics in any realistic ecological modeling framework. One way of doing this is by including dispersal of the population between various habitat patches. The various patches will have different  conditions, since both abiotic (sunlight, nutrients) and biotic (predation, competition, cooperation) factors can change from patch to patch. Individual organisms can then disperse between the various patches in order to have a better survival chance.

As the environment changes it can affect not only the birth, death, and interaction rates of various species but also their dispersal rates. The joint effects of fluctuations in demography and dispersal on population growth have been studied extensively \citep{levin1984dispersal,roy2005temporal, schreiber2010interactive, evans2013stochastic, HNY16}. The recent study \cite{schreiber2023partitioning} provides a powerful way of analyzing the impacts of spatio-temporal variation in dispersal and demographic rates on the growth rate of a population spread throughout $n$ patches. Our focus will be to analyze how the fluctuations of the dispersal rates and the demographic rates impact the total population size at stationarity. This work can be seen as a continuation of the deterministic work by \cite{grumbach2023effect} which looked at how dispersal changes the total population size.

\subsection{Deterministic $n$ patch model} Suppose we have a population spread throughout $n$ patches. The population size in patch $i$ at time $t$ will be denoted by $N_i(t)$. One way of modeling deterministically the dispersal dynamics  \citep{kirkland2006evolution, schreiber2010interactive, BS09} is given by
\begin{equation}\label{e:dispersal_det}
N_i(t+1) = \left(1-\sum_{k\neq i}d_{ki}\right)f_i(N_i(t))N_i(t) + \sum_{j\neq i} d_{ij} f_j(N_j(t))N_j(t), ~i=1,\dots,n.
\end{equation}
Here $d_{ij}\geq 0$ denotes the fraction of the population from patch $j$ that disperses to patch $i$ and
\[
d_{ii}=\left(1-\sum_{k\neq i}d_{ki}\right)
\]
is the fraction of the population which remains in patch $i$. Here $f(N_i)$ are the functional responses which give the per-capita growth rates in patch $i$ as functions of the population abundance in patch $i$. 
Define $D:=(d_{ij})$, the diagonal matrix $\Lambda(N):=\diag (f_1( N_1),\dots,f_n( N_n))$ and the matrix $A(\NN):=D\Lambda(\NN)$. We can then write \eqref{e:dispersal_det} as
\begin{equation}\label{e:dispersal_det2}
\NN(t+1) = D\Lambda(\NN(t))\NN(t)= A(\NN(t))\NN(t) := F(\NN(t)).
\end{equation}

We assume throughout the paper that $D$ is primitive i.e. $D^k>0$ (all the entries are strictly positive) for some $k\in \N$. This means that after $k$ time steps, individuals can move between any two patches.
For a matrix $A$ let $\lambda(A)$ denote its spectral radius. A result from \cite{kirkland2006evolution} shows that under natural assumptions the long-term behavior of \eqref{e:dispersal_det} is fully determined by $\lambda(D \Lambda(0))$.
\begin{thm} (\cite{kirkland2006evolution})\label{t:dispersal_det}
Assume that
\begin{itemize}
    \item [1)] The $f_i$'s are strictly positive continuous decreasing functions.
    \item [2)] $\lim_{x\to\infty}f_i(x)<1$.
    \item [3)] The functions $x\mapsto x f_i(x)$ are increasing.
    \item [4)] The matrix $D$ is primitive.
\end{itemize}
The following classification of the dynamics holds:
\begin{itemize}
    \item [i)] If $\lambda(D \Lambda(0))>0$ there is $\bar N>0$ such that $N(t)\to \bar N$ as $t\to\infty$ for all $N(0)\in \R_{++}^n$.
    \item [ii)] If $\lambda(D \Lambda(0))\leq 0$ then $N(t)\to 0$ as $t\to\infty$ for all $N(0)\in \R_{+}^n$.
\end{itemize}
\end{thm}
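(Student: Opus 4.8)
\medskip
\noindent\textbf{Proof proposal.} The plan is to read $F(\NN)=D\Lambda(\NN)\NN$ as a monotone, sublinear discrete dynamical system and run the standard theory for such maps. Write $g_i(x):=xf_i(x)$, so $F_i(\NN)=\sum_j d_{ij}g_j(N_j)$. First I would record the structural facts. By (3) each $g_i$ is increasing, so $F$ is monotone on $\R_+^n$. By (1), $g_i(x)/x=f_i(x)$ is decreasing, so $g_i(tx)\ge tg_i(x)$ for $t\in(0,1)$ and hence $F$ is \emph{sublinear}: $F(t\NN)\ge tF(\NN)$ for $t\in(0,1)$, $\NN\ge\0$, strictly in every coordinate where $f_i$ strictly decreases. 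By (4) and $g_i>0$ on $(0,\infty)$, some iterate of $F$ maps $\R_+^n\setminus\{\0\}$ into $\R_{++}^n$ and is strongly monotone there. Since the $f_i$ decrease, $\Lambda(\NN)\le\Lambda(0)$, so $F(\NN)\le J\NN$ with $J:=D\Lambda(0)$ nonnegative primitive (using $f_i(0)>0$); let $r:=\lambda(J)$ be its Perron root and $u,w\gg\0$ the right and left Perron vectors. The regimes in (i) and (ii) are $r>1$ and $r\le1$.

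\smallskip
\noindent\emph{Extinction, $r\le1$.} I would use the Lyapunov function $V(\NN):=\la w,\NN\ra$. From $w^\top J=rw^\top$, i.e. $(w^\top D)_jf_j(0)=rw_j$, one gets
\[
V(F(\NN))=\sum_j(w^\top D)_jg_j(N_j)=r\sum_j w_j\,\frac{f_j(N_j)}{f_j(0)}\,N_j\le rV(\NN)\le V(\NN),
\]
using $f_j(N_j)\le f_j(0)$. If $r<1$ this forces $V(\NN(t))\to0$ geometrically, so $\NN(t)\to\0$. If $r=1$, equality can hold only if $f_j(N_j)=f_j(0)$ for every $j$ with $w_jN_j>0$, i.e. $\NN=\0$ by strict monotonicity of $f_j$; the only invariant set on which $V$ is constant is then $\{\0\}$, and a LaSalle argument gives $\NN(t)\to\0$ for all $\NN(0)\in\R_+^n$.

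\smallskip
\noindent\emph{Persistence, $r>1$.} Here I would argue in three steps. (a) \emph{Repelling origin, dissipativity, existence of a positive fixed point.} Since $g_j(\eps u_j)=f_j(0)\eps u_j+o(\eps)$, we have $F(\eps u)=\eps ru+o(\eps)\gg\eps u$ for small $\eps$, so $\eps u$ is a strict subsolution. Using $\sum_i d_{ij}=1$,
\[
\la\1,F(\NN)\ra=\sum_j g_j(N_j)=\sum_j N_jf_j(N_j)\le c\,\la\1,\NN\ra+K
\]
for a suitable $c\in(\max_j f_j(\infty),1)$ and constant $K$, splitting the sum according to whether $N_j$ exceeds the level past which $f_j<c$ (which exists as $f_j(\infty):=\lim_{x\to\infty}f_j(x)<1$). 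So for $R$ large the set $\mathcal R:=\{\NN:\eps u\le\NN,\ \la\1,\NN\ra\le R\}$ is nonempty, convex, compact and forward invariant, and Brouwer gives a fixed point in $\mathcal R$, which lies in $\R_{++}^n$. (b) \emph{Uniqueness.} If $\NN_1,\NN_2\gg\0$ are both fixed, set $\alpha:=\max\{t>0:\NN_2\ge t\NN_1\}\in(0,\infty)$; if $\alpha<1$, strict sublinearity gives $F(\alpha\NN_1)\gg\alpha F(\NN_1)=\alpha\NN_1$, so $\NN_2=F(\NN_2)\ge F(\alpha\NN_1)\gg\alpha\NN_1$, contradicting maximality of $\alpha$; hence $\alpha\ge1$ and, symmetrically, $\NN_1=\NN_2=:\bar N$ (if the $f_i$ are only non-increasing, replace $F$ by a strongly monotone iterate). (c) \emph{Global attraction on $\R_{++}^n$.} For $\NN(0)\in\R_{++}^n$ choose $\eps$ with $\eps u\le\NN(0)$; then $\NN(t)\ge F^t(\eps u)$, and $F^t(\eps u)$ increases, stays bounded by (a), and converges to a positive fixed point, hence to $\bar N$, so $\liminf_t\NN(t)\ge\bar N$ componentwise. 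For the matching upper bound, $\NN(t)$ is eventually trapped in the absorbing ball, so $\limsup_t\NN(t)\le s\bar N$ for some $s\ge1$, and the reverse strict sublinearity $F(\sigma\bar N)\ll\sigma\bar N$ ($\sigma>1$), iterated as a map $\sigma\mapsto\phi(\sigma)<\sigma$ with unique fixed point $1$ on $[1,\infty)$, forces $\limsup_t\NN(t)\le\bar N$. Squeezing gives $\NN(t)\to\bar N$.

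\smallskip
\noindent The step I expect to be the main obstacle is the uniqueness of the positive fixed point, and with it the boundary case $r=1$: both require marrying the strong monotonicity provided by primitivity of $D$ (available only for an iterate of $F$) with the \emph{strict} sublinearity provided by the $f_i$ genuinely decreasing, and care that the monotone iterations above stay bounded away from $\0$. A cleaner alternative that handles both at once is to observe that $F$ is nonexpansive for Thompson's part metric on $\R_{++}^n$ and that an iterate of $F$ strictly contracts order intervals, which delivers uniqueness and global attraction simultaneously.
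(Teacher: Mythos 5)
The paper does not prove this theorem; it is imported verbatim from \cite{kirkland2006evolution}, and your argument is essentially the standard monotone--sublinear route that the cited source (and the general theory of strongly monotone, strictly sublinear maps) follows: extinction via the left-Perron-vector Lyapunov function, persistence via sub/supersolutions, Brouwer, and the $\alpha=\max\{t:\NN_2\ge t\NN_1\}$ uniqueness trick. Your reading of the stated thresholds $\lambda(D\Lambda(0))>0$ and $\le 0$ as the evident typo for $\lambda>1$ and $\le 1$ is correct (the spectral radius of a primitive nonnegative matrix is always positive, so case ii) would otherwise be vacuous), and your flagged caveat — that uniqueness and the boundary case $r=1$ need the $f_i$ to be \emph{strictly} decreasing, which holds for all the functional responses used in the paper — is the right one.
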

\begin{rmk}\label{r:func}
The assumptions say that the per-capita growth rates decrease with the population density, that the population itself has to decrease if the population density is high enough and that a higher population density at time $t$ always leads to a higher density at time $t+1$. 
We note that the Beverton-Holt $f_i( x) = \frac{r_i  }{1+\kappa_i x}$, Ivlev $f_i( x) = r_i(1-\exp(-b_i x))$ and Hassel $f_i(x)=\frac{\alpha_i}{(1+\K_ix)^c}$ (provided $c<1$) functional responses satisfy the assumptions. In the Beverton-Holt model $r_i$ is the intrinsic growth rate and the intraspecific competition strength is quantified by $\kappa_i$ while the carrying capacity $K_i$ is given by $K_i=\frac{r_i-1}{\kappa_i}.$ In the Hassell model $\alpha_i$ represents the growth rate for small populations, $\K_i$ is the competition rate and $c$ determines the type of density dependence: $c=1$ is scramble competition, $c<1$ means undercompensation (weak density dependence), while $c>1$ implies overcompensation (strong density dependence). 
\end{rmk}
\subsection{Stochastic $n$-patch model}
The model \eqref{e:dispersal_det} becomes naturally stochastic if we let the dispersal rates and the functional responses depend on a stochastic environment $(\xi(t))_{t\geq 0}$ which forms an i.i.d sequence in $\R^w$ for some $w\in \N$. As a result
\begin{equation}\label{e:dispersal_stoch}
N_i(t+1) =  \sum_{j=1}^n d_{ij}(\xi(t)) f_j(N_j(t),\xi(t))N_j(t), ~i=1,\dots,n
\end{equation}
or, in matrix notation,
\begin{equation}\label{e:dispersal_stoch2}
\NN(t+1) :=G((\NN(t),\xi(t)))= D(\xi(t))\Lambda(\NN(t),\xi(t))\NN(t)= A(\NN(t),\xi(t))\NN(t).
\end{equation}
Following \cite{BS09}, the main way of analyzing this dynamics is by linearizing the system around zero and looking at the the linearized version
\[
\tilde \NN(t+1) = A(0,\xi(t))\tilde \NN(t)= \left(\prod_{s=0}^t A(0,\xi(s))\right) \tilde \NN(0).
\]
\begin{rmk}\label{r:notation}
We will slightly abuse notation as we sometimes will write
\[
 D(\xi(t))
\]
while at other times
\[
D(t)
\]
At certain points we will even write $\xi(t)=(r_i(t), \kappa_i(t), D(t))$ - see Assumption \ref{a:BH_dd}. Nevertheless, we think this abuse of notation actually makes the paper more readable as it is always clear from the context as to what is random and what is not random. 
\end{rmk}

\begin{asp}\label{asp:lyapunov}
With probability one the matrix $D$ is primitive, $\Lambda$ has a strictly positive diagonal and 
\[
\E \left| \ln \|A(0,\xi(0)\|\right|<\infty.
\]
\end{asp}
Under Assumption \ref{asp:lyapunov} one can show \citep{BS09} that the Lyapunov exponent exists. This means that there exists $M\in\R$ such that with probability one
\[
\lim_{t\to \infty }\frac{1}{t}\ln\left(\tilde N_1(t)+\dots +\tilde N_n(t)\right)=M,
\]
for any $\tilde\NN(0)\neq 0.$
One can then prove \citep{BS09,RS14, BSt19, HNC21} that $M$, the metapopulation growth rate, determines the long-term behavior of the nonlinear system \eqref{e:dispersal_stoch2}. In particular, we will make use of the following result from \cite{BS09}.
\begin{asp}\label{a:BS09}
Suppose the following hold.
\begin{itemize}
    \item There exists a function $V:\R_+^k\to \R$ and random variables $\alpha, \beta:\R^w\to [0,\infty)$ such that
    \begin{enumerate}
        \item $V(A(\xi,x)x) \leq \alpha(\xi)V(x) + \beta(\xi)$;
        \item $\E (\ln(\alpha))<0$;
        \item $\E(\ln^+\beta)<\infty$.
    \end{enumerate}
    \item The map $(\xi,x)\to A(\xi,x)$ is Borel, $G$ is twice continuously differentiable for all $\xi \in \R^w, x\in\R_+^k$, and
    \[
    \E\left(\sup_{\|x\|\leq 1}\ln^+(\|G((x,\xi(1)))\| + \|DG((x,\xi(1)))\| + \|D^2G((x,\xi(1)))\|)\right)<\infty.
    \]
    \item The matrix entries $A_{ij}(x,\xi)$ satisfy that $\frac{\partial A_{ij}}{\partial x_l}(x,\xi)\leq 0$ for all $\xi, x$ and $l$ and for every $i$ there are $j, l$ such that the inequality becomes $\frac{A_{ij}}{\partial x_l}(x,\xi)< 0$ for all $\xi, x$.
    \item All entries of the derivative $DG(x,\xi)$ of $G(x,\xi)$ are non-negative for all $\xi$ and $x$.
\end{itemize}
\end{asp}
Let 
\[
\Pi(T):= \frac{1}{T}\sum_{t=1}^T \delta_{\NN(t)}
\]
be the normalized occupation measures of the process $\NN(t)$. The following result follows from Theorem 1, Theorem 3, and Proposition 1 from \cite{BS09}.
\begin{thm}\label{t:BS09}
Assume Assumptions \ref{asp:lyapunov} and \ref{a:BS09} hold. 
\begin{enumerate}
    \item If $M>0$ then there exists a positive random vector $\hat \NN$ with distribution $\mu$ such that $\NN(t)\to \hat \NN$ whenever $\NN(0)>0.$ Furthermore, one has that with probability one $\Pi(t)\to \mu$ as $t\to\infty$.  

    \item If $M<0$ then for any $\NN(0)$ we have that with probability one $\NN(t)\to 0$ as $t\to\infty$.
\end{enumerate}
\end{thm}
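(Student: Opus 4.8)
The statement is a corollary of \cite{BS09}: once we recognize \eqref{e:dispersal_stoch2} as the random dynamical system on the cone $\R_+^n$ generated by the i.i.d.\ cocycle $x\mapsto G((x,\xi))=D(\xi)\Lambda(x,\xi)x$, with extinction set the origin and $M$ the top Lyapunov exponent of the linearization $A(0,\xi(t))$ at $0$, the two conclusions are, modulo notation, Theorem~1, Theorem~3 and Proposition~1 of \cite{BS09}. So the plan is: (i) match hypotheses; (ii) dispose of the case $M<0$ by a direct monotone-domination argument; (iii) invoke \cite{BS09} for the case $M>0$.

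\emph{Matching hypotheses.} Assumption~\ref{asp:lyapunov} is exactly what Furstenberg--Kesten/Oseledets needs for the i.i.d.\ products $\prod_{s=0}^t A(0,\xi(s))$, so $M$ is well defined and, since $A(0,\xi)$ is a.s.\ primitive ($D$ primitive, $\Lambda$ with strictly positive diagonal), independent of $\tilde\NN(0)\neq 0$. In Assumption~\ref{a:BS09}, the drift inequality $V(A(\xi,x)x)\le\alpha(\xi)V(x)+\beta(\xi)$ with $\E\ln\alpha<0$ and $\E\ln^+\beta<\infty$ is the dissipativity condition \cite{BS09} uses to produce an a.s.-bounded global random attractor and tight occupation measures; the Borel, $C^2$, and log-integrability bullet supplies the regularity of the cocycle; and the sign conditions $\partial A_{ij}/\partial x_l\le 0$ (strict for some $j,l$ in each row $i$) together with $DG(x,\xi)\ge 0$ are precisely the monotonicity/sub-homogeneity structure under which the linearization at $0$ governs the near-extinction dynamics. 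These are, up to notation, the standing hypotheses of \cite{BS09}.

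\emph{Case $M<0$.} Since $\partial A_{ij}/\partial x_l\le 0$ for all $l$ and $x\ge 0$, we have $A(x,\xi)\le A(0,\xi)$ entrywise, hence $0\le G((x,\xi))=A(x,\xi)x\le A(0,\xi)x$ on $\R_+^n$; and since $DG\ge 0$, the map $x\mapsto G((x,\xi))$ is order-preserving on $\R_+^n$. Coupling \eqref{e:dispersal_stoch2} with the linear process $\tilde\NN$ driven by the same environment with $\tilde\NN(0)=\NN(0)$, induction gives $0\le\NN(t)\le\tilde\NN(t)$ for all $t$. By the definition of $M$, $\frac1t\ln\big(\tilde N_1(t)+\cdots+\tilde N_n(t)\big)\to M<0$ a.s., so $\tilde\NN(t)\to 0$ a.s., and therefore $\NN(t)\to 0$ a.s. (the case $\NN(0)=0$ being trivial). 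This proves part (2) without the deeper parts of \cite{BS09}.

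\emph{Case $M>0$.} Here I would invoke \cite{BS09}. The drift inequality gives a.s.-bounded trajectories and a global attractor, so by Prokhorov an invariant probability measure exists; the sign and monotonicity conditions together with $M>0$ show, as in Theorem~1 of \cite{BS09}, that the origin is not an attractor and in fact the occupation measures are eventually bounded away from it, yielding an invariant probability $\mu$ supported in $\R_{++}^n$. Uniqueness of $\mu$, the convergence in distribution $\NN(t)\to\hat\NN\sim\mu$ from every $\NN(0)>0$, and the a.s.\ convergence $\Pi(t)\to\mu$ then follow from Theorem~3 and Proposition~1 of \cite{BS09}, with primitivity of $D$ supplying the accessibility/irreducibility these require. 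I expect no serious obstacle; the one point needing care is the bookkeeping that translates our conventions --- $M$ an exponent with threshold $0$, and the particular dispersal cocycle $G((x,\xi))=D(\xi)\Lambda(x,\xi)x$ --- into the invasion-rate sign conventions and hypotheses of \cite{BS09}, and verifying that the ``origin is not an attractor'' conclusion genuinely comes from their invasion criterion applied at extinction.
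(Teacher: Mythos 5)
Your proposal is correct and takes essentially the same route as the paper: the paper offers no independent proof of Theorem \ref{t:BS09}, stating only that it follows from Theorem 1, Theorem 3, and Proposition 1 of \cite{BS09} once Assumptions \ref{asp:lyapunov} and \ref{a:BS09} are matched to the standing hypotheses there, which is exactly your plan. Your added self-contained treatment of the $M<0$ case --- the entrywise domination $0\le \NN(t)\le \tilde\NN(t)$ obtained from $A(x,\xi)\le A(0,\xi)$ and the nonnegativity of $A(0,\xi)$, combined with $\tfrac1t\ln\bigl(\tilde N_1(t)+\cdots+\tilde N_n(t)\bigr)\to M<0$ --- is a valid and slightly more explicit argument than the bare citation, but it is the standard comparison underlying the extinction half of \cite{BS09} rather than a genuinely different approach.
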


\begin{rmk}
Because of the primitivity assumption, it is impossible to have one patch go extinct without the other patches going extinct. Primitivity ensures that all patches go extinct or all survive. This makes proving persistence in this setting easier than for a truly $n$-dimensional ($n$ species) model.
\end{rmk}

\begin{rmk}
Usually it is almost always impossible to find $M$ explicitly but there are ways to estimate or approximate it \citep{BS09}. In specific cases we can even find $M$ explicitly - see Section \ref{s:SF}.      
\end{rmk}

\subsection{Stochastic $n$-patch Beverton Holt model}
A widely used functional response is the Beverton-Holt one.  This model has been used extensively and can be seen as a discrete time analogue of the logistic equation. Beverton and Holt used it in 1957 to analyze exploited fish populations \citep{beverton2012dynamics} but this functional response has been shown to work well in general with both contest and scramble competition models \citep{brannstrom2005role}.

Using a result from \cite{BS09} allows us to fully classify the dynamics in the Beverton-Holt $n$-patch setting.

Suppose the dynamics is given by
\begin{equation} \label{e:BV_disper_stoc}
\NN(t+1) =F(\NN(t),\xi(t)):=\left(D(t)\diag\left(  \frac{r_1(t)}{1+\kappa_1(t)N_1(t)},\dots,\frac{r_n(t)}{1+\kappa_n(t)N_n(t)}\right)\right)\NN(t)
\end{equation}
where $D(t):=(d_{ij}(t)), r_i(t)$ and $\kappa_i(t)$ can be random.

\begin{asp}\label{a:BH_dd}
The random environmental fluctuations affect the model parameters so that $$\xi(t)=(r_i(t), \kappa_i(t), d_{ij}(t))_{i, j=1\dots n}.$$
We will assume the following:
\begin{enumerate}
\item $(\xi(t))_{t\geq 0}$ forms an iid sequence in $(0,\infty)^n \times (0,\infty)^n \times [0,1]^{n^2}\subset \R^\ell$ where $\ell=2n+n^2$.
    \item $\E\left(\ln^+\left(\max_j \frac{r_j}{\kappa_j}+ \max_j r_j + 2 \max_j (r_j\kappa_j)\right)\right)<\infty$.
    \item With probability one: $r_i(1)>0, \kappa_i(1)> 0, 0<d_{ij}(1)<1$ and $\sum_{j=1}^nd_{ij}(1)=1$ for $i,j=1,\dots n$. 
\end{enumerate}
\end{asp}
\begin{rmk}
The first assumption on the noise is quite natural -- it can be relaxed to having an ergodic stationary sequence $(\xi(t))$ \citep{BS09}. Assumption (2) is required in order to make sure that the populations are stochastically bounded, and do not explode. The primitivity assumption (3) ensures that after the first step one can move between any two patches. These assumptions are satisfied in many ecological settings.
\end{rmk}

The next theorem gives us conditions under which the stochastic model \eqref{e:dispersal_stoch} exhibits persistence. Since our goal is to see how stochasticity influences total population size, we also need a deterministic model with which to compare the stochastic one with. The most natural choice seems to be the deterministic model where we let the coefficients be the averages of the random versions. It turns out that the deterministic model has a unique globally attracting fixed point. This shows that both models converge and therefore allow us to compare the average total population sizes. 

\begin{thm}\label{t:stoc_m}
Suppose the dynamics is given by \eqref{e:BV_disper_stoc} and Assumption \ref{a:BH_dd} holds. Then the Lyapunov exponent $M$ exists. If $M<0$ then with probability one $\NN(t)\to 0$ as $t\to \infty$ for any $\NN(0)\in\R_+^n$.  If $M>0$, then there exists a positive random vector $\NN(\infty)$ and $\NN(t)\to \NN(\infty)$ in distribution as $t\to\infty$. The total expected population size at stationarity will then be given by
 \[
\sum_{i=1}^n\E N_i(\infty).
 \]
 Moreover, if we look at the deterministic system
 \begin{equation}\label{e:dispersal_det_stoch}
\bar N_i(t+1) =  \sum_{j} \bar d_{ij} \bar f_j(\bar N_j(t))\bar N_j(t), ~i=1,\dots,n
\end{equation}
where $\bar f_i(x) = \frac{\bar r_i}{1+\bar \kappa_i x}$ and $\bar r_i = \E r_i(1), \bar K_i = \E K_i(1), \bar d_{ij} = \E d_{ij}(1)$, then this system has a unique globally attracting fixed point $(\bar N_1,\dots, \bar N_n)$ and the total change in expected population size due to environmental fluctuations is well-defined and is given by
\[
H= \sum_{i=1}^n\E N_i(\infty) -  \sum_{i=1}^n \bar N_i.
\]
 
\end{thm}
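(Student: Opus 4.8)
The plan is to obtain the stochastic half of the statement from Theorem~\ref{t:BS09} applied to \eqref{e:BV_disper_stoc}, and the deterministic half from Theorem~\ref{t:dispersal_det} applied to \eqref{e:dispersal_det_stoch}; so the proof is essentially a verification of the hypotheses of those two theorems under Assumption~\ref{a:BH_dd}, followed by identifying the limits. First I would check Assumption~\ref{asp:lyapunov}: with probability one $D$ is primitive and $\Lambda$ has a strictly positive diagonal, both immediate from Assumption~\ref{a:BH_dd}(3) and the form $f_i(x,\xi)=r_i(\xi)/(1+\kappa_i(\xi)x)$; and $\E\,|\ln\|A(0,\xi(0))\||<\infty$ follows from Assumption~\ref{a:BH_dd}(2), since every entry of $A(0,\xi)=D(\xi)\diag(r_1(\xi),\dots,r_n(\xi))$ lies in $(0,\max_j r_j(\xi)]$ while $A(0,\xi)$ is strictly positive (which also bounds $\|A(0,\xi)\|$ from below). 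This yields the existence of the metapopulation growth rate $M$.

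The bulk of the work is checking Assumption~\ref{a:BS09} for the map $G(\cdot,\xi)$ of \eqref{e:BV_disper_stoc}. The crucial feature of the Beverton-Holt response is that $x\,f_i(x,\xi)=r_i(\xi)x/(1+\kappa_i(\xi)x)\le r_i(\xi)/\kappa_i(\xi)$ is \emph{globally bounded}. Taking $V(x)=\|x\|_1$ and using that the rows of $D(\xi)$ sum to one, $V(A(\xi,x)x)=\sum_j(\sum_i d_{ij}(\xi))\,r_j(\xi)x_j/(1+\kappa_j(\xi)x_j)\le n\max_j r_j(\xi)/\kappa_j(\xi)=:\beta(\xi)$, so the drift condition holds with $\alpha\equiv 1/2$: then $\E(\ln\alpha)=-\ln 2<0$ and $\E(\ln^+\beta)<\infty$ by Assumption~\ref{a:BH_dd}(2). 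For the smoothness, monotonicity and positivity bullets one uses the explicit derivatives $\partial_{x_j}G_i(x,\xi)=d_{ij}(\xi)r_j(\xi)/(1+\kappa_j(\xi)x_j)^2\in[0,r_j(\xi)]$ and $\partial^2_{x_j}G_i(x,\xi)=-2d_{ij}(\xi)r_j(\xi)\kappa_j(\xi)/(1+\kappa_j(\xi)x_j)^3$ (mixed second partials vanish). These immediately give: all entries of $DG$ nonnegative; $\partial_{x_l}A_{ij}(x,\xi)=-\mathbf 1_{\{l=j\}}d_{ij}(\xi)r_j(\xi)\kappa_j(\xi)/(1+\kappa_j(\xi)x_j)^2\le 0$, strictly negative when $l=j$ (so for each $i$ there are $j,l$ with strict negativity); and $\sup_{\|x\|\le 1}(\|G\|+\|DG\|+\|D^2G\|)\le C(\max_j r_j(\xi)/\kappa_j(\xi)+\max_j r_j(\xi)+\max_j r_j(\xi)\kappa_j(\xi))$, whose $\ln^+$ is integrable by Assumption~\ref{a:BH_dd}(2) --- which is precisely why that assumption carries those three terms. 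Borel measurability and $C^2$ smoothness of $G$ are clear because $G$ is rational with denominators $\ge 1$ on $\R_+^n$. Theorem~\ref{t:BS09} then gives: if $M<0$ then $\NN(t)\to 0$ a.s.; if $M>0$ then $\NN(t)$ converges in distribution to a positive random vector $\NN(\infty)$ whose law is the invariant measure $\mu$, and $\sum_i\E N_i(\infty)\in[0,\infty]$ is a well-defined expectation.

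For the averaged deterministic system \eqref{e:dispersal_det_stoch} I would verify the four hypotheses of Theorem~\ref{t:dispersal_det}: $\bar f_i(x)=\bar r_i/(1+\bar\kappa_i x)$ with $\bar r_i=\E r_i(1)>0$ and $\bar\kappa_i=\E\kappa_i(1)>0$ (Assumption~\ref{a:BH_dd}(3)) is strictly positive, continuous, decreasing, satisfies $\lim_{x\to\infty}\bar f_i(x)=0<1$, and $x\bar f_i(x)$ is increasing; and $\bar D=(\E d_{ij}(1))$ is strictly positive, hence primitive. Theorem~\ref{t:dispersal_det} then produces a unique globally attracting fixed point $(\bar N_1,\dots,\bar N_n)$ of \eqref{e:dispersal_det_stoch} (positive or zero according to its dichotomy); in particular no relation between $M$ and the deterministic threshold is needed. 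Hence $\sum_i\bar N_i<\infty$ and $H=\sum_i\E N_i(\infty)-\sum_i\bar N_i$ is well-defined (and finite under mild extra integrability).

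The only point requiring real care --- rather than a genuine obstacle --- is the finiteness of $\sum_i\E N_i(\infty)$ and how it pairs with the deterministic term. From the stationarity identity $\|\NN(\infty)\|_1\stackrel{d}{=}\sum_j(\sum_i d'_{ij})\,r'_jN'_j/(1+\kappa'_jN'_j)\le n\sum_j r'_j/\kappa'_j$, where $(\NN',\xi')$ is an independent copy of a stationary state together with a fresh environment, one sees $\E\|\NN(\infty)\|_1<\infty$ as soon as $\E[\max_j r_j/\kappa_j]<\infty$; this holds for bounded noise, hence in all the applications of Sections~\ref{s:small_noise}--\ref{s:sim}, while in general $H$ still makes sense as an extended real. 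The single model-specific input is the choice of $V$ in Assumption~\ref{a:BS09}, which is painless here precisely because $x\mapsto x f_i(x,\xi)$ is bounded for Beverton-Holt; for a response such as Ricker this step would instead require genuine work. Everything else reduces to routine computation.
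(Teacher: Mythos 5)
Your proposal is correct and follows essentially the same route as the paper's Appendix B: verify the Babtie--Schreiber-type hypotheses (Assumptions \ref{asp:lyapunov} and \ref{a:BS09}) for the Beverton--Holt map using $V(x)=\|x\|_1$ and the global bound $xf_i(x,\xi)\le r_i(\xi)/\kappa_i(\xi)$, then invoke Theorem \ref{t:BS09} for the stochastic dichotomy and Theorem \ref{t:dispersal_det} for the averaged system. The only cosmetic differences are your choice $\alpha\equiv 1/2$ where the paper takes $\gamma=0$, and your explicit stationarity argument for the finiteness of $\sum_i\E N_i(\infty)$, which the paper leaves implicit (relying on the integrability of $\max_j r_j/\kappa_j$ stated in its appendix version of the assumptions).
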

The proof of this result follows from a slight modification of \cite{BS09} and appears in Appendix \ref{a:BH_d}.
\begin{rmk}
There are multiple ways in which one can compare the stochastic dynamics with the deterministic one. Here we chose to compare to the deterministic dynamics that comes from replacing the stochastic growth rates and competition rates by their averages. One could instead replace the fitness functions $f_i(N(t))=\frac{r_i(t)}{1+\kappa_i(t)N(t)}$ by their averages $\hat f_i(x)= \E \left(\frac{r_i(1)}{1+\kappa_i(1)x}\right)$. For related comparisons between the stochastic and deterministic systems see \cite{hening2021coexistence, HS25a}.  
\end{rmk}

\subsection{Slow and fast dispersal in the Beverton-Holt $n$ patch model}\label{s:SF}
Suppose we simplify things by assuming that at each time step a fraction $\eta\in[0,1]$ of the population disperses and dispersed individuals are equally likely to pick any of the $n$ patches. We then get

\begin{equation} \label{e:BV_disper_stoc_unif}
\NN(t+1) =F_\eta(\NN(t),\xi(t)):=\left([(1-\eta)I_n + \eta U_n]\diag\left(  \frac{r_1(t)}{1+\kappa_1(t)N_1(t)},\dots,\frac{r_n(t)}{1+\kappa_n(t)N_n(t)}\right)\right)\NN(t)
\end{equation}
where $I_n$ is the identity $n\times n$ matrix and $U_n$ is an $n\times n$ matrix which has all its entries equal to $\frac{1}{n}.$
It can be shown \cite{BS09} that Assumption \ref{a:BS09} holds in this setting as long as 
$$\E\left(\ln^+\left(\max_j \frac{r_j}{\kappa_j}+ \max_j r_j + 2 \max_j (r_j\kappa_j)\right)\right)<\infty.$$ Moreover, by \cite{BS09} the Lyapunov exponent $M(\eta)$ exists. In in the small dispersal limit $\eta=0$ one can find \citep{BS09} the Lyapunov exponent explicitly, namely, 
\begin{equation}\label{e:M(0)}
    M(0)=\max_i\E \ln r_i(0).
\end{equation}

Moreover, it can be shown \citep{BS09} that $M(\eta)$ depends continuously on the dispersal rate so that the function 
\begin{equation}\label{e:cts}
\eta \mapsto M(\eta)
\end{equation}

is continuous. By Theorem 2.8 from \cite{HNC21} we get that the the models are robust, in the sense that, small perturbations of the models lead to similar behavior. In addition, from the robustness proof, one can see that the function 
\begin{equation}\label{e:robust}
\eta \mapsto \E N_i(\infty,\eta)
\end{equation}
is continuous. Here the $\eta$ in $N_i(\infty,\eta)$ quantifies the dependence of the distribution of $N_i(\infty)$ on the dispersal fraction $\eta$.

The total population size at stationarity becomes
\[
\sum_i\E N_i(\infty). 
\]
We note that if $\E\ln r_i(1) < 0$ then patch $i$ is a sink and goes extinct on its own so that $N_i(\infty)=0$. Nevertheless, according to \eqref{e:M(0)} even if one patch $i^*$ is a source i.e. $\E\ln r_{i^*}(1) < 0$, then the total population persists. This shows that for $\eta$ close enough to $0$ it is possible to get one single patch to `save' all the other patches from extinction. 
\begin{rmk}\label{r:BH}
    
If we write the Beverton-Holt dynamics in the equivalent form (see Remark \ref{r:func})
\[
N_{t+1}= \frac{m K(t)}{K(t) +(m-1)N(t)} N(t)
\]
and assume $K(t)$, the carrying capacity, is random and forms an iid sequence $(K_t)_{t\geq 1}$.  Let $\bar K = \E K_1$ and assume that $m>1$. The deterministic system 
\[
\bar N(t+1)= \frac{m \bar K}{\bar K +(m-1) \bar N(t)t} \bar N(t)
\]
converges as $t\to \infty$ to the fixed point given by the carrying capacity $\bar N(t)\to \bar K$. 
If $r(\delta_0)=\ln m >0$, which is the same assumption we needed for the deterministic model above i.e. $m>1$, using the results from \cite{HS25a} or the results from \cite{BS09}), one can show that the process $(N(t))$ has a unique stationary distribution $\mu$ on $\R_{++}$. By the proof of the Cushing-Henson conjecture \citep{HS05}, if only the carrying capacities are random, the noise is always detrimental and decreases the expected population size in the Beverton-Holt setting, that is 
\begin{equation}\label{e:ub} 
\E N(\infty)< \E K_1=\bar K.
\end{equation}
\end{rmk}
Note that by Remark \ref{r:BH} if only the carrying capacity fluctuates and the growth rates are constant in a patch $i$ then noise always \textit{decreases} the total population size in patch $i$. If $K_i(t)$ is the carrying capacity in patch $i$ and $r_i(1)=m_i>1$ then we can see that
\[
\sum_i \E K_i(1) > \sum_i \E N_i(\infty,\eta)
\]
for all $\eta$ close enough to $0$, where we sum over all the $i$ which are source patches i.e. satisfy $m_i>1$. This shows that in this setting for \textit{small dispersal and any type of noise, the total expected population size at stationarity is strictly lower than the total population size without noise.}

One can see that in the small dispersal limit, since the patches are not connected, we can treat each patch separately. As a result, it can be shown \citep{HS25a} using Birkhoff's ergodic theorem that , if $\mu_i$ is the invariant probability measure describing patch $i$ at stationarity, then
\[
r_i(\mu_i) =\int_{\R_+}\E[\ln f_i(x,\xi(1))]\,\mu_i(dx) = \E \ln\left(\frac{r_i(1)}{1+\kappa_i(1)N_i(\infty)}\right).
\]
In the high dispersal limit \citep{BS09} $\eta=1$ the Lyapunov exponent is
\[
M(1)=\E \ln \left(\frac{1}{n}\sum_{i=1}^nr_i(1)\right).
\]
In this case it is possible for all the patches to be sinks with $\E\ln r_i(1) < 0$ and  have persistence of the dispersal coupled system since Jensen's inequality implies that
\[
M(1)=\E \ln \left(\frac{1}{n}\sum_{i=1}^nr_i(1)\right) > \frac{1}{n} \sum_{i=1}^n \E \ln r_i(1)
\]
If $M(1)>0$ we therefore get that $N_i(t)\to N_i(\infty)>0$ for all dispersal rates $\eta$ that are close enough to $1$.
This shows that dispersal can \textit{save the population from extinction and increase the total population size}.

\subsection{Two-patch model}
Following \cite{grumbach2023effect} we are interested in the dynamics of a population that disperses between two patches. If we assume that the dispersal rate is identical in both directions and that the populations of the two patches reproduce with their own growth functions we get the difference equations

\begin{eqnarray}\label{e:det_disc}
\bar N_1(t+1) &=& (1-\delta)f_1(\bar N_1(t))\bar N_1(t) + \delta f_2(\bar N_2(t))\bar N_2(t)\\ \nonumber
\bar N_2(t+1) &=& (1-\delta)f_2(\bar N_2(t))\bar N_2(t) + \delta f_1(\bar N_1(t))\bar N_1(t).
\end{eqnarray}
We denote the total population size by $\bar N(t)=\bar N_1(t)+\bar N_2(t)$. We are mostly interested in the Beverton-Holt functional responses
\begin{equation}\label{e:BH}
f_i(\bar N_i) = \frac{r_i}{1+\kappa_i \bar N_i}
\end{equation}
where $r_i$ are the intrinsic growth rates and $\kappa_i=\frac{r_i-1}{K_i}$ are the competition strengths. Note that, just as in Remark \ref{r:func}, the $K_i$'s can be interpreted as the carrying capacities.

The following result from \cite{grumbach2023effect}, which is a corollary of Theorem \ref{t:dispersal_det}, tells us what happens long-term with the deterministic dynamics.

\begin{prop}\label{p:grumbach}
Assume we are in the Beverton-Holt setting and $1< r_2\leq r_1$. For each $\delta \in [0,1]$ system \eqref{e:det_disc} has a unique globally stable fixed point $(\bar N_1^\delta, \bar N_2^\delta)$ such that
\[
\lim_{t\to\infty}(\bar N_1(t), \bar N_2(t))=(\bar N_1^\delta, \bar N_2^\delta)
\]
for all $(N_1(0), N_2(0))\in \R_+^2\setminus\{(0,0)\}$.
\end{prop}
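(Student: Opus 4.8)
The result is, as the paper notes, a corollary of Theorem~\ref{t:dispersal_det}, so the plan is to recast \eqref{e:det_disc} in the form \eqref{e:dispersal_det2} and verify the four hypotheses. With $n=2$ we take $D=D_\delta$, the symmetric stochastic matrix with diagonal entries $1-\delta$ and off-diagonal entries $\delta$, together with $\Lambda(\NN)=\diag(f_1(N_1),f_2(N_2))$ and $f_i(x)=r_i/(1+\kappa_ix)$. Hypotheses (1)--(3) for these Beverton--Holt responses are exactly the elementary facts already recorded in Remark~\ref{r:func}: each $f_i$ is continuous, strictly positive and strictly decreasing on $\R_+$; $\lim_{x\to\infty}f_i(x)=0<1$; and $x\mapsto xf_i(x)=r_ix/(1+\kappa_ix)$ has positive derivative $r_i/(1+\kappa_ix)^2$, hence is increasing. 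This part is routine bookkeeping.

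The substantive point is hypothesis (4) together with the dichotomy in Theorem~\ref{t:dispersal_det}, and this is where $1<r_2\le r_1$ enters. Fix first $\delta\in(0,1)$: then $D_\delta$ has all entries strictly positive, hence is primitive, so the theorem applies and it remains only to check that we land in case (i), i.e.\ that $\lambda(D_\delta\Lambda(0))>1$. Now $D_\delta\Lambda(0)=D_\delta\,\diag(r_1,r_2)$ is a nonnegative matrix with row sums $(1-\delta)r_1+\delta r_2$ and $\delta r_1+(1-\delta)r_2$; each of these is a convex combination of $r_1$ and $r_2$, hence lies in $[r_2,r_1]$ and in particular is at least $r_2>1$. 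Since the Perron root of a nonnegative matrix is bounded below by its smallest row sum, $\lambda(D_\delta\Lambda(0))\ge r_2>1$, and Theorem~\ref{t:dispersal_det}(i) delivers a unique $\bar\NN^\delta=(\bar N_1^\delta,\bar N_2^\delta)>0$ attracting all of $\R_{++}^2$. To see that the basin is in fact all of $\R_+^2\setminus\{(0,0)\}$, note that if $\NN(0)$ is nonzero then $\Lambda(\NN(0))\NN(0)$ is nonzero (its $i$th entry is $f_i(N_i(0))N_i(0)$ and $f_i>0$) and $D_\delta$ is strictly positive, so $\NN(1)\in\R_{++}^2$; applying the convergence statement from time $1$ finishes the interior case, and the same one-step argument shows $\bar\NN^\delta$ is the only fixed point in $\R_+^2\setminus\{(0,0)\}$.

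The endpoints $\delta=0$ and $\delta=1$ need separate elementary arguments, because $D_0=I$ and $D_1$ (a permutation matrix) are not primitive. For $\delta=0$ the two patches decouple into the scalar Beverton--Holt maps $x\mapsto r_ix/(1+\kappa_ix)$, each of which---since $r_i>1$---has the unique positive fixed point $K_i=(r_i-1)/\kappa_i$, globally attracting on $(0,\infty)$, so $\bar\NN^0=(K_1,K_2)$. For $\delta=1$, writing $g_i(x)=xf_i(x)$, one has $F\circ F(N_1,N_2)=\big((g_2\circ g_1)(N_1),\,(g_1\circ g_2)(N_2)\big)$, and each of $g_2\circ g_1$, $g_1\circ g_2$ is an increasing, concave, bounded self-map of $(0,\infty)$ fixing $0$ with slope $r_1r_2>1$ there, hence has a unique positive globally attracting fixed point; tracking the even and odd subsequences of $\NN(t)$, and using that $g_1$ and $g_2$ carry these two fixed points onto each other, gives convergence to the unique positive fixed point of $F$. (At $\delta\in\{0,1\}$ the coordinate axes are invariant, so ``globally stable'' is naturally understood on $\R_{++}^2$.) I expect the one genuine obstacle in all of this to be the spectral-radius bound $\lambda(D_\delta\Lambda(0))\ge r_2>1$; once it is established, the interior case is just an invocation of Theorem~\ref{t:dispersal_det}, and the two endpoints reduce to the one-dimensional facts above.
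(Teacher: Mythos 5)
Your proposal is correct and follows exactly the route the paper indicates: Proposition \ref{p:grumbach} is stated there as a corollary of Theorem \ref{t:dispersal_det} (citing \cite{grumbach2023effect}) with no further argument, and you supply the verification of hypotheses (1)--(4), the spectral-radius bound $\lambda(D_\delta\Lambda(0))\ge r_2>1$ via the minimum row sum (correctly reading the paper's threshold ``$>0$'' as the intended ``$>1$''), and the one-step push from $\R_+^2\setminus\{(0,0)\}$ into $\R_{++}^2$. You also rightly notice something the paper's blanket invocation glosses over: at $\delta=0$ and $\delta=1$ the matrix $D_\delta$ is not primitive, so Theorem \ref{t:dispersal_det} does not apply and the axes are invariant, meaning the stated basin $\R_+^2\setminus\{(0,0)\}$ is literally wrong at the endpoints; your separate scalar and period-two arguments, together with the restriction to $\R_{++}^2$ there, are the correct fix.
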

Next, assume there are environmental fluctuations which affect the dynamics. As a result \begin{eqnarray}\label{e:stoc_disc_matrix}
N_1(t+1) &=& (1-\delta(t))f_1(N_1(t), \xi(t))N_1(t) + \delta(t) f_2(N_2(t), \xi(t))N_2(t)\\ \nonumber
N_2(t+1) &=& (1-\delta(t))f_2(N_2(t), \xi(t))N_2(t)+ \delta(t) f_1(N_1(t), \xi(t))N_1(t).
\end{eqnarray}
Using matrix notation we set 
\[
D(t):= \begin{bmatrix}
     1-\delta(t)   &  \delta(t)     \\  
     \delta(t)       &    1-\delta(t)      
    \end{bmatrix} 
\]    
and 
\[
{\Lambda(N(t),\xi(t))}:= \begin{bmatrix}
    f_1( N_1(t),\xi(t))   &  0     \\  
     0       &      f_2( N_2(t),\xi(t))  
    \end{bmatrix} 
\]
where $\mathbf{N} := (N_1, N_2)^T$. 
We have 
\[
 A(\NN(t),\xi(t)):={D}(t){\Lambda(N(t),\xi(t))}.
\]
and
\[
\NN(t+1) = A(\NN(t),\xi(t)) \NN(t).
\]
As discussed above, one usually linearizes the system and gets that if
\[
\NN(T) =\prod_{k=0}^T A(0,\xi(k)) \NN(0).
\]
then there is $M\in \R$ (Theorem \ref{t:stoc_m}) such that 
\[
\PP\left(\lim_{T\to \infty}\frac{1}{T} \ln(N_1(T)+N_2(T)) = M\right) = 1.
\]

The following result is an immediate corollary of Theorem \ref{t:stoc_m}.

\begin{cor}
Suppose the dynamics is given by the system \eqref{e:stoc_disc_matrix} and Assumption \ref{a:BH_dd} holds. Then the Lyapunov exponent $M$ exists. If $M>0$ there is a positive vector $\NN_\infty$ such that for all nonzero initial population sizes $ \NN(0)$ one has
\[
 \NN (t) \to  \NN(\infty)
\]
in distribution. As a result, the total expected population size at stationarity is
\[
\E N_1(\infty) + \E N_2(\infty).
\]
\end{cor}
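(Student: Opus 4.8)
The plan is to recognize that \eqref{e:stoc_disc_matrix} is simply the $n=2$ instance of the Beverton--Holt $n$-patch system \eqref{e:BV_disper_stoc} and then to quote Theorem \ref{t:stoc_m}. First I would put $n=2$, take the Beverton--Holt functional responses $f_i(N_i,\xi)=r_i/(1+\kappa_i N_i)$, and take the dispersal matrix to be the symmetric matrix
\[
D(t)=\begin{bmatrix} 1-\delta(t) & \delta(t)\\ \delta(t) & 1-\delta(t)\end{bmatrix},
\]
i.e.\ $d_{11}(t)=d_{22}(t)=1-\delta(t)$ and $d_{12}(t)=d_{21}(t)=\delta(t)$. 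With these choices, multiplying out $D(t)\,\diag(f_1(N_1,\xi),f_2(N_2,\xi))\,\NN(t)$ reproduces the right-hand side of \eqref{e:stoc_disc_matrix} exactly, so the two processes coincide.

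Next I would verify that Assumption \ref{a:BH_dd}, which is part of the hypothesis, specializes correctly. The i.i.d.\ property and the moment bound $\E(\ln^+(\max_j r_j/\kappa_j+\max_j r_j+2\max_j(r_j\kappa_j)))<\infty$ are precisely parts (1) and (2) of Assumption \ref{a:BH_dd}. For part (3), the row sums $d_{i1}(t)+d_{i2}(t)=(1-\delta(t))+\delta(t)$ equal $1$ automatically, and the assumption $0<\delta(t)<1$ a.s.\ gives $0<d_{ij}(t)<1$ a.s.; in particular all entries of $D(t)$ are strictly positive with probability one, so $D(t)$ is primitive with $k=1$. Hence Theorem \ref{t:stoc_m} applies verbatim.

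It then only remains to read off the conclusions of Theorem \ref{t:stoc_m}: the Lyapunov exponent $M$ exists; if $M<0$ then $\NN(t)\to 0$ a.s.\ from every $\NN(0)\in\R_+^2$; if $M>0$ there is a positive random vector $\NN(\infty)$ with $\NN(t)\to\NN(\infty)$ in distribution from every nonzero $\NN(0)$, so the total expected population size at stationarity is $\E N_1(\infty)+\E N_2(\infty)$ (and, if one wants it, the deterministic comparison system \eqref{e:dispersal_det_stoch} then has a unique globally attracting fixed point here too). There is essentially no obstacle in this corollary --- the only point requiring any care is checking that collapsing the $n^2$ free dispersal parameters down to the single symmetric parameter $\delta(t)$ does not violate any of the structural hypotheses behind Theorem \ref{t:stoc_m} (monotonicity of the matrix entries $A_{ij}$ in the state, non-negativity of $DG$, the drift inequality $V(Ax)\le \alpha V(x)+\beta$). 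This is immediate, since those properties were already established in the proof of Theorem \ref{t:stoc_m} for an arbitrary admissible matrix family $D(t)$, and the symmetric two-patch family is just a sub-family; the symmetry $d_{12}=d_{21}$ itself is irrelevant to persistence and will matter only later, when one computes the stationary population size explicitly.
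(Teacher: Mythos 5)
Your proposal is correct and follows exactly the paper's route: the paper simply declares this statement ``an immediate corollary of Theorem \ref{t:stoc_m}'', which is precisely your argument of specializing to $n=2$ with the symmetric dispersal matrix and checking that Assumption \ref{a:BH_dd} carries over. Your additional verification that the symmetric sub-family satisfies the structural hypotheses (row sums equal to one, strictly positive entries hence primitivity with $k=1$) is the only point needing care, and you handle it correctly.
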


\begin{rmk}
By \cite{schreiber2010interactive} and \cite{metz1983advantages} one can see that if $\delta = 1/2$ in the Beverton-Holt setting one has
\[
M = \E \ln( 0.5 r_1+0.5   r_2).
\]
Using Jensen's inequality we further have
\begin{equation}\label{e:sink}
\ln\left(0.5 \E r_1 + 0.5 \E r_2 \right) > M>0.5 \E \ln r_1 +0.5\E \ln r_2.
\end{equation}
This shows that dispersal can facilitate persistence even when one of the two patches is a sink patch i.e. a patch where the population would go extinct without dispersal. For example, if $\E \ln r_1<0$ but $\E \ln r_1+ \E \ln r_2>0$ then patch $1$ is a sink patch yet the two patches coupled via dispersal exhibit persistence. Moreover, an even more interesting phenomenon can occur: it is actually possible to have $\E \ln r_1<0$ and $\E \ln r_2<0$ while $M = \E \ln( 0.5 r_1+0.5   r_2)>0$, showing that coupling two sink patches through dispersal can lead to persistence \citep{schreiber2010interactive, BS09}. 

Nevertheless, in general it is impossible to find analytical expressions for $M$ and one has to estimate $M$ (see \cite{BS09}). 
\end{rmk}

\subsection{Hassell functional response}
The Hassell model \citep{hassell1975density} captures important dynamical behaviors. The population undergoes exponential growth when it is small and density dependence to reduce the rate of growth when the the abundance is large. 
This model is especially useful \citep{hassell1975density,alstad1995managing}  because  it exhibits a wide range of density-dependent effects.

We chose the Hassell model because, in a previous paper that did not include dispersal \citep{HS25a}, we were able to show that environmental fluctuations can lead in this model to an increase in total population size. It is therefore natural to look at both the Beverton-Holt and Hassell functional responses in the setting with dispersal.
 
 Suppose
\begin{equation}\label{e:disper_hass}
    \begin{split}
\NN(t+1) &=F(\NN(t),\xi(t))= D(t)\Lambda(t) \NN(t)\\
&=\left(D(t)\diag\left(  \frac{\alpha_1(t)}{(1+K_1(t)N_1(t))^c},\dots,\frac{\alpha_n(t)}{(1+K_n(t)N_n(t))^c}\right)\right)\NN(t)
    \end{split}
\end{equation}
where $D(t):=(d_{ij}(t)), r_i(t)$ and $K_i(t)$ can be random. More specifically, the randomness is given by $$\xi(t)=(\alpha_i(t), K_i(t), d_{ij}(t))$$ which is assumed to form an iid sequence in $(0,\infty)^n \times (0,\infty)^n \times [0,1]^{n^2}\subset\R^\ell$ where $\ell=2n+n^2$.

The following assumptions can be explained very similarly to the ones we made in the Beverton-Holt model.
\begin{asp}\label{a:hassell2}
The following hold:
\begin{enumerate}
    \item $c\in (0,1)$.
    \item $\E\left(\ln^+\left(\max_j \frac{\alpha_j}{K_j^c}+ \max_j \alpha_j +  \max_j \alpha_jK_j^2\right)\right)<\infty$.
    \item With probability one: $\alpha_i(1)>0, K_i(1)> 0, 0<d_{ij}(1)<1$ and $\sum_{j=1}^nd_{ij}(1)=1$ for $i,j=1,\dots, n$. 
\end{enumerate}
\end{asp}
\begin{rmk}
The assumptions are similar to the ones from Assumption \ref{a:BH_dd}. Assumption (1) is needed in order to ber able to use Theorem \ref{t:dispersal_det} - see Remark \ref{r:func} and also .     
\end{rmk}

\begin{thm}\label{t:stoc_m_h}
Suppose the dynamics is given by \eqref{e:disper_hass} and Assumption \ref{a:hassell2} holds. Then the Lyapunov exponent $M$ exists. If $M<0$ then with probability one $\NN(t)\to 0$ as $t\to \infty$ for any $\NN(0)\in\R_+^n$.  If $M>0$, then there exists a positive random vector $\NN(\infty)$ and $\NN(t)\to \NN(\infty)$ in distribution as $t\to\infty$. The total expected population size at stationarity will then be given by
 \[
\sum_{i=1}^n\E N_i(\infty).
 \]
 Moreover, if we look at the deterministic system
 \begin{equation*}
\bar N_i(t+1) =  \sum_{j} \bar d_{ij} \bar f_j(\bar N_j(t))\bar N_j(t), ~i=1,\dots,n
\end{equation*}
where $\bar f_i(x) = \frac{\bar \alpha_i}{(1+\bar K_i x)^c}$ and $\bar \alpha_i = \E \alpha_i(1), \bar K_i = \E K_i(1), \bar d_{ij} = \E d_{ij}(1)$, then this system has a unique globally attracting fixed point $(\bar N_1,\dots, \bar N_n)$ and the total change in expected population size due to environmental fluctuations is well-defined and is given by
\[
H= \sum_{i=1}^n\E N_i(\infty) -  \sum_{i=1}^n \bar N_i.
\]
 
\end{thm}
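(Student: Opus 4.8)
The plan is to mirror the proof of Theorem \ref{t:stoc_m} (given in Appendix \ref{a:BH_d}): the stochastic conclusions will follow once Assumptions \ref{asp:lyapunov} and \ref{a:BS09} are verified for the dynamics \eqref{e:disper_hass}, after which Theorem \ref{t:BS09} supplies the Lyapunov exponent $M$, the extinction/persistence dichotomy, and the convergence $\NN(t)\to\NN(\infty)$ to a positive random vector when $M>0$; the comparison statement will follow once the averaged system is shown to satisfy the hypotheses of Theorem \ref{t:dispersal_det}. The only genuine difference from the Beverton--Holt case lies in the drift condition of Assumption \ref{a:BS09}(1): here $x\mapsto xf_i(x)=\alpha_i x(1+K_ix)^{-c}$ is \emph{unbounded} for $c<1$ (it grows like $x^{1-c}$), so one cannot bound $V(A(x,\xi)x)$ by a random constant as in Beverton--Holt and must instead produce a genuine contraction.

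For the drift I would take $V(x)=\1^\top x=\sum_i x_i$ and, writing $c_j(\xi):=\sum_i d_{ij}(\xi)\in(0,n)$, use that $V(A(x,\xi)x)=\sum_j c_j(\xi)\,\alpha_j(\xi)x_j(1+K_j(\xi)x_j)^{-c}$. The key elementary estimate is that, for every $\eps>0$, $\alpha x(1+Kx)^{-c}\le \eps x + c(1-c)^{(1-c)/c}\eps^{-(1-c)/c}(\alpha/K^c)^{1/c}$, obtained by using $(1+Kx)^c\ge(Kx)^c$ and maximizing $\tfrac{\alpha}{K^c}y^{1-c}-\eps y$ over $y\ge0$ (the maximizer and maximal value are explicit). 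Choosing $\eps=\tfrac1{2n}$ and summing then gives $V(A(x,\xi)x)\le \tfrac12 V(x)+\beta(\xi)$ with $\beta(\xi)\le C_n\sum_j(\alpha_j(\xi)/K_j(\xi)^c)^{1/c}$, so $\alpha(\xi)\equiv\tfrac12$ has $\E\ln\alpha=-\ln 2<0$, and since $\ln^+\!\big(C_n\sum_j(\alpha_j/K_j^c)^{1/c}\big)\le \tfrac1c\ln^+\!\big(\max_j\alpha_j/K_j^c\big)+C$, Assumption \ref{a:hassell2}(2) yields $\E\ln^+\beta<\infty$.

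The remaining items of Assumption \ref{a:BS09} are routine once one tracks where $c\in(0,1)$ enters. The map $(x,\xi)\mapsto A(x,\xi)=D(\xi)\diag(f_j(x_j,\xi))$ is Borel and $G(x,\xi)=D(\xi)\Lambda(x,\xi)x$ is $C^2$ on $\R_+^n$; on the unit ball one has $\|G\|,\|DG\|\le C\max_j\alpha_j$ (using $0\le f_j(x_j)+x_jf_j'(x_j)\le\alpha_j$) and $\|D^2G\|\le C\max_j\alpha_jK_j^2$ (using $|2f_j'(x_j)+x_jf_j''(x_j)|\le c(c+1)\alpha_jK_j^2$ for $x_j\le1$), so the required $\E\ln^+$ bound follows from Assumption \ref{a:hassell2}(2) --- here $\max_j\alpha_jK_j^2$ is exactly the term needed. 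For monotonicity, $\partial A_{ij}/\partial x_l=d_{ij}f_j'(x_j)\delta_{jl}\le0$ with strict inequality at $l=j$ since $d_{ij}>0$ and $f_j'<0$, and $(DG)_{il}=d_{il}\alpha_l(1+K_lx_l)^{-c-1}(1+(1-c)K_lx_l)\ge0$ precisely because $c<1$. Assumption \ref{asp:lyapunov} holds too: $D$ is a.s.\ primitive (all entries positive), $\Lambda(0,\xi)=\diag(\alpha_i(\xi))$ has positive diagonal, and $\E|\ln\|A(0,\xi(0))\||<\infty$ by Assumption \ref{a:hassell2}(2)--(3). Theorem \ref{t:BS09} then gives the stochastic half of the statement, and the drift makes $\sum_i\E N_i(\infty)$ a well-defined element of $[0,\infty]$. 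For the averaged system, $\bar f_i(x)=\bar\alpha_i(1+\bar K_ix)^{-c}$ with $\bar\alpha_i,\bar K_i>0$ is strictly positive, continuous, decreasing, with $\bar f_i(\infty)=0<1$ and $x\bar f_i(x)$ increasing (its derivative $\bar\alpha_i(1+\bar K_ix)^{-c-1}(1+(1-c)\bar K_ix)>0$, again using $c<1$), while $\bar D=(\E d_{ij}(1))$ has all entries in $(0,1)$ and is therefore primitive; Theorem \ref{t:dispersal_det} applies and produces the unique globally attracting fixed point $(\bar N_1,\dots,\bar N_n)$ (positive when $\lambda(\bar D\diag(\bar\alpha_i))>1$, the origin otherwise), so $H=\sum_i\E N_i(\infty)-\sum_i\bar N_i$ is well-defined.

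The main obstacle is the drift step: for the Hassell response $xf_i(x)$ is unbounded, so unlike Beverton--Holt one cannot read off tightness of the occupation measures for free and must instead extract a genuine contraction $V(Ax)\le\tfrac12 V(x)+\beta$ from the sublinear growth rate $x^{1-c}$ and then check that the resulting $\beta$ is controlled by the integrability hypothesis in Assumption \ref{a:hassell2}(2); everything else amounts to bookkeeping of the places where $c\in(0,1)$ is used.
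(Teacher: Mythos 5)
Your proposal is correct and follows the same overall route as the paper's proof in Appendix \ref{a:hassell}: verify Assumptions A1--A6 of \cite{BS09} for the Hassell map, invoke Theorem \ref{t:BS09} for the existence of $M$, the extinction/persistence dichotomy, and the convergence to $\NN(\infty)$, then apply Theorem \ref{t:dispersal_det} to the averaged system to get the globally attracting fixed point $(\bar N_1,\dots,\bar N_n)$. Your treatment of A3--A6 and of the deterministic comparison system coincides with the paper's computations. The one step where you genuinely diverge is the drift condition A2, and there your argument is stronger. The paper bounds $x f_j(x)\le(\alpha_j/K_j^c)\,x^{1-c}$ and splits on whether $x_j\ge 1$, arriving at $W(D\Lambda x)\le\gamma W(x)+\tau$ with the \emph{random} contraction factor $\gamma=\max_j\alpha_j/K_j^c$; to conclude $\E\ln\gamma<0$ it must additionally assume $\E\bigl(\max_j\alpha_j/K_j^c\bigr)<1$, a condition that appears in the appendix's restated hypotheses (Assumption \ref{a:H_d}) but not in the main-text Assumption \ref{a:hassell2} under which the theorem is actually stated. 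Your Young-type inequality $\alpha x(1+Kx)^{-c}\le\eps x+c(1-c)^{(1-c)/c}\eps^{-(1-c)/c}(\alpha/K^c)^{1/c}$ instead yields a deterministic factor $\gamma\equiv\tfrac12$ together with a $\beta(\xi)$ whose $\ln^+$ is controlled by $\tfrac1c\ln^+\bigl(\max_j\alpha_j/K_j^c\bigr)$, so the drift holds under Assumption \ref{a:hassell2} alone. This removes the hidden extra hypothesis and makes the proof match the theorem as stated, at the cost of only a slightly longer elementary computation.
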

The proof of this result appears in Appendix \ref{a:hassell}.

\subsection{Slow dispersal in the Hassell $n$ patch model}\label{s:SF_H}
We do similar things to those from Section \ref{s:SF}. Suppose we simplify things by assuming that at each time step a fraction $\eta\in[0,1]$ of the population disperses and dispersed individuals are equally likely to pick any of the $n$ patches. We then get

\begin{equation} \label{e:BV_disper_stoc_unif_H}
\NN(t+1) =\left([(1-\eta)I_n + \eta U_n]\diag\left(  \frac{\alpha_1}{(1+K_1(t)N_1(t))^c},\dots,\frac{\alpha_n}{(1+K_1(t)N_n(t))^c}\right)\right)\NN(t)
\end{equation}
In in the small dispersal limit $\eta=0$ one can find \citep{BS09} the Lyapunov exponent explicitly, namely, 
\begin{equation*}
    M(0)=\max_i\E \ln \alpha_i(0).
\end{equation*}
For brevity, we do not repeat all the details as they are similar to those from Section \ref{s:SF}.
If $M>0$ total population size at stationarity is given by
\[
\sum_i\E N_i(\infty). 
\]
Assume $K_i(t)$ is random and forms an iid sequence $(K_i(t))_{t\geq 1}$.  Let $\bar K_i = \E K_i(1)$ and assume that $\alpha_i>1$. 
The deterministic dynamics is
\begin{equation}\label{e:has}
    N_i(t)=\frac{\alpha_i N(t)}{(1+\bar K_i N(t))^c}.
\end{equation}
The equation has an asymptotically stable fixed point $\bar N_i=\frac{\alpha_i^{1/c}-1}{\bar K_i}>0$ if and only if $\alpha_i>1$ and $0<c\leq 1$. Under these assumptions, as $t\to \infty$ one has $N_i(t)\to \bar N_i$. 

If $r(\delta_0)=\ln \alpha_i >0$ using the results from \cite{HS25a} one can show that the process $(N_i(t))$ has a unique stationary distribution $\mu_i$ on $\R_{++}$. It can then be shown \citep{HS25a} using Birkhoff's ergodic theorem that 
\[
\E \ln\left[\frac{\alpha }{(1+K_i(1) N_i(\infty))^c} \right] = 0.
\]
Using Jensen's inequality
\[
\ln \left[\alpha^{\frac{1}{c}}\right] = \E \ln (1+ K_i(1) N_i(\infty))< \ln (1 + \E K_1 \E N_i(\infty)) =  \ln (1 + \bar K_i \E N_i(\infty)). 
\]
As a result, contrary to what we got in the Beverton-Holt setting when the carrying capacity was fluctuating, we get
\[
\bar N_i = \frac{(\alpha)^{\frac{1}{c}}-1}{\bar K_i} < \E N_i(\infty).
\]
The noise always \textit{increases} the total population size in patch $i$ so that
\[
\sum_i \bar N_i < \sum_i \E N_i(\infty,\eta)
\]
for all $\eta$ close enough to $0$, where we sum over all the $i$ which are source patches i.e. satisfy $m_i>1$. This shows that in this setting for \textit{small dispersal and any type of noise, the total expected population size at stationarity is strictly larger than the total population size without noise.}

\section{Markovian environmental forcing}
We present here some interesting results in a slightly different context, which fits in the general framework from \cite{BS09, RS14, BS19, HNC21}. Suppose $Y_t$ is an irreducible discrete-time Markov chain on the finite state space $I:=\{1,\dots,n_0\}$ and the transition probabilities are given by $\PP (Y_{t+1}=v~|~Y_t=u)=p_{uv}$. Note that $Y_t$ will have a uniqe stationary distribution $(\nu_1,\dots,\nu_{n_0})$ which can be computed explicitly in function of the transition rates $p_{uv}$.

Instead of \eqref{e:dispersal_stoch} assume that the stochastic dynamics is given by

\begin{equation}\label{e:dispersal_PDMP}
N_i(t+1) =  \sum_{j=1}^n d_{ij}(Y_t) f_j(N_j(t),Y_t)N_j(t), ~i=1,\dots,n
\end{equation}

Assume that in each fixed environment $u$ the deterministic dynamics 
\[
N^u_i(t+1) =  \sum_{j=1}^n d_{ij}(u) f_j(N^u_j(t),u)N^u_j(t), ~i=1,\dots,n
\]
satisfies the assumptions of Theorem \ref{t:dispersal_det} and that $\NN^u(t)\to \bar \NN^u\geq 0$ as $t\to\infty$. 

\textbf{Slow switching.} Assume that $p_{uv}$ is very small for all $u,v$, in the sense that $p_{uv} \approx 0$. Then the process spends a long time in state $u$ before it jumps to another state $v$, where it also spends a long time before it jumps to another state and so on. It therefore makes sense to expect that the process will spend a fraction of time $\nu_u$ near $\bar \NN^u$ for all $u\in I$ so that 
\begin{equation}
    \E N_i(\infty) \approx \sum_{u =1}^{n_0} \nu_u \bar N^u_i
\end{equation}
which implies
\begin{equation}
    \sum_{i=1}^n \E N_i(\infty) \approx \sum_{i=1}^n\sum_{u =1}^{n_0} \nu_u \bar N^u_i = \sum_{u =1}^{n_0} \nu_u \sum_{i=1}^n \bar N^u_i.
\end{equation}
Let $\bar N^u_\sigma= \sum_{i=1}^n \bar N^u_i$ be the total population in environment $u$. Then we get
\[
\sum_{i=1}^n \E N_i(\infty) = \sum_{u =1}^{n_0} \nu_u \bar N^u_\sigma\leq \sum_{u =1}^{n_0} \nu_u \max_u \bar N^u_\sigma = \max_u \bar N^u_\sigma   \sum_{u =1}^{n_0} \nu_u = \max_u \bar N^u_\sigma. 
\]
The expectation of the total population will be bounded above by the largest deterministic total population. Similar computations yield the lower bound 
\[
 \min_u \bar N^u_\sigma\leq \sum_{i=1}^n \E N_i(\infty).
\]

If there is only one patch and the functional response is of Beverton-Holt type one can see that $N^u=\frac{r(u)-1}{\kappa}$ so that
\begin{equation}\label{e:N_sp_slow}
\E N_s(\infty)= \nu_1 \frac{r(1)-1}{\kappa} + \nu_2 \frac{r(2)-1}{\kappa}.
\end{equation}

\textbf{Fast switching.} Suppose we want to study the other extreme, when $p_{ij}\approx 1$ so that at every time step there is switching. Assume for simplicity here that $I=\{1,2\}$ and assume at first there is only one patch. The dynamics can be approximated by a periodic difference equation where at each time step the vector field switches from $F$ to $G$, so for example
\begin{eqnarray*}
N_f(t+1)&=&G(N_f(t))\\
N_f(t+2)&=&F(G(N_f(t))\\
N_f(t+3)&=&G(F(G(N_f(t)))    
\end{eqnarray*}
Set $H_1:= G\circ F$ and $H_2:=F\circ G$. Assume that
\[
Y(t+1) = H_1(Y(t))
\]
has a globally stable equilibrium $Y_\infty$ and
\[
Z(t+1) = H_2(Z(t))
\]
has a globally stable equilibrium $Z_\infty$. As expected, one will have the relationships
\[
Z_\infty = F(Y_\infty), Y_\infty = G(Z_\infty).
\]
Then it is reasonable to approximate the distribution of $X(\infty)$ with the distribution:
\[
\frac{1}{2}\delta_{Y_\infty} + \frac{1}{2}\delta_{Z_\infty}.
\]
As a result
\[
\E N_f(\infty) = \frac{1}{2} Y_\infty + \frac{1}{2}Z_\infty.
\]
For a simple example assume that we switch between two Beverton-Holt dynamics i.e. the stochastic model is
\[
N_f(t+1) = \frac{r(t)N_f(t)}{1+\kappa N_f(t)} N_f(t)
\]
Then the above results will imply that
\begin{equation}\label{e:N_sp_fast}
\E N_f(\infty) = \frac{1}{2} Y_\infty + \frac{1}{2}Z_\infty = \frac{1}{2} \frac{r(1)r(2)-1}{\kappa(1+r(2))} + \frac{1}{2}\frac{r(1)r(2)-1}{\kappa(1+r(1))}. 
\end{equation}

Note that at least in the single patch case we can see that slow switching is always better than fast switching as by \eqref{e:N_sp_fast} for $\nu_1=\nu_2=\frac{1}{2}$ and by \eqref{e:N_sp_slow} we have
\begin{eqnarray*}
\E N_s(\infty) - \E N_f(\infty) &=&  \frac{1}{2}\frac{r(1)-1}{\kappa} + \frac{1}{2} \frac{r(2)-1}{\kappa} - \left(\frac{1}{2} \frac{r(1)r(2)-1}{\kappa(1+r(2))} + \frac{1}{2}\frac{r(1)r(2)-1}{\kappa(1+r(1))}\right)\\
&=& \frac{1}{2}\frac{(r(1)-r(2))^2}{\kappa (1+r(1))(1+r(2))}\\
&>&0.
\end{eqnarray*}

We can lift up this single-patch result to the multipatch setting as follows. In the small dispersal limit $\eta\approx 0$ we get that for slow switching 
\[
\sum_{i=1}^n \E N^s_i(\infty,\eta)\approx  \sum_{i=1}^n \sum_u \nu_u \frac{r_i(u)-1}{\kappa_i}
\]
while in the slow switching case if $I=\{1,2\}$ we get 
\[
\sum_{i=1}^n \E N^f_i(\infty,\eta)\approx\sum_{i=1}^n\frac{1}{2} \frac{r_i(1)r_i(2)-1}{\kappa_i(1+r_i(2))} + \frac{1}{2}\frac{r_i(1)r_i(2)-1}{\kappa_i(1+r_i(1))}.
\]
As a result, for small dispersal and two environmental states we have
\begin{eqnarray*}
\sum_{i=1}^n (\E N_i^s(\infty) - \E N_i^f(\infty)) &=& \sum_{i=1}^n\frac{1}{2}\frac{(r_i(1)-r_i(2))^2}{\kappa_i (1+r_i(1))(1+r_i(2))}\\
&>&0,
\end{eqnarray*}
which shows that slow switching environments are always better than fast switching environments and lead to a higher total population size. 

\textbf{Biological interpretation:} The slow/fast switching of the Markov chain can be related to the temporal correlations between the various environmental states. If the switching is fast $p_{uv}\approx 1$ then the environments are negatively correlated (correlation is $-1$) while if the switching is slow $p_{uv}\approx 0$ the environments are positively correlated (correlation is $+1$). Positive auto correlations give the population a better chance to grow as it has time to adapt to the environment and fully use its resources. Negative autocorrelations make things unpredictable and the population size decreases.

\section{Small noise approximations}\label{s:small_noise}

It is usually hard or impossible to get information about the invariant measure $\mu$ which describes the persistence of the multi-patch system. The general theory can tell us that $\mu$ exists and is unique but we don't have any formulas describing this measure, and even computing the population sizes at stationarity is usually not possible. In order to circumvent this problem, one can use small perturbations of a deterministic system which has `nice' fixed points. 
Assume we have the deterministic dynamics given by

\begin{equation}\label{e:SDE_det}
\bar \NN(t+1) = F(\bar \NN(t), \bar e).
\end{equation}

We perturb the dynamics by letting the $(e(t))_{t\geq 0}$ be iid random variables taking values in a convex compact subset of $\R^\ell$. The stochastic dynamics is
\begin{equation}\label{e:gen2}
N_i(t+1) = F_i(\NN(t), e(t)).
\end{equation}
We note that the systems \eqref{e:dispersal_det2} and \eqref{e:dispersal_stoch2} can be put in the forms \eqref{e:SDE_det} and \eqref{e:gen2}

The next remark provides some intuition and heuristics. 
\begin{rmk} Using  the theory of stochastic persistence one can show, under certain assumptions, that the system population given by \eqref{e:SDE_det} persists and converges to a unique invariant probability measure (stationary distribution) $\mu$. There are few tools that can give us the analytical properties of the stationary distribution. This makes it hard to quantify how noise changes the population size at stationarity, since this is a functional of the stationary distribution, i.e. the $i$th patch has the expected population size $\E N_i(\infty)=\int x_i \mu(dx)$. One can sidestep this issue by modifying the approximation methods developed by \cite{stenflo1998ergodic, cuello2019persistence}. Cuello's results \citep{cuello2019persistence} lead to a Taylor-type approximation which says loosely that if the noise is small, that is $e=\bar e + \rho \xi$ for $0<\rho \ll 1$, and the deterministic dynamics $\bar\NN(t+1) = F(\bar\NN(t),\bar e)$ has a stable equilibrium $\bar \NN_0$ then for initial conditions $\NN(0)$ close to that equilibrium the dynamics converges as $t\to\infty$ to a stationary distribution $\mu$ that lives `around' $\bar \NN_0$. If $\bar \NN_0$ is globally stable then one can show, under natural assumptions, that the convergence holds for all nonzero initial conditions. \cite{cuello2019persistence} then uses these results to prove that $\E_\mu N_i$ has a Taylor-type expansion that allows one to compute all the terms explicitly as a function of $F$ and its various derivatives, including derivatives with respect to noise terms, evaluated at $(\bar \NN_0, \bar e).$  
\end{rmk}

Using the results from \cite{stenflo1998ergodic} and especially from \cite{cuello2019persistence} (see Appendix \ref{a:small_noise} for details and assumptions) in our setting we get the following result.

Define $vec(\cdot)$ to be the operation that concatenates the columns of a matrix into a vector and let $\otimes$ denote the Kronecker product of matrices.

\begin{thm}\label{t:small_noise}
Suppose $e=\bar e + \rho \xi(t)$ lives in a compact convex subset of $\R^\ell$ and the assumptions of Theorem \ref{t:dispersal_det} hold for $f_i(\cdot,\bar e), \bar D:=D(\bar e)$. Furthermore, suppose that $\lambda(D(\bar e)\Lambda(0,\bar e))>0$.  Then the deterministic system \eqref{e:dispersal_det} persists and has a unique globally attracting equilibrium $\bar N_0$. 

Under natural assumptions on the noise  (see Appendix \ref{a:small_noise}) the stochastic system \eqref{e:dispersal_stoch} has a unique invariant measure $\mu$ such that $\bar N_0\in \supp(\mu)$ and for any $\NN(0)$ one has $\NN(t)\to \NN$ in distribution where $N$ is distributed according to $\mu$. 

Let $D_N F$ be the Jacobian at $\bar N_0$ with respect to the population i.e. the $i-j$th entry of $D_N F$ is $\frac{\partial F_i}{\partial N_j}(\bar N_0)$. Similarly, $D_{\xi} F$ will be the matrix with entries $\diag(\frac{\partial F_i}{\partial \xi_i}(\bar N_0))$.

Assume for simplicity that in the deterministic setting we set the dispersal to zero  so that $\bar N_0$ is the dispersal-free equilibrium which is assumed to be globally asymptotically stable.

We have 
\begin{equation}\label{e:approx}
\begin{split}
\E(N)=&\bar N_0 + \rho BD_{\xi}F(\bar N_0,0)(\E(\xi))+ \frac{1}{2}B\left(\sum_{i,j=1}^n \frac{\partial^2 F}{\partial N_i\partial N_j}(\bar N_0,0)[\text{Cov}(\NN)]_{ij}\right)\\
+&\frac{\rho^2}{2}B\left(\frac{\partial^2 F}{\partial \xi_i\partial \xi_j}[\text{Cov}(\xi)]_{ij}\right)+ \frac{\rho^2}{2}B\left(\sum_{i,j=1}^n\frac{\partial^2 F}{\partial \xi_i\partial \xi_j}(\bar N_0,0)\E(\xi)_i\E(\xi)_j\right)\\
+&\frac{\rho^2}{2}B\left(\sum_{i,j=1}^n \frac{\partial^2 F}{\partial N_i\partial N_j}(\bar N_0,0)(BD_{\xi}F(\E\xi))_i(B D_{\xi}F(\E\xi))_j \right)\\
+&\rho^2B\sum_{i=1}^n\sum_{j=1}^n\frac{\partial^2 F}{\partial N_i\partial \xi_j}(\bar N_0,0)(\E(\xi))_j(BD_{\xi}F(\E\xi))_i+ O(\rho^3)    
\end{split}
\end{equation}
where $B$ denotes the matrix $B:=\left(I-D_NF(\bar N_0)\right)^{-1}$ and $[\text{Cov}(\NN)]_{ij}$ the entries of the matrix defined by
\[
vec(\text{Cov}(\NN)) = (I-D_NF \otimes D_NF)^{-1} (D_\xi F \otimes D_\xi F) vec(Cov(\xi))
\]
The change in total expected population size because of noise is therefore given by
\begin{equation}\label{2ordApp}
\begin{split}
\E \NN - \bar N_0 =&\rho BD_{\xi}F(\bar N_0,0)(\E(\xi))+ \frac{1}{2}B\left(\sum_{i,j=1}^n \frac{\partial^2 F}{\partial N_i\partial N_j}(\bar N_0,0)[\text{Cov}(\NN)]_{ij}\right)\\
+&\frac{\rho^2}{2}B\left(\frac{\partial^2 F}{\partial \xi_i\partial \xi_j}[\text{Cov}(\xi)]_{ij}\right)+ \frac{\rho^2}{2}B\left(\sum_{i,j=1}^n\frac{\partial^2 F}{\partial \xi_i\partial \xi_j}(\bar N_0,0)\E(\xi)_i\E(\xi)_j\right)\\
+&\frac{\rho^2}{2}B\left(\sum_{i,j=1}^n \frac{\partial^2 F}{\partial N_i\partial N_j}(\bar N_0,0)(BD_{\xi}F(\E\xi))_i(B D_{\xi}F(\E\xi))_j \right)\\
+&\rho^2B\sum_{i=1}^n\sum_{j=1}^n\frac{\partial^2 F}{\partial N_i\partial \xi_j}(\bar N_0,0)(\E(\xi))_j(BD_{\xi}F(\E\xi))_i+ O(\rho^3)    
\end{split} 
\end{equation}
\end{thm}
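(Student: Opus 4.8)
The strategy is to reduce Theorem \ref{t:small_noise} to the general small-noise expansion machinery of \cite{cuello2019persistence} (with the non-zero mean and non-Kolmogorov adaptations stated in Appendix \ref{a:small_noise}) and then verify that our dispersal model satisfies its hypotheses. First I would establish the deterministic part: since $f_i(\cdot,\bar e)$ and $\bar D = D(\bar e)$ satisfy the assumptions of Theorem \ref{t:dispersal_det} and $\lambda(\bar D\Lambda(0,\bar e))>0$, that theorem gives a unique globally attracting positive equilibrium $\bar N_0$ for \eqref{e:dispersal_det}. The linearization of $F(\cdot,\bar e)$ at $\bar N_0$ is $D_NF(\bar N_0)$; I would argue, using monotonicity (assumption 3 of Theorem \ref{t:dispersal_det}) together with primitivity of $\bar D$, that its spectral radius is strictly less than $1$, so that $B := (I - D_NF(\bar N_0))^{-1}$ exists. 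This is exactly the ``stable hyperbolic fixed point'' hypothesis that Cuello's expansion requires.

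Next I would set up the expansion. Write $e(t) = \bar e + \rho\,\xi(t)$ with $\xi(t)$ i.i.d.\ in a compact convex set, and expand $F(\NN,e)$ to second order around $(\bar N_0,\bar e)$. One seeks a stationary solution of $\NN(t+1)=F(\NN(t),\bar e+\rho\xi(t))$ of the form $\NN(\infty) = \bar N_0 + \rho\, \NN_1 + \rho^2\, \NN_2 + O(\rho^3)$, substitute into the recursion, and match powers of $\rho$. At order $\rho$ one gets a linear stochastic recursion for $\NN_1$ whose stationary distribution has mean $B D_\xi F(\bar N_0,0)\,\E(\xi)$ and whose covariance solves the discrete Lyapunov (Sylvester) equation $\Cov(\NN_1) = D_NF\,\Cov(\NN_1)\,D_NF^\top + D_\xi F\,\Cov(\xi)\,D_\xi F^\top$; vectorizing gives the stated formula $vec(\Cov(\NN)) = (I - D_NF\otimes D_NF)^{-1}(D_\xi F\otimes D_\xi F)\,vec(\Cov(\xi))$, which is well-defined because $\lambda(D_NF\otimes D_NF) = \lambda(D_NF)^2 < 1$. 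At order $\rho^2$, taking expectations kills the terms linear in $\xi$ and the leftover inhomogeneity is the sum of the second-derivative terms in \eqref{e:approx} paired against $\E(\xi)\E(\xi)^\top$, $\Cov(\xi)$, and $\Cov(\NN)$; solving the resulting fixed-point equation $\E\NN_2 = D_NF\,\E\NN_2 + (\text{inhomogeneity})$ contributes the remaining $B(\cdots)$ terms. Collecting everything gives \eqref{e:approx}, and subtracting $\bar N_0$ gives \eqref{2ordApp}.

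The rigor — as opposed to the formal matching above — is supplied by Cuello's theorems: one must check (i) that the stochastic system \eqref{e:dispersal_stoch} has a unique invariant measure $\mu$ with $\bar N_0 \in \supp(\mu)$ and global convergence in distribution, and (ii) that $\E_\mu N_i$ genuinely admits the claimed Taylor expansion with an honest $O(\rho^3)$ remainder. For (i) I would invoke Theorem \ref{t:BS09} / Theorem \ref{t:stoc_m} (or the robustness result of \cite{HNC21}) to get persistence and a unique stationary distribution for $\rho$ small, together with the accessibility of $\bar N_0$ coming from continuity of $F$ and the compact support of the noise. For (ii) I would cite the appendix version of Cuello's expansion theorem, checking its regularity hypotheses: $F$ is $C^2$ (indeed smooth, since the Beverton–Holt/Hassell responses are rational in $\NN$ away from singularities, and we work in a neighborhood of the positive equilibrium), the noise set is compact convex, and the deterministic equilibrium is globally asymptotically stable and hyperbolic.

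The main obstacle I anticipate is the ``non-Kolmogorov'' and non-zero-mean extension: \cite{cuello2019persistence} is written for Kolmogorov-type systems $N_i(t+1) = N_i(t)g_i(\NN,\xi)$ with zero-mean noise, whereas our dispersal map $F(\NN,\xi) = D(\xi)\Lambda(\NN,\xi)\NN$ mixes the coordinates (a patch's next value depends on \emph{other} patches' current values, so it is not of the form $N_i \times (\text{something})$) and we allow $\E(\xi)\neq 0$. Handling $\E(\xi)\neq 0$ is what produces the genuinely new first-order term $\rho\, B D_\xi F(\bar N_0,0)\E(\xi)$ and the extra second-order cross terms in \eqref{e:approx}; one must redo the power-of-$\rho$ matching carrying the mean along rather than assuming it vanishes. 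The non-Kolmogorov form is handled by simply not using the multiplicative structure — the expansion only ever needs $F \in C^2$ and the spectral bound $\lambda(D_NF(\bar N_0))<1$, both of which hold here — but one has to re-verify that Cuello's proof of existence/uniqueness of the expansion, and of the remainder estimate, goes through verbatim once stated at the level of a general $C^2$ map with a globally stable hyperbolic fixed point. I would isolate this as the lemma proved in Appendix \ref{a:small_noise}, and the proof of Theorem \ref{t:small_noise} in the body would then be the short verification that \eqref{e:dispersal_det}–\eqref{e:dispersal_stoch} meet that lemma's hypotheses, plus bookkeeping of the derivatives of $F$ to read off the explicit coefficients.
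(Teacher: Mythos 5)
Your proposal is correct and follows essentially the same route as the paper: the paper's Appendix proof also invokes Cuello's Theorem 3.3.1 and Propositions 3.6.2/3.6.4 for existence, support, and convergence of $\mu$, explicitly notes that Cuello's expansion theorem cannot be used verbatim because it assumes $\E\xi=0$, and then redoes the power-of-$\rho$ matching via the first- and second-order approximating processes $Y_1(t),Y_2(t)$ (your $\NN_1,\NN_2$ ansatz in process form), obtaining the mean from the fixed-point equation with $B=(I-D_NF)^{-1}$ and the covariance from the discrete Lyapunov equation vectorized with Kronecker products. The only cosmetic difference is that you phrase the matching as a stationary-solution ansatz while the paper works with the auxiliary recursions directly and passes to limits in distribution.
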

\begin{proof}
The result is proved in the Appendix -- see Theorem \ref{t:small_noise_A}. 
\end{proof}
\begin{rmk}
Note that we made the assumption for the approximation formula \eqref{e:approx} that $\bar N_0$ comes from the system without any dispersal. This is for two reasons. First, in most applications $\bar N_0$ can only be computed explicitly when there is no dispersal. Second, the equation \eqref{e:approx} is quite involved already so we wanted to keep things as simple as possible. Formulas can be found of course even in the case where the equilibrium $\bar N_0$ comes from a system where there is non-zero dispersal.
\end{rmk}

\textbf{Biological interpretation:} If one perturbs a deterministic ecological system that has a globally attracting fixed point $\hat \NN^{\bar D}$ by small noise which fluctuates like $\rho \xi(t)$ around the mean $\bar e$ for a small $\rho>0$, then the stochastic system converges to an invariant probability measure supported around $\hat \NN^{\bar D}$. Furthermore, the expectation of the $i$th population size at stationarity looks like $\hat N_i$ plus a term of order $\rho^2$. This $\rho^2$ term  depends on the first and second derivatives with respect to the population sizes and random parameters of the vector field evaluated at the deterministic fixed point $(\hat \NN^{\bar D}, \bar e)$ as well as on the covariance of the noise terms. This shows that one can do a Taylor-type expansion around the deterministic fixed point and that, as expected, small environmental fluctuations change the dynamics in a `smooth' fashion.

\begin{rmk}
Small noise approximations have been used in related settings before. Some of these approximations are similar to the $\delta$ method from statistics \citep{oehlert1992note}. Small noise approximations were employed for various epidemic and population dynamics models \citep{bartlett1956deterministic, bartlett1957theoretical} as well as in fisheries models \citep{horwood1983general, getz1984production}.
\end{rmk}

\subsection{Small noise approximations for a Beverton-Holt $2$-patch model}
We will focus on the $n=2$ patch case and do the approximations explicitly. 
Let superscripts, like $N^{\delta, \rho}$, denote the dependence on the dispersal matrix and the noise intensity $\rho$. According to Proposition \ref{p:grumbach} we look at the case when $r_1,r_2>1$, and $r_1\neq r_2$. Then, in the deterministic setting from \eqref{e:det_disc} we get by \cite{grumbach2023effect} that
$$H(\delta):=(N_1^{\delta, 0}(\infty) +N_2^{\delta,0}(\infty))-(N_1^{0,0}(\infty)+N_2^{0,0}(\infty))=0$$
has two possible solutions. One solution is for $\delta=0$ and one gets
\[
(N_1^{\delta, 0}(\infty),N_2^{\delta, 0}(\infty)) = (N_1^{0, 0}(\infty),N_2^{0, 0}(\infty)) = (K_1, K_2).
\]
The other solution is for
$$\delta=\Tilde{\delta}= \frac{K_1K_2(r_1-1)(r_2-1)(r_1-r_2)}{(K_1(r_2-1)+K_2(r_1-1))(K_1r_1(r_2-1)-K_2r_2(r_1-1)}$$
and this yields
$$(N_1^{\delta, 0}(\infty),N_2^{\delta, 0}(\infty)) = (N_1^{0, 0}(\infty),N_2^{0, 0}(\infty))=\left(\frac{K_1(K_1+K_2)(r_2-1)}{K_1(r_2-1)+K_2(r_1-1)}, \frac{K_2(K_1+K_2)(r_1-1)}{K_1(r_2-1)+K_2(r_1-1)}\right).$$

We will next use Theorem \ref{t:small_noise} to compute the expected total population when there are small environmental fluctuations.

\textbf{Case 1 -- intrinsic growth rates are random}:
Suppose $\delta=0$, and $r_i(t)=\bar r_i+\rho\xi_i(t)$ where  $\xi(t)=(\xi_1(t),\xi_2(t))$ are i.i.d. random vectors, with $\E(\xi_i(t))=0$ and $\E(\xi^2_i(t))=1$. One can see that
  
$$A=\begin{bmatrix}
  \frac{K_1^2\rho^2}{\bar r_1^2 - 1} & (\frac{K_1K_2}{\bar r_1\bar r_2-1})\E(\xi_1\xi_2)
 \\
 (\frac{K_1K_2}{\bar r_1\bar r_2-1})\E(\xi_1\xi_2)    &  \frac{K_2^2\rho^2}{\bar r_2^2 - 1}
 \end{bmatrix}
$$

and
$$B= \begin{bmatrix}
  \frac{\bar r_1}{\bar r_1 - 1} & 0
 \\
  0 &  \frac{\bar r_2}{\bar r_2 - 1}
 \end{bmatrix}. 
$$
We have 
$$\E(N_i^{0,\rho})= K_i - \frac{2K_i\rho^2}{\bar r_i(\bar r_i^2-1)}+ O(\rho^3)$$
and hence the expectation of the total population size is
\[
\E(N_1^{0,\rho})+\E(N_2^{0,\rho})=K_1+K_2-\frac{2K_1\rho^2}{\bar r_1(\bar r_1^2-1)}-\frac{2K_2\rho^2}{\bar r_2(\bar r_2^2-1)}+O(\rho^3).
\]

\textbf{Biological Interpretation}: \textit{Since the patches are isolated, $\delta=0$, we can see that, we are simply looking at the 1D Beverton-Holt model, with linear noise in $r$. The expected population is less than the deterministic total population size at equilibrium. }

\textbf{Case 2 -- dispersal rate and carrying capacities are random }:

Let $\delta(\xi)=\rho\xi_1$ and $K_1=\overline{K}_1+\rho\xi_2$, and $K_2=\overline{K}_2+\rho\xi_2$ where $(\xi(t))_{t=1}^{\infty}$ are iid random vectors, and there exists $C>0$ such that $\|\xi(t)\|\leq C$ . It is natural to assume $\E \xi_2=0$ and in order to have some dispersal we need $\E \xi_1>0$. If we set $\alpha=\frac{\bar r_2(\bar r_1-1)}{\bar r_1(\bar r_2-1)}$ we get the following expression by Theorem \ref{t:small_noise} 
\begin{equation*}\label{pop_exp_2}
    \begin{split}
\E(N_1(\infty))+\E(N_2(\infty))&=\overline{K}_1+\overline{K}_2+\rho(\overline{K}_2-\overline{K}_1)\E(\xi_1)\left(\frac{\bar r_1}{\bar r_1-1}-\frac{\bar r_2}{\bar r_2-1}\right) \\
&-\frac{2}{\bar r_1\overline{K}_1}\rho^2\left( \frac{\bar r_1-1}{\bar r_1+1}+\frac{\bar r_1^2(\bar K_1-\bar K_2)^2}{(\bar r_1^2-1)}+1\right)\\
&-\frac{2}{\bar r_2\overline{K}_2}\rho^2\left(\frac{\bar r_2-1}{\bar r_2+1}+\frac{\bar r_2^2(\bar K_1-\bar K_2)^2}{(\bar r_2^2-1)}+1\right)\\
&+2\rho^2\left(\alpha+\frac{1}{\alpha}-2\right)\left(\E(\xi_1\xi_2)\right)\\
&+4(\bar K_1-\bar K_2)\rho^2\left(\frac{1}{\bar K_1(\bar r_1+1)}-\frac{1}{\bar K_2(\bar r_2+1)}\right)\E(\xi_1\xi_2)\\
&+\rho^2\left(\frac{1}{r_1-1}-\frac{1}{r_2-1}\right)\left(\frac{r_1}{r_1-1}-\frac{r_2}{r_2-1}\right)(K_1-K_2)(\E(\xi_1))^2\\
&-\rho^2\left(\frac{r_1-1}{K_1}\right)\left((K_2-K_1)\left(\frac{r_1}{r_1-1}\right)\E(\xi_1)\right)^2\\
&-\rho^2\left(\frac{r_2-1}{K_2}\right)\left((K_1-K_2)\left(\frac{r_2}{r_2-1}\right)\E(\xi_1)\right)^2\\
&+O(\rho^3).
\end{split}
\end{equation*}

\textbf{Biological Interpretation}: \textit{From the above expression we can observe, since $x\mapsto \frac{x}{x-1}$ is decreasing, that if we have $K_1>K_2$ and $r_2>r_1$ or $K_2>K_1$ and $r_1>r_2$ then the expected total population size is greater than the deterministic total population size at equilibrium for small enough $\rho$. This shows that adding some `random' dispersal can increase the total population size. The second order terms are significantly more complicated and depend not only on the coefficients of the model but also on the covariance $\E \xi_1\xi_2$.}

\textbf{Case 3 -- dispersal rate and carrying capacities are random }: Let
$\delta(\xi)=\rho\xi_1$,  $K_1=\overline{K}_1+\rho\xi_2$, and $K_2=\overline{K}_2+\rho\xi_3$ where $(\xi(t))_{t=1}^{\infty}$ are iid random vectors with $\E(\xi_1)>0$, $\E(\xi_2)=\E(\xi_3)=0$ and $\E(\xi_i^2)=1$. We set once again $\alpha=\frac{r_2(r_1-1)}{r_1(r_2-1)}$. Using Theorem \ref{t:small_noise} we get the small noise approximation
\begin{equation*}\label{pop_exp_3}
\begin{split}
\E(N_1(\infty))+\E(N_2(\infty))&=\overline{K}_1+\overline{K}_2+\rho(K_2-K_1)\left(\frac{r_1}{r_1-1}-\frac{r_2}{r_2-1}\right)\E(\xi_1)\\
&-\frac{2}{K_1r_1}\rho^2\left(\frac{r_1-1}{r_1+1}+\frac{r_1^2(\overline{K_1}-\overline{K_2})^2}{(r_1^2-1)}+ 1\right)\\
&-\frac{2}{r_2K_2}\rho^2\left(\frac{r_2-1}{r_2+1}+\frac{r_2^2(\overline{K}_1-\overline{K}_2)^2}{(r_2^2-1)}+ 1 \right)+2\rho^2\left(\frac{2(\overline{K_1}-\overline{K_2})}{(r_1+1)\overline{K_1}}+\alpha-1\right)\E(\xi_1\xi_2)\\
&+2\rho^2\left(\frac{1}{\alpha}-\frac{2(\overline{K}_1-\overline{K}_2)}{(r_2+1)K_2}-1\right)\E(\xi_1\xi_3)\\
&-\left(\frac{r_1^2}{K_1(r_1-1)^2}+\frac{r_2^2}{K_2(r_2-1)^2}\right)(K_1-K_2)^2(\E(\xi_1)^2\\
&+\rho^2(K_1-K_2)\left( \frac{r_1}{(r_1-1)^2}+\frac{r_2}{(r_2-1)(r_1-1)}\right)(\E(\xi_1))^2\\
&+\rho^2(K_1-K_2)\left( \frac{r_2}{(r_2-1)^2}+\frac{r_1}{(r_2-1)(r_1-1)}\right)(\E(\xi_1))^2\\
&+O(\rho^3).
\end{split}
\end{equation*}

\textbf{Biological Interpretation:} \textit{Since the first order term in \eqref{pop_exp_3} is the same as the one in \eqref{pop_exp_2}, hence if noise is small the fluctuations in the carrying capacities do not affect the expected population size.}

\subsubsection{Heuristic proof for small noise and small dispersal}
We give an intuitive explanation of the proof. We can write the model from \cite{grumbach2023effect} as

\begin{align}
    N(t+1)=\begin{bmatrix}
     1-\delta  &  \delta     \\  
     \delta       &       1-\delta             
    \end{bmatrix} \begin{bmatrix}
    \frac{r_1(\xi_1)}{1+\kappa_1 N_1(t)}        &  0    \\  
     0                &            \frac{r_2(\xi_2)}{1+\kappa_2N_2(t)}       
    \end{bmatrix}\begin{bmatrix}
    N_1(t)           \\  
    N_2(t)         \\                      
    \end{bmatrix}
\end{align}

\begin{align}
    N(t+1)=\begin{bmatrix}
     \frac{(1-\delta)r_1(\xi_1)N_1(t)}{1+\kappa_1N_1(t)} + \frac{\delta r_2(\xi_2)N_2(t)}{1+\kappa_2 N_2(t)}                       \\  
      \frac{\delta r_1(\xi_1)N_1(t)}{1+\kappa_1 N_1(t)}  +  \frac{(1-\delta)r_2(\xi_2)N_2(t)}{1+\kappa_2 N_2(t)}                                         \\                    
    \end{bmatrix}
\end{align}
where $\kappa_i=\frac{\overline{r}_i-1}{K_i}$. Since we know that the asymptotic population size with $\delta=0$ and $\rho=0$ is $N^*=(K_1,K_2)$, we rewrite the above equation in the form 

\begin{equation}\label{dpapprox}
    N(t+1)=F(N(t),\delta,\xi(t)).
\end{equation}
In the same spirit as \cite{cuello2019persistence}, consider the first order approximation around $x^*$

\begin{equation}
    Z(t+1)=x^*+ D_xF(Z(t)-x^*)+ \frac{\partial F}{\partial \delta}\delta + D_eF(\xi(t)-\E(\xi)). 
\end{equation}
If we take $\epsilon>0$ sufficiently small and $\rho$ is small enough, if $\|Z(0)-x^*\|<\epsilon$, then we know by Theorem \ref{t:small_noise_A} that $Z(t)\to \overline{Z}$ in distribution when $t\to\infty$. As a result we get by taking expectations and limits in the above equation that

\begin{equation}
    \E(\overline{Z})=x^*+D_xF(\E(\overline{Z})-x^*)+\frac{\partial F}{\partial \delta}\delta.
\end{equation}
If $(I-D_xF)$ is invertible one has

$$\E(\overline{Z})=x^*+(I-D_xF)^{-1}\left(\frac{\partial F}{\partial \delta}\right)\delta.$$
Computing this explicitly gives us
\begin{align}
    \E(\overline{Z})=\begin{bmatrix}
        K_1   \\
        K_2
    \end{bmatrix}
    +
    \begin{bmatrix}
           \frac{r_1}{r_1-1}      & 0                      \\  
                         0        &   \frac{r_2}{r_2-1}                                                
    \end{bmatrix}
    \begin{bmatrix}
             -\delta(K_1-K_2)                             \\  
              \delta(K_1-K_2)                                 
    \end{bmatrix}
=\begin{bmatrix}
    K_1-\frac{r_1\delta(K_1-K_2)}{r_1-1}\\
    K_2+\frac{r_2\delta(K_1-K_2)}{r_2-1} \\
\end{bmatrix}.   
\end{align}
As a result we get an expression of the form
$$E(N_1(\infty)+(N_2(\infty))=x^* -\delta(K_1-K_2)\left(\frac{r_1}{r_1-1}-\frac{r_2}{r_2-1}\right)+O(\rho^2).$$

\textbf{Biological interpretation:} \textit{The environmental fluctuations have no first order contributions to the total population size at stationarity. However, adding dispersal does lead to first order contributions.}

The same type of proof works for the second order approximation 
\begin{equation}
\begin{split}
Z_2(t+1)&=x^*+ D_xF(Z_2(t)-x^*)+ \frac{\partial G}{\partial \delta}\delta + D_eF(\xi(t)-\E(\xi))\\
&+ D_{xx}F(Z_2(t)-x^*,Z_2(t)-x^*)+D_{xe}G(Z_2(t)-x^*,\xi(t)-\E(\xi))\\
&+D_{ee}G(\xi(t)-\E(\xi),\xi(t)-\E(\xi))
\end{split}
\end{equation}
One gets

\begin{equation}
\E N_i=K_i+\frac{r_i\delta(K_1-K_2)(-1)^{i}}{r_i-1}-\frac{2K_i\rho^2}{r_i(r_i+1)(r_i-1)}+O(\rho^3)    
\end{equation}
which then implies
\[
\E N_1 + \E N_2 =K_1+K_2-\delta(K_1-K_2)\left(\frac{r_1}{r_1-1}-\frac{r_2}{r_2-1}\right)-\frac{2K_1\rho^2}{r_1(r_1^2-1)}-\frac{2K_2\rho^2}{r_2(r_2^2-1)}+O(\rho^3).
\]

\textbf{Biological Interpretation}: The above result shows that allowing a small amount of dispersal, like joining two isolated patches by a bottleneck can potentially increase the population size. This has been shown by \cite{grumbach2023effect} analytically in the deterministic case. From the above calculation we can see that if $K_2>K_1$ and $r_1>r_2$, then for $\delta=\rho$ and for $\rho$ sufficiently small, the expected total population size exceeds the deterministic total population at equilibrium for patches. In this setting we also get that environmental fluctuations are always detrimental and decrease the total population size.

\section{Simulations}\label{s:sim}

\subsection{Simulations}
In this section we present simulation results for some of the models we analyzed in the paper. These numerical explorations can be done for both small and large environmental fluctuations and therefore present some interesting phenomena.
For all of the simulations we have taken the random variables to be lognormally distributed. In the cases where a pair of lognormal random variables is considered, we define the correlation matrix $\Sigma$, for example if we have $Y=(Y_1,Y_2)$, then the following relations hold 
\[
\E(Y)_i=e^{\mu_i+\frac{1}{2}\Sigma_{ii}}
\]
\[
Var(Y)_{ij}=e^{\mu_i+\mu_j+\frac{1}{2}(\Sigma_{ii}+\Sigma_{jj})}(e^{\Sigma_{ij}}-1)
\]

We approximate the population size $N(\infty)$ at stationarity by the sample means
\[
\hat N(T) = \frac{1}{T}\sum_{t=0}^T N(t)
\]

All the simulations  have 200 data points (sample means) and each data point comes from $T=500$ iterations of the stochastic difference equation.

\subsection{Two-patch Beverton-Holt Model with dispersal} 

It has been shown by \cite{grumbach2023effect} in the deterministic setting, that one can get four different possible behaviors between the total population size and the dispersal rate $\delta$. The first type of behavior is the \textbf{monotonically beneficial} one where dispersal increases the total population size monotonically. In the \textbf{unimodally beneficial} case dispersal always increases the total population but does this in a unimodal fashion. In the \textbf{beneficial turning detrimental case}, at first dispersal is beneficial to the total population size but after a certain dispersal rate it becomes detrimental and decreases the total population size. Finally, in the \textbf{monotonically detrimental} setting, dispersal is always detrimental and decreases the total population size monotonically.

\begin{rmk}
One of the referees told us that these four responses of the model are not general, and that many other types of responses are possible \citep{franco2024new}.
\end{rmk}

Here, we explore how noise affects each of those behaviors.

\subsubsection{Random Intrinsic Growth Rates}
Here we assume that the growth rates are random and the carrying capacities are fixed. 
\begin{equation}
\begin{aligned}
N_1(t+1)&=\frac{K_1(1-\delta)(s_1(t)+1)(t)N_1(t)}{K_1+s_1(t)N_1(t)} +  \frac{K_2\delta(s_2(t)+1)N_2(t)}{K_2+s_2(t)N_2(t)}\\
N_2(t+1)&=\frac{K_2(1-\delta)(s_2(t)+1)N_2(t)}{K_2+s_2(t)N_2(t)} +  \frac{K_1\delta(s_1(t)+1)N_1(t)}{K_1+s_1(t)N_1(t)}.
\end{aligned}
\end{equation}

The numerical experiments can be seen in Figure \ref{f:BH6}.

\begin{figure}

    \begin{subfigure}{0.45\textwidth}            
            \includegraphics[width=\textwidth]{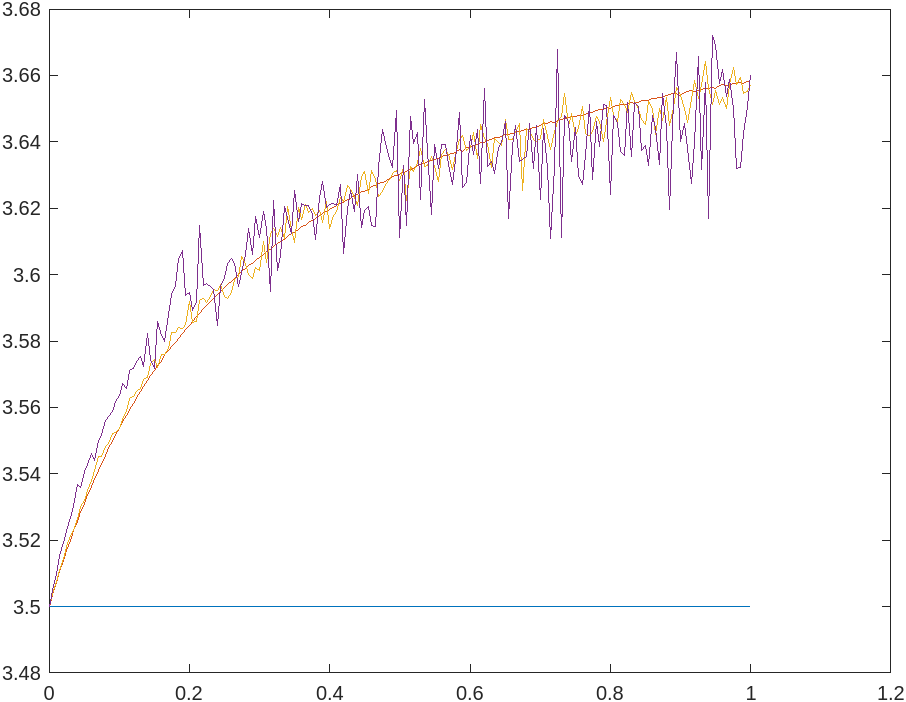}
            \caption{\textbf{Monotonically beneficial}}
           
    \end{subfigure}
    \begin{subfigure}{0.45\textwidth}
            \centering
            \includegraphics[width=\textwidth]{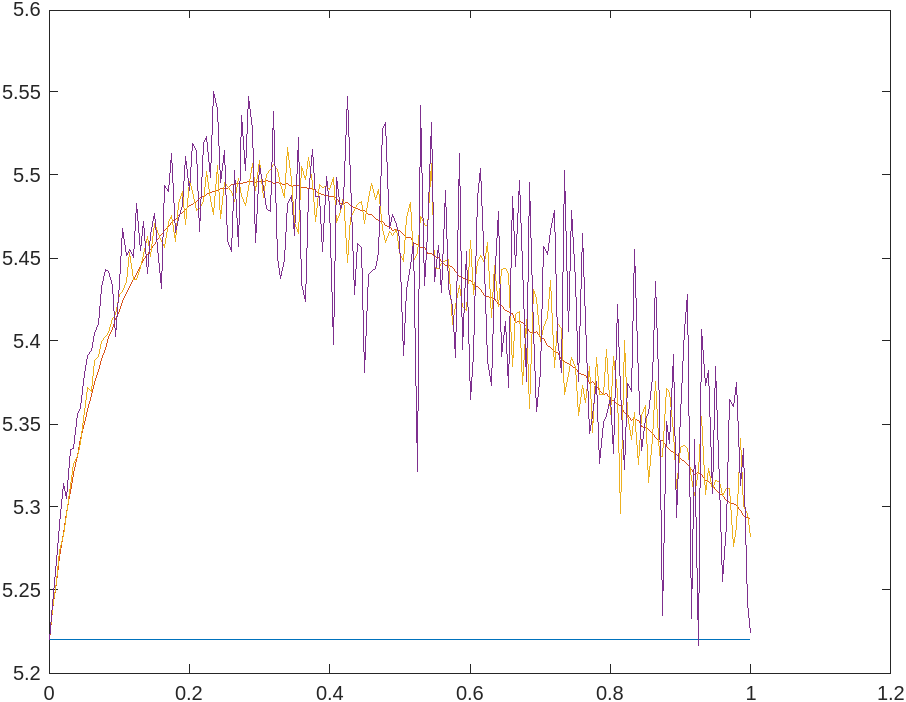}
            \caption{\textbf{Unimodally benefical}}       
    \end{subfigure}
      \begin{subfigure}{0.45\textwidth}
            \centering
            \includegraphics[width=\textwidth]{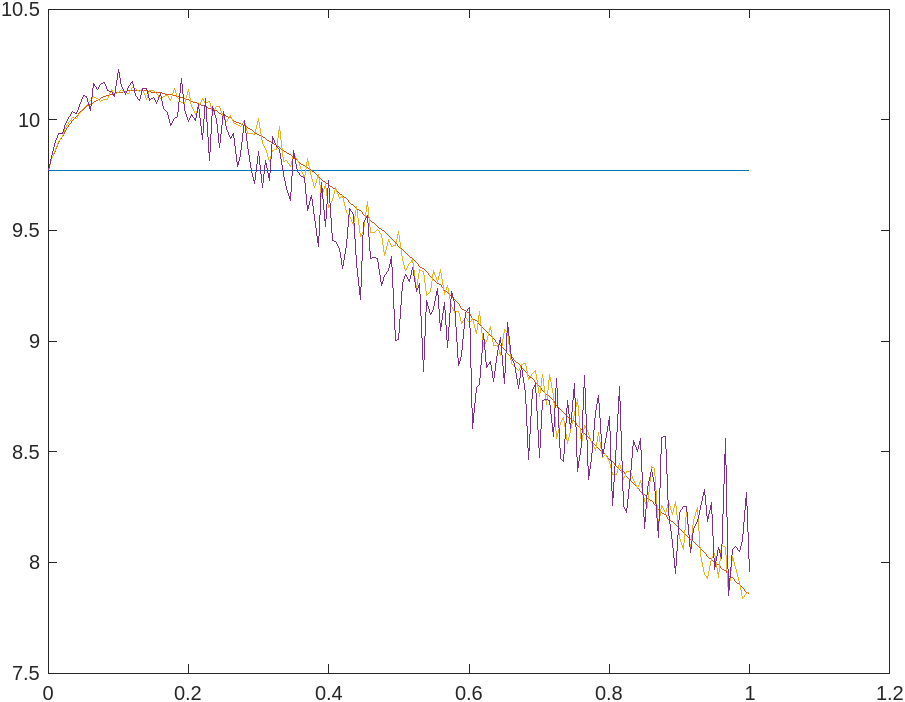}
            \caption{\textbf{Beneficial turning detrimental}}       
    \end{subfigure}
     \begin{subfigure}{0.45\textwidth}
            \centering
            \includegraphics[width=\textwidth]{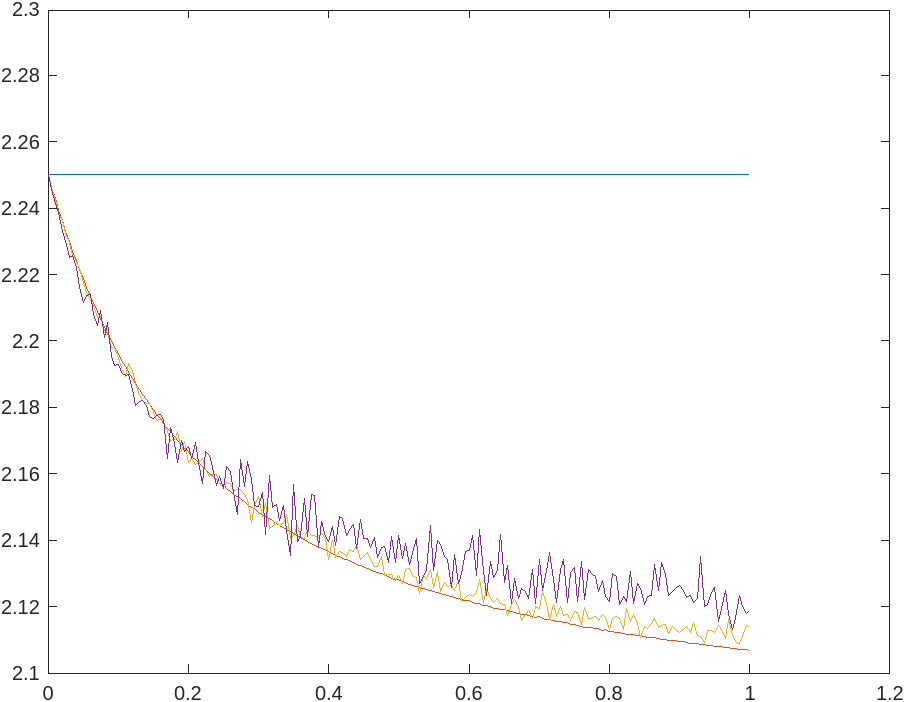}
            \caption{\textbf{Monotonically detrimental}}       
    \end{subfigure}    
    \caption{Total population size versus dispersal rate ($\delta$) in the two-patch Beverton-Holt model, with $c=0.75$. Consider random variables $r_1(t),r_2(t)$ to be uncorrelated: $\Sigma=sI$. \textbf{A}: $r_1=3$, $K_1=2$ and $r_2=1.5$, $K_2=1.5$. \textbf{B}: $r_1=3.2$, $K_1=3.85$ and $r_2=1.5$, $K_2=1.37$. \textbf{C}: $r_1=3.4$, $K_1=8.4$ and $r_2=1.5$, $K_2=1.37$. \textbf{D}: $r_1=2$, $K_1=1$ and $r_2=1.25$, $K_2=1.25$. The red curve corresponds to $s=5\times 10^{-4}$, yellow to $s=0.25$, and the purple one to $s=1.5$. In all four of these plots we observe that adding noise does not seem to change the qualitative relationship between total population and dispersal rate. In the monotonically detrimental case (D), higher noise seems to temper the effect of dispersal, even though it remains monotonically detrimental.}
\label{f:BH6}
\end{figure}

\subsubsection{Random Carrying Capacities}
For these numerical experiments we assume the carrying capacities are random and the growth rates are fixed.
\begin{equation}
\begin{aligned}
N_1(t+1)&=\frac{K_1(t)(1-\delta)r_1N_1(t)}{K_1(t)+(r_1-1)N_1(t)} +  \frac{K_2(t)\delta r_2N_2(t)}{K_2(t)+(r_2-1)N_2(t)}\\
N_2(t+1)&=\frac{K_2(t)(1-\delta)r_2N_2(t)}{K_2(t)+(r_2-1)N_2(t)} +  \frac{K_1(t)\delta r_1N_1(t)}{K_1(t)+(r_1-1)N_1(t)}.
\end{aligned}
\end{equation}

The numerical experiments can be seen in Figure \ref{f:BH7}.

\begin{figure}

    \begin{subfigure}{0.45\textwidth}            
            \includegraphics[width=\textwidth]{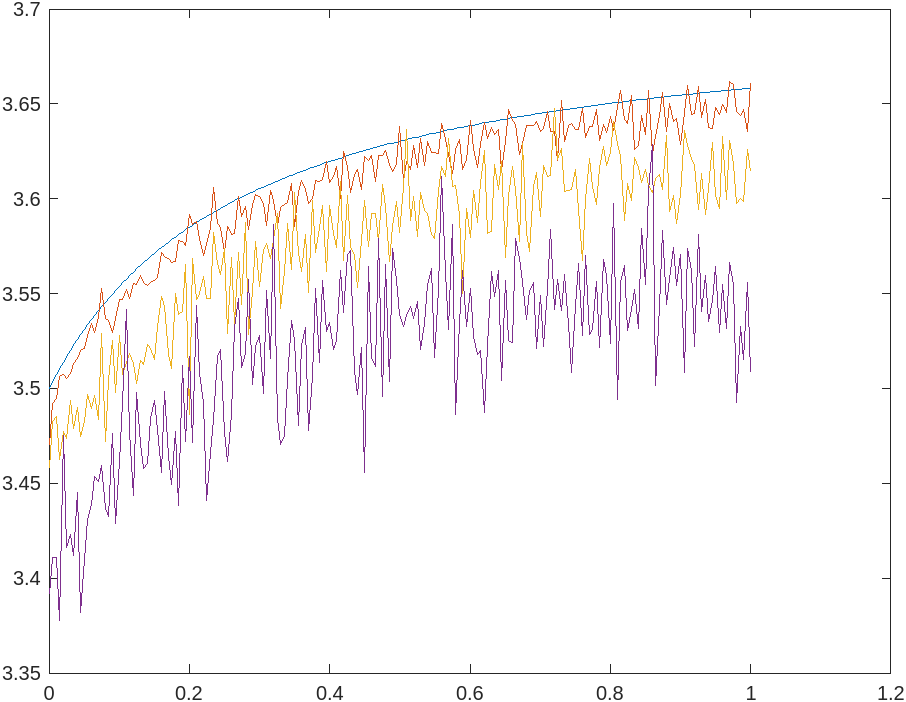}
            \caption{\textbf{Monotonically beneficial}}
           
    \end{subfigure}
    \begin{subfigure}{0.45\textwidth}
            \centering
            \includegraphics[width=\textwidth]{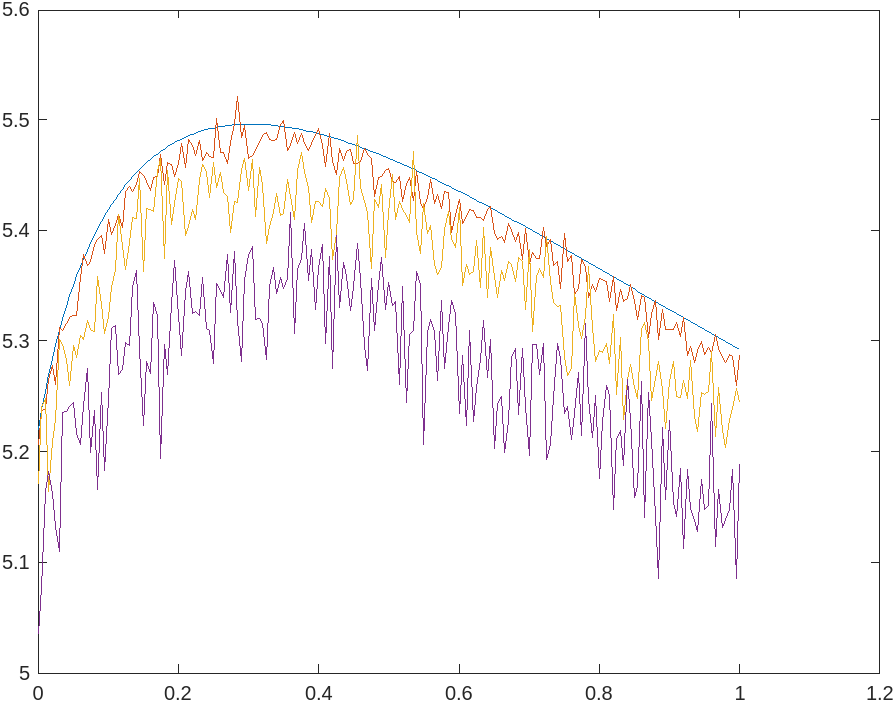}
            \caption{\textbf{Unimodally benefical}}       
    \end{subfigure}
      \begin{subfigure}{0.45\textwidth}
            \centering
            \includegraphics[width=\textwidth]{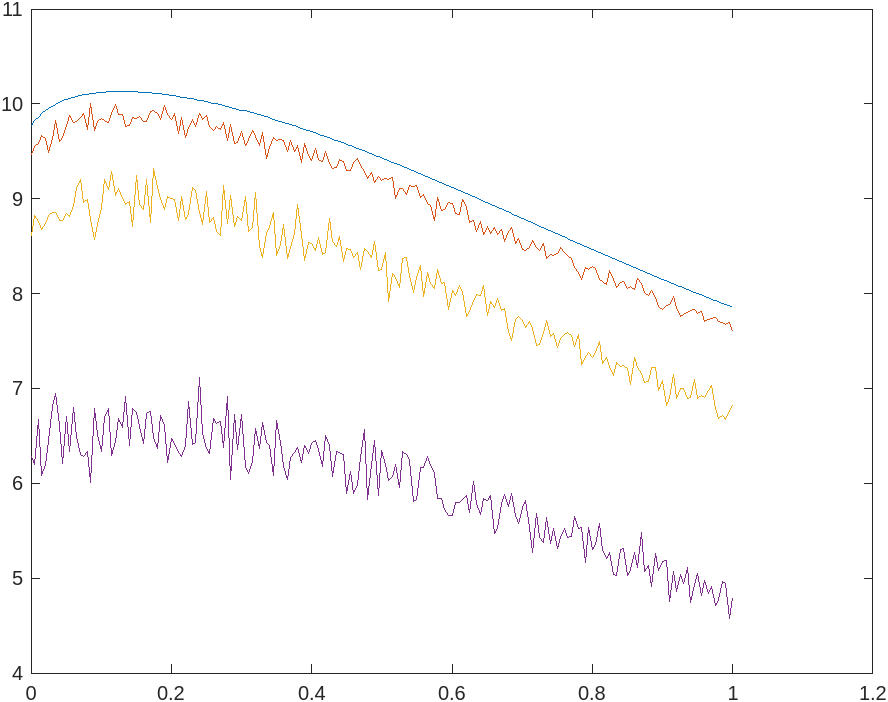}
            \caption{\textbf{Beneficial turning detrimental}}       
    \end{subfigure}
     \begin{subfigure}{0.45\textwidth}
            \centering
            \includegraphics[width=\textwidth]{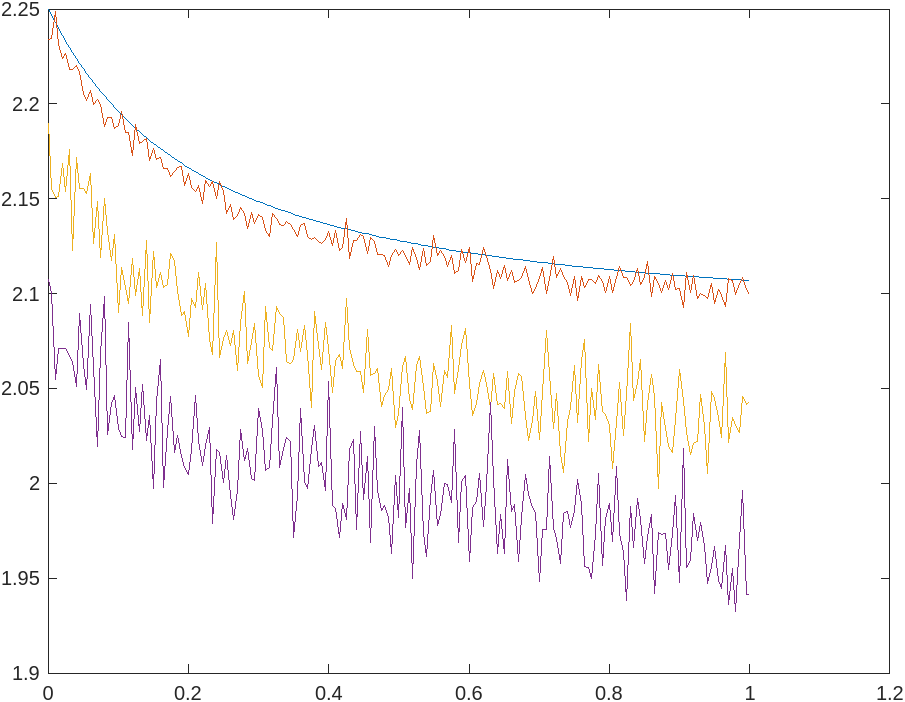}
            \caption{\textbf{Monotonically detrimental}}       
    \end{subfigure}    
    \caption{Total population size versus dispersal rate ($\delta$) in the 2D Beverton-Holt model. We assume the random variables $K_1(t),K_2(t)$ to be uncorrelated: $Cov((K_1(t),K_2(t)))=sI$. The model parameters $r_1,r_2,K_1,K_2$ are the same as in Figure \ref{f:BH6}. The blue curves in all plots depict the deterministic population.  \textbf{A}: the red curve corresponds to $s=5\times 10^{-3}$, the yellow curve to $s=0.02$, and the purple one to $s=0.05$. \textbf{B} the red curve corresponds to $s=5\times 10^{-3}$, the yellow one to $s=0.02$, and the purple one to $s=0.05$ . \textbf{C}: the red curve corresponds to $s=0.05$, the yellow to $s=0.25$, and the purple $s=0.85$ . \textbf{D}: the red curve corresponds to $s=5\times 10^{-3}$, the yellow to $s=0.05$, and the purple $s=0.1$. We observe that in all four cases, adding noise decreases total population while preserving the qualitative relationship between total population and dispersal rate ($\delta$). }
\label{f:BH7}
\end{figure}

\subsection{Two-patch Hassell Model with dispersal}

We also looked at the Hassell model with dispersal.
\begin{equation}
\begin{aligned}
N_1(t+1)&=\frac{(1-\delta)\alpha_1N_1(t)}{(1+K_1(t)N_1(t))^c} +  \frac{\delta \alpha_2N_2(t)}{(1+K_2(t)N_2(t))^c}\\
N_2(t+1)&=\frac{(1-\delta)\alpha_2N_2(t)}{(1+K_2(t)N_2(t))^c} + \frac{\delta \alpha_1N_1(t)}{(1+K_1(t)N_1(t))^c}.
\end{aligned}
\end{equation}

The numerical experiments can be seen in Figure \ref{f:BH8}. We see that, as in the setting without dispersal from \cite{HS25a}, noise seems to often be beneficial in the Hassell model, while it usually is detrimental in the Beverton-Holt model.

\begin{figure}

    \begin{subfigure}{0.45\textwidth}            
            \includegraphics[width=\textwidth]{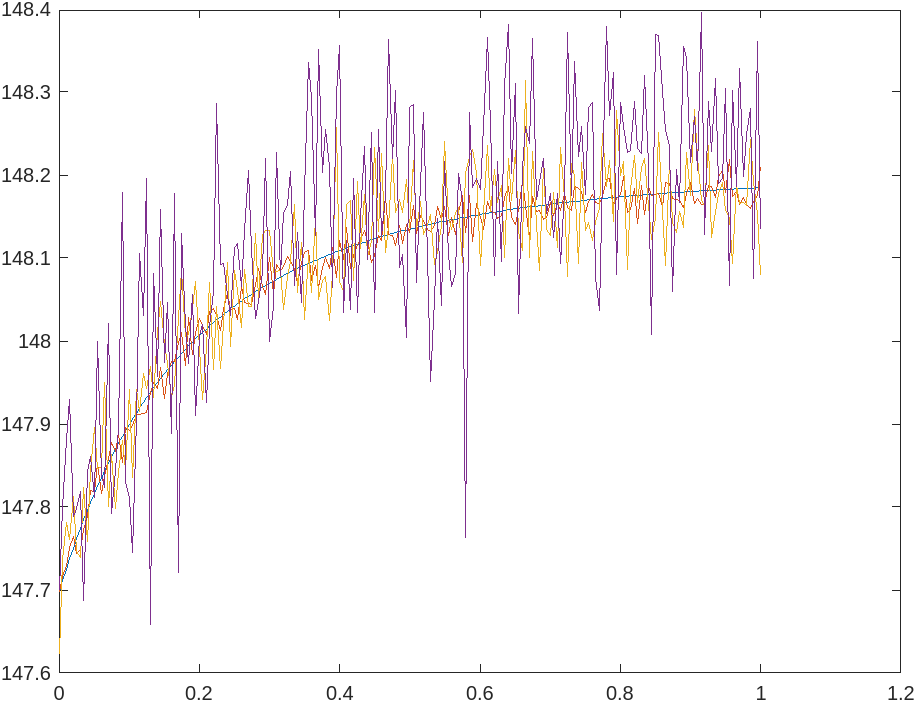}
            \caption{\textbf{Monotonically beneficial}}
           
    \end{subfigure}
    \begin{subfigure}{0.45\textwidth}
            \centering
            \includegraphics[width=\textwidth]{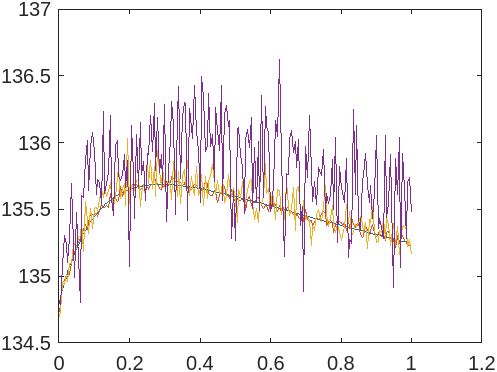}
            \caption{\textbf{Unimodally benefical}}       
    \end{subfigure}
      \begin{subfigure}{0.45\textwidth}
            \centering
            \includegraphics[width=\textwidth]{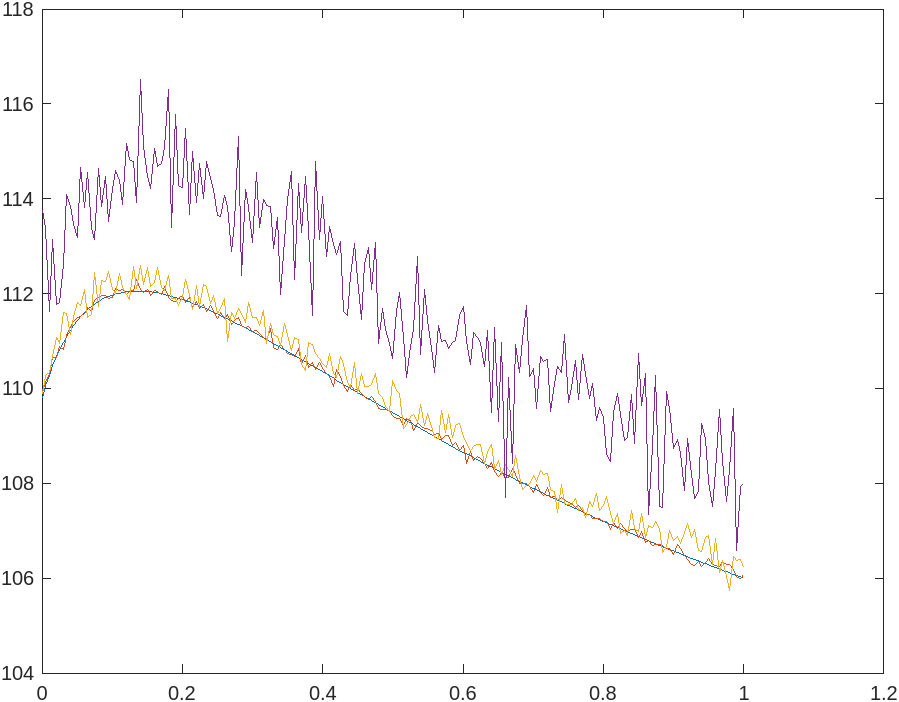}
            \caption{\textbf{Beneficial turning detrimental}}       
    \end{subfigure}
     \begin{subfigure}{0.45\textwidth}
            \centering
            \includegraphics[width=\textwidth]{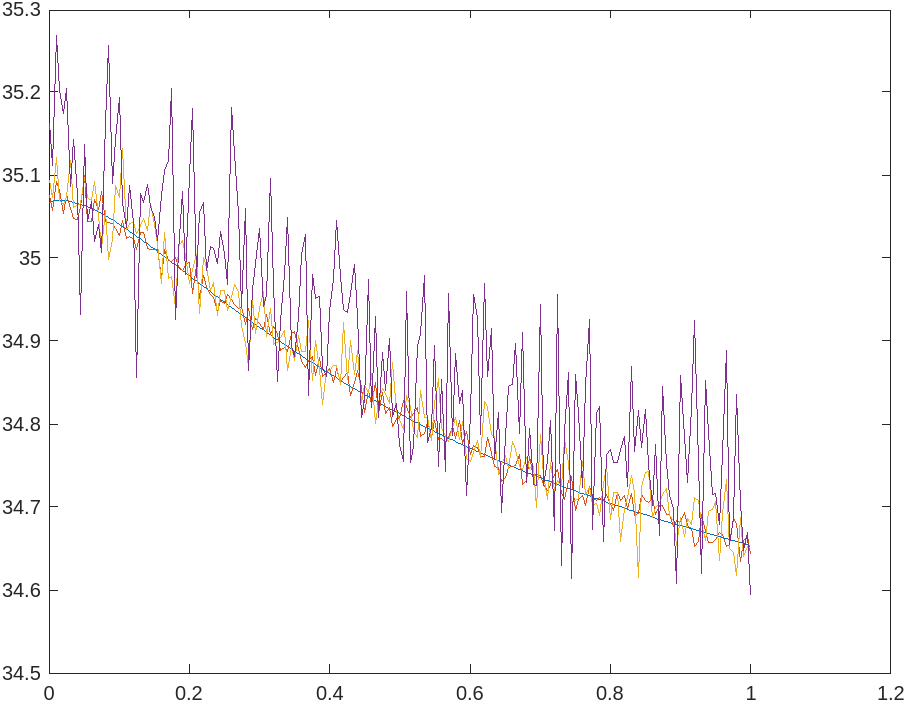}
            \caption{\textbf{Monotonically detrimental}}       
    \end{subfigure}    
    \caption{Total population size versus dispersal rate ($\delta$) in the 2D Hassell model. We assume the random variables $K_1(t),K_2(t)$ to be uncorrelated: $Cov((K_1(t),K_2(t)))=sI$. The blue curves in all plots depict the deterministic population. \textbf{A}: $\alpha_1=2.25$, $K_1=0.025$ and $\alpha_2=1.862$ $K_2=0.0185$. The red curve corresponds to $s=1\times 10^{-5}$, yellow to $s=1\times 10^{-4}$, and purple $s=5\times 10^{-4}$. \textbf{B}:  $\alpha_1=2.25$, $K_1=0.025$ and $\alpha_2=1.75$, $K_2=0.0195$. The red curve corresponds to $s=5\times 10^{-5}$, yellow to $s=5\times 10^{-4}$, and purple $s=5\times 10^{-3}$. \textbf{C}:  $\alpha_1=2.25$, $K_1=0.025$ and $\alpha_2=1.5$, $K_2=0.0225$. the red curve corresponds to $s=5\times 10^{-4}$, yellow to $s=5\times 10^{-3}$, and purple $s=0.05$ .  \textbf{D}: $\alpha_1=1.783$, $K_1=0.0588$ and $\alpha_2=1.838$, $K_2=0.0725$. The red curve corresponds to $s=1\times 10^{-4}$, yellow to $s=5\times 10^{-4}$, and purple $s=5\times 10^{-3}$. We observe that adding small noise does not affect the total population by much, however if the noise is significant we see an increase in total population in panels \textbf{B}, \textbf{C} and \textbf{D}.}
\label{f:BH8}
\end{figure}

\section{Discussion and future work}\label{s:discussion}

In this paper we explored how dispersal and stochasticity impact a population that is spread throughout $n$ patches. Using the results of \cite{BS09} it is possible to have a general result for persistence and extinction when the dynamics is stochastic and the dispersal rates are themselves random. Since the patches are connected by dispersal so that individuals can move between any two patches, the persistence or extinction of all the patches depends on the sign of the Lyapunov exponent, or the metapopulation growth rate, $M$. We show that the assumptions required for the persistence and extinction results are satisfied when the patch dynamics is modeled by Beverton-Holt or Hassell functional responses where the parameters can fluctuate according to i.i.d environments. This allows us to be able to compare the expected population size at stationarity to the population size at the equilibrium of the deterministic dynamics where the stochastic parameters are replaced by their averages. 

Information about the total population size at stationarity is usually hard to quantify analytically because even in the deterministic model, one can only find the unique fixed point explicitly in very special settings, like two-patch models with Beverton-Holt functional responses \cite{grumbach2023effect}. Nevertheless, we can get some analytic expressions in some interesting limiting cases. More specifically, we look at slow and fast dispersal in the Beverton-Holt setting. In these cases both $M$ and the total population size at stationarity can be found explicitly or approximated analytically. In the small dispersal limit it is enough to have one persistent patch - this can save all the other patches, which in the absence of dispersal can be sink patches, from extinction. If only the carrying capacity fluctuates we show in the small dispersal setting that environmental stochasticity is always detrimental as it decreases the total population size. In the high dispersal limit it is possible for all the patches to be sinks and yet have persistence of the coupled system, showing once again that dispersal can save the populations from extinction and increase the total population size. In the Hassell slow dispersal setting we show that noise always increases the total population size, in contrast to what happens in the Beverton-Holt setting. This happens because, as was shown previously in \cite{HS25a}, it is important what we take for the baseline no-noise model - do we keep the carrying capacity constant at its average, or the growth rate constant at its average, or the inverse of the carrying capacity constant at its average?   

Noise is not always due to i.i.d environments. To explore this setting, we look at what happens when the environment changes according to a discrete-time finite-state Markov chain. In this setting we get analytical approximations for the expectation of the total population size at stationarity when the switching of the Markov chain is slow, i.e. the process spends a long time in a state before jumping to another state, as well as when the switching is fast, i.e. the process basically switches states at every time step. In the small dispersal limit we can show that fast switching environments always lead to a higher total population at stationarity than slow switching environments.

Making use of small-noise results, if we also consider that the dispersal rates are small, we are able to obtain explicit results in the Beverton-Holt model when there are $n=2$ patches. We uncovered some interesting behavior: 1) If one only varies the growth rates in the two patches and keeps the dispersal rate $0$ we get that the total population size decreases because of the environmental fluctuations. 2) If the dispersal is fluctuating and $\delta(\xi)=\rho \xi_1$ while the carrying capacities are random and are fluctuating but are perfectly correlated i.e. $K_1 =\bar K_1 + \rho \xi_2, K_2 =\bar K_2 + \rho \xi_2$ then if one patch has a higher carrying capacity $K_1>K_2$ and a lower growth rate $r_2>r_1$ this will increase the total population size. If a patch is has a higher carrying capacity and a higher growth rate then the total population size decreases. The second order terms are more complicated and involve the correlations $\E \xi_1 \xi_2>0$. 3) If the dispersal is fluctuating and $\delta(\xi)=\rho \xi_1$ while the carrying capacities are random and are fluctuating but are driven by different noise terms i.e. $K_1 =\bar K_1 + \rho \xi_2, K_2 =\bar K_2 + \rho \xi_3$ we get the same result as in case 2) but the second order terms are significantly more complex and depend on $\E \xi_1 \xi_2$ and $\E \xi_1 \xi_3$.

\textbf{Future work.} Our current results are only for discrete time models whereas many important biological models are in continuous time. It will be interesting to see how the population size changes due to environmental fluctuations for various types of models, like stochastic differential equations \citep{EHS15}. It is not clear how to generalize the small noise result of \cite{cuello2019persistence} to continuous time dynamics. One possible approach would be to prove small-noise estimates for functionals of the invariant measure around a stable equilibrium.

\textbf{Acknowledgments:} The authors thank Sebastian Schreiber and Peter Chesson, as well as the two anonymous referees, for helpful suggestions which led to an improved manuscript. A. Hening acknowledges generous
support from the NSF through the CAREER grant DMS-2339000. 

\textbf{Conflict of interest.} The authors have no conflict of interest to declare.

\appendix

 \section{Small noise}\label{a:small_noise}

The deterministic dynamics is given by 
\[
\bar N_i(t+1) = F_i(\bar \NN(t), \bar e),
\]
which we will write in vector form as
\[
\bar N(t+1) = F(\bar \NN(t), \bar e).
\]

The dynamics is made stochastic by letting $(e(t))_{t\geq 0}$ be iid random variables taking values in a Polish space.
Assume the following.
\begin{asp}\label{a:Cuello2}
One has:
\begin{itemize}
    \item [(\textbf{B1})] $e(t)=\bar e +\rho\xi(t)$ for each $t$ where $(e(t))_{t\geq 0}$ are iid random variables which take values in a convex compact subset of $\R^\ell$ and $\rho\geq 0$.
    \item [(\textbf{B2})] If $\bar\NN_0 \in \Se$ is an equilibrium for $G(x)=F(x,\bar e)$, then $\bar\NN_0$ is hyperbolic. 
\end{itemize}
\end{asp}
The stochastic dynamics is
\begin{equation}\label{e:gen3}
N_i(t+1) = F_i(\NN(t), e(t)).
\end{equation}

We will adapt and generalize small-noise results by \cite{stenflo1998ergodic} and \cite{cuello2019persistence} in order to find approximations of the expected population size at stationarity.

The first result tells us that if we start close enough to a locally stable equilibrium then the distribution will converge to a unique invariant probability measure which lives in a neighborhood of the equilibrium.
\begin{thm} (Theorem 3.3.1 from \cite{cuello2019persistence})\label{sfp_stoch2}
Consider the process given by $\eqref{e:gen3}$. If Assumption \ref{a:Cuello2} above holds and the equilibrium $\bar\NN_0$ is locally stable, then there exists $\delta>0$ and $\rho>0$ such that  $0<\rho<\delta$ and for all $\NN(0)\in \Se$ with $\|\NN(0)-\bar\NN_0\|<\delta$, we have that the distribution of $\NN(t)$ converges as $t\to \infty$ to a unique invariant probability measure $\mu$ such that $\bar\NN_0\in supp(\mu)$.
\end{thm}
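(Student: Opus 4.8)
The plan is to establish Theorem \ref{sfp_stoch2} by combining the hyperbolicity (local stability) hypothesis \textbf{(B2)} with a contraction/attraction argument for the random dynamical system, and then invoke a standard tightness plus uniqueness argument to obtain the invariant measure. First I would use the fact that $\bar\NN_0$ is a locally stable hyperbolic equilibrium for the deterministic map $G(x)=F(x,\bar e)$: by the Hartman--Grobman philosophy (or simply by looking at $\|DG(\bar\NN_0)\|$ in an adapted norm), there is a radius $r_0>0$, a constant $\theta\in(0,1)$, and a (possibly equivalent) norm such that $\|G(x)-\bar\NN_0\|\leq \theta\|x-\bar\NN_0\|$ for all $x$ with $\|x-\bar\NN_0\|\leq r_0$. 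This is the deterministic skeleton of the estimate.

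Next I would perturb this. Since $F$ is continuous (indeed $C^2$ in the applications), and $e(t)=\bar e+\rho\xi(t)$ with $\xi(t)$ ranging over a fixed compact convex set, a uniform continuity argument gives: for every $\eps>0$ there is $\rho_0>0$ so that for all $0<\rho<\rho_0$ and all admissible noise values, $\|F(x,e)-G(x)\|<\eps$ whenever $\|x-\bar\NN_0\|\leq r_0$. Choosing $\eps$ small relative to $(1-\theta)r_0$, one shows that the ball $\overline{B}(\bar\NN_0,\delta)$ with $\delta\leq r_0$ chosen appropriately is forward-invariant under the random map $x\mapsto F(x,e(t))$ with probability one, and moreover that two trajectories started in this ball under the \emph{same} noise realization contract: $\|\NN(t)-\NN'(t)\|\leq (\theta+ \eps')^t\|\NN(0)-\NN'(0)\|\to 0$, where $\theta+\eps'<1$. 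This pathwise contraction on the invariant ball is the heart of the argument.

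From the contraction I would deduce existence and uniqueness of the invariant measure supported in $\overline{B}(\bar\NN_0,\delta)$ via the classical argument for contracting iterated random functions (in the spirit of \cite{stenflo1998ergodic}): the backward iterates $F(\cdot,e(0))\circ F(\cdot,e(1))\circ\cdots\circ F(\cdot,e(t-1))$ applied to any point in the ball form an a.s.\ Cauchy sequence, whose limit $\hat\NN$ is independent of the starting point and whose law $\mu$ is the unique stationary distribution; the forward process $\NN(t)$ then converges in distribution to $\mu$ for every deterministic start in the ball. Finally, $\bar\NN_0\in\supp(\mu)$ follows because with positive probability the noise stays near its mean $\bar e$ over many steps, pushing the process into any neighborhood of $\bar\NN_0$; since $\mu$ is invariant this forces $\bar\NN_0$ into the support. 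The main obstacle is making the perturbative contraction estimate uniform in the noise while simultaneously keeping the invariant ball fixed --- one has to choose $\delta$, $\rho$, and the adapted norm in the right order so that the deterministic contraction rate $\theta$ genuinely dominates both the noise-induced drift $\eps$ and the error in passing from $DG(\bar\NN_0)$ to the nonlinear map on a ball of radius $\delta$; everything else is bookkeeping.
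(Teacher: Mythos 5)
The paper itself offers no proof of this statement --- it is quoted verbatim as Theorem~3.3.1 of \cite{cuello2019persistence}, so there is nothing internal to compare against. That said, your reconstruction is essentially the standard proof from the cited literature: an adapted norm in which $DG(\bar\NN_0)$ has norm $\theta<1$, a forward-invariant ball on which the randomly perturbed maps are uniform contractions, and then the backward-iteration (Letac/Stenflo) argument for contracting iterated function systems to get existence, uniqueness, and convergence in distribution. Two things need tightening. First, your uniform-continuity step only bounds $\|F(x,e)-G(x)\|$, which gives forward-invariance of the ball but not the pathwise contraction $\|\NN(t)-\NN'(t)\|\leq(\theta+\eps')^t\|\NN(0)-\NN'(0)\|$; for that you must control $\|D_xF(x,e)-D_xG(\bar\NN_0)\|$ uniformly over the ball and over admissible $e$, which is fine under the $C^2$ hypotheses but is a different estimate from the one you wrote down.

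Second, and more substantively, your argument for $\bar\NN_0\in\supp(\mu)$ does not follow from Assumption~(B1) alone. You claim that ``with positive probability the noise stays near its mean $\bar e$ over many steps,'' but (B1) only says $\xi(t)$ lives in a compact convex set; it does not put $0$ in the support of $\xi(1)$. If, say, $\xi$ is symmetric two-point valued, the noise is never near its mean, and the conclusion can genuinely fail: for $F(x,e)=\theta x+(1-\theta)e$ in one dimension with $\theta<1/2$ and $e\in\{\bar e-a,\bar e+a\}$ equiprobable, the invariant measure is supported on a Cantor set that omits the midpoint $\bar e=\bar\NN_0$. The correct argument requires the additional hypothesis $\bar e\in\supp(e(1))$ (equivalently $0\in\supp\xi(1)$), under which one notes that $\supp(\mu)$ is a nonempty closed set invariant under $x\mapsto F(x,e)$ for every $e\in\supp(e(1))$; iterating $G=F(\cdot,\bar e)$ on any point of $\supp(\mu)$ inside the contraction ball converges to $\bar\NN_0$, forcing $\bar\NN_0\in\supp(\mu)$. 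This assumption is implicit in the cited source but should be stated if one reproves the result, particularly since this paper extends Cuello's framework to noise with $\E\xi\neq 0$.
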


The next result gives the explicit approximations of the expectation of the population at stationarity. 

\begin{thm}\label{t:small_noise_A}
Suppose the Assumptions of Theorem \ref{t:dispersal_det} hold for the system \eqref{e:dispersal_det2}, that $\lambda(D \Lambda(0))>0$ and that Assumption \ref{a:Cuello2} holds. Let $\bar\NN_0$ be the global equilibrium of the deterministic system \eqref{e:dispersal_det2}. There exists $\delta>0$ such that if $|\NN(0)-\bar\NN_0|<\delta$ we have $|\NN(t)-\bar\NN_0|<\delta$ for all $t$ large enough and the occupation measures of $\NN(t)$ converge weakly to an invariant probability measure $\mu$ supported in $B_\delta(\bar\NN_0)$.

Let $D_x F$ be the Jacobian at $\bar\NN_0$ with respect to the population i.e. the $i-j$th entry of $D_x F$ is $\frac{\partial F_i}{\partial x_j}(\bar\NN_0)$. Similarly $D_e F$ will be the matrix with entries $\diag(\frac{\partial F_i}{\partial e_i}(\bar\NN_0))$.
Let $\NN^\rho$ be a vector with distribution $\mu$. We have
 \begin{equation}
\begin{split}
\E(N)=&\bar N_0 + \rho BD_{\xi}F(\bar N_0,0)(\E(\xi))+ \frac{1}{2}B\left(\sum_{i,j=1}^n \frac{\partial^2 F}{\partial N_i\partial N_j}(\bar N_0,0)[\text{Cov}(\NN)]_{ij}\right)\\
+&\frac{\rho^2}{2}B\left(\frac{\partial^2 F}{\partial \xi_i\partial \xi_j}[\text{Cov}(\xi)]_{ij}\right)+ \frac{\rho^2}{2}B\left(\sum_{i,j=1}^n\frac{\partial^2 F}{\partial \xi_i\partial \xi_j}(\bar N_0,0)\E(\xi)_i\E(\xi)_j\right)\\
+&\frac{\rho^2}{2}B\left(\sum_{i,j=1}^n \frac{\partial^2 F}{\partial N_i\partial N_j}(x^*,0)(BD_{\xi}F(\E\xi))_i(B D_{\xi}F_I(\E\xi))_j \right)\\
+&\rho^2B\sum_{i=1}^n\sum_{j=1}^n\frac{\partial^2 F}{\partial N_i\partial \xi_j}(\bar N_0,0)(\E(\xi))_j(BD_{\xi}F(\E\xi))_i+ O(\rho^3)    
\end{split}
\end{equation}
where $B$ denotes the matrix $\left(I-D_NF(\bar N_0,0)\right)^{-1}$ and $ \text{Cov}(N_{ij})$ are the entries of the matrix defined by
\[
vec(\text{Cov}(N)) = (I-D_NF \otimes D_NF)^{-1} (D_{\xi}F \otimes D_{\xi}F) vec(Cov(\xi)).
\]
Moreover, if one can show that \eqref{e:gen3} has an invariant distribution $\bar \mu \neq \delta_0$ and for all $\NN(0)\neq 0$ the empirical measures of $\NN(t)$ converge to $\bar \mu$ then $\bar \mu = \mu$ and the small approximation results from above work globally for any initial condition $\NN(0)\neq 0$. 
\end{thm}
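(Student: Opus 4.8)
The plan is to assemble the statement from three ingredients: the deterministic structure theorem (Theorem~\ref{t:dispersal_det}), Cuello's local stochastic convergence result (Theorem~\ref{sfp_stoch2}), and a perturbative Taylor expansion of the stationary fixed-point equation. Since $\lambda(D(\bar e)\Lambda(0,\bar e))>0$ and the hypotheses of Theorem~\ref{t:dispersal_det} hold for $f_i(\cdot,\bar e)$ and $\bar D$, there is a unique globally attracting positive equilibrium $\bar N_0$ of \eqref{e:dispersal_det2}; in particular $\bar N_0$ is locally stable, and by Assumption~\ref{a:Cuello2}(B2) it is hyperbolic, so the spectral radius of $D_NF(\bar N_0)$ is strictly less than $1$. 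Hence $1$ is not an eigenvalue of $D_NF(\bar N_0)$, so $B:=(I-D_NF(\bar N_0))^{-1}$ is well defined, and likewise the eigenvalues of $D_NF\otimes D_NF$ are the products $\lambda_i\lambda_j$, none equal to $1$, so $(I-D_NF\otimes D_NF)^{-1}$ exists. Applying Theorem~\ref{sfp_stoch2} then produces $\delta>0$ and a threshold on $\rho$ such that for $\|\NN(0)-\bar N_0\|<\delta$ the distributions of $\NN(t)$, and by the same contraction argument the occupation measures, converge to a unique invariant probability measure $\mu$ supported in $B_\delta(\bar N_0)$.

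The core of the argument is the moment expansion. Let $\NN^\rho\sim\mu$ and put $Y:=\NN^\rho-\bar N_0$; at stationarity $\bar N_0+Y\overset{d}{=}F(\bar N_0+Y,\bar e+\rho\xi)$, where $Y$ is independent of $\xi$ because the stationary state at a given time is a measurable function of the strictly earlier noise only. First I would sharpen the contraction estimate underlying Theorem~\ref{sfp_stoch2}: in an adapted norm, hyperbolicity and local stability give $\theta<1$ and $C>0$ with $\|F(x,e)-\bar N_0\|\le\theta\|x-\bar N_0\|+C\|e-\bar e\|$ on $B_\delta(\bar N_0)$ for $e$ in the (compact) noise range, so iterating along the dynamics yields $\supp(\mu)\subseteq B_{C'\rho}(\bar N_0)$ and hence $\E\|Y\|^k=O(\rho^k)$ for $k=1,2,3$. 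Next I Taylor-expand $F$ around $(\bar N_0,\bar e)$ to second order with remainder $O((\|Y\|+\rho\|\xi\|)^3)$, using that $F$ is $C^3$ on a compact neighborhood and $\|\xi\|$ is bounded, take expectations, and use independence of $Y$ and $\xi$ to split $\E[Y_iY_j]=[\Cov(\NN)]_{ij}+\E Y_i\,\E Y_j$ and $\E[Y_i\xi_j]=\E Y_i\,\E\xi_j$. Solving the resulting identity $\E Y=D_NF\,\E Y+(\text{order-}\rho\text{ and order-}\rho^2\text{ terms})+O(\rho^3)$ for $\E Y$ via $B$ gives \eqref{e:approx}, once the leading behaviour of $\Cov(\NN)$ is identified: to order $\rho^2$ the process is governed by the linear recursion $Z(t+1)-\bar N_0=D_NF(Z(t)-\bar N_0)+D_\xi F\,\xi(t)$, whose stationary covariance satisfies $vec(\Cov(\NN))=(I-D_NF\otimes D_NF)^{-1}(D_\xi F\otimes D_\xi F)\,vec(\Cov(\xi))$ by summing the geometric series, and whose stationary mean is $\rho BD_\xi F\,\E\xi$; the latter also supplies the $\E Y_i\E Y_j$ and $\E Y_i\E\xi_j$ contributions. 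Matching powers of $\rho$ and checking that every discarded term is genuinely cubic yields the stated expansion. The two points where this goes beyond \cite{cuello2019persistence} --- a possibly non-zero mean $\E\xi$, which retains the first-order term and several $\rho^2$ terms that Cuello's centering kills, and the non-Kolmogorov (dispersal) form of $F$, which only changes the explicit entries of $D_NF$ and $D^2F$ --- do not affect this method.

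For the global statement, suppose one knows independently --- for instance from Theorem~\ref{t:BS09} together with the persistence verifications carried out for the Beverton--Holt and Hassell models --- that \eqref{e:gen3} has an invariant measure $\bar\mu\neq\delta_0$ on the positive cone to which the empirical measures of $\NN(t)$ converge from every nonzero initial condition. Applying this in particular to initial conditions in $B_\delta(\bar N_0)$ and comparing with the local conclusion of Theorem~\ref{sfp_stoch2} forces $\bar\mu=\mu$, so the expansion describes $\E_{\bar\mu}\NN$ for any $\NN(0)\neq 0$.

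I expect the main obstacle to be the quantitative a priori bound $\E\|Y\|^k=O(\rho^k)$: the qualitative conclusion $\supp(\mu)\subseteq B_\delta(\bar N_0)$ of Theorem~\ref{sfp_stoch2} does not by itself control the Taylor remainder, and one must extract the linear-in-$\rho$ scaling of the support (equivalently, of the low-order moments of $Y$) from a uniform contraction estimate near the hyperbolic stable equilibrium, uniform over the compact noise range. Once that scaling is in hand, the expansion is bookkeeping: expand, take expectations, substitute the linearized covariance and mean, and collect terms by order of $\rho$.
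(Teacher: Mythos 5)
Your proposal is correct and arrives at the same expansion by the same underlying computation: a second-order Taylor expansion of $F$ around $(\bar\NN_0,\bar e)$, expectations taken at stationarity, the linear system solved via $B=(I-D_NF)^{-1}$, and the stationary covariance obtained from the linearization via the Kronecker-product identity. The organizational difference is real, though. The paper (following \cite{cuello2019persistence}) does not work directly with the stationarity identity $N\overset{d}{=}F(N,\bar e+\rho\xi)$; instead it introduces explicit first- and second-order approximation processes $Y_1(t)$ and $Y_2(t)$, proves each converges in distribution (citing Cuello's Proposition 3.6.4), and uses the closeness estimates $\E(N)=\E(Y_1)+O(\rho^2)$ and $\Cov(N)=\Cov(Y_2)+O(\rho^3)$ to transfer the moment computations from the auxiliary processes to the true stationary law; the hard quantitative input --- that the stationary state stays within $O(\rho)$ of $\bar\NN_0$, uniformly --- is thereby outsourced to Cuello's Propositions 3.3.1 and 3.6.2 rather than re-derived. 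Your route replaces that scaffolding with a direct contraction estimate giving $\supp(\mu)\subseteq B_{C\rho}(\bar\NN_0)$ and hence $\E\|Y\|^k=O(\rho^k)$, after which you exploit the independence of the stationary state from the fresh noise to split the mixed moments. You correctly flag that this a priori scaling is the only non-bookkeeping step; it is exactly the content of the cited Cuello propositions, so your version is somewhat more self-contained but must prove what the paper imports. Both treatments handle the two genuine novelties (nonzero $\E\xi$ and the non-Kolmogorov dispersal form) identically, by simply retaining the extra first-order and cross terms that centering would kill. Your closing argument for the global statement (uniqueness forces $\bar\mu=\mu$ by testing initial conditions in $B_\delta(\bar\NN_0)$) matches the paper's remark.
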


\begin{rmk}
We note that we can get from Theorems \ref{t:BS09}, \ref{t:stoc_m} and \ref{t:stoc_m_h} natural conditions under which the process from \eqref{e:gen3} converges to a unique invariant probability measure, both in the sense of distribution and that of empirical measures. 

It is important to see that for small noise, it is impossible to get $M<0$. We can argue by contradiction as follows. If $M<0$ then by Theorem \ref{t:stoc_m} one has $\NN(t)\to 0$ as $t\to\infty$ which contradicts that by Theorem \ref{sfp_stoch2} if we start with $\NN(0)$ close enough to $\bar\NN_0$ then $\NN(t)$ converges in distribution as $t\to\infty$ to a measure $\mu$ supported in a small ball around $\bar \NN_0$.

As a result, for small noise we will have $M>0$ -- intuitively, a persistent deterministic system will still be persistent if one adds small noise to it.
\end{rmk}
\begin{rmk}
One can show under natural assumptions that the Lyapunov exponent $M$ depends continuously on the model parameters -- see Proposition 3 from \cite{BS09}.     
\end{rmk}

\begin{proof}
We know from Proposition 3.6.2 in \cite{cuello2019persistence} that there exists a $\delta>0$, such that if $\|\NN(0)-\bar\NN_0\|<\delta$, and $\rho\in (0,\delta)$ then for all $t\in \N$, we have $\|\NN(t)-\bar\NN_0\|<\delta$. By the statement of Theorem \ref{t:small_noise_A} we know that $\bar N$ is a global equilibrium for the deterministic dynamical system $\bar N_i(t+1)=F_i(\bar \NN(t),\bar e)$ which implies  $\bar \NN(t)\to \bar \NN_0$ as $t\to\infty$. It follows from Theorem 3.3.1 and Proposition 3.6.2 in \cite{cuello2019persistence}, that if $\|\NN(0)-\bar\NN_0\|<\delta$ the occupational measures of $\NN(t)$ converge to $\mu$ and $supp(\mu)\subset B(\bar \NN_0,\delta)$. 

We cannot use Theorem 3.3.2 from \cite{cuello2019persistence} for the small noise expansion since in \cite{cuello2019persistence} the assumption is that $\E \xi=0.$ We modify \cite{cuello2019persistence} and argue instead as follows.

Let $(Y_1(t))_{t\geq 0}$ be the first order approximation of $N(t)$ given by
\begin{equation}
Y_1(t+1)=\bar N_0 + D_NF(Y_2(t)-\bar N_0) + \rho D_{\xi}F(\xi(t+1))    
\end{equation}
Taking expectation on both sides, noting that due to linearity one can switch the expectation with the matrix multiplication, we see that
\begin{equation}
\begin{split}
\E(Y_1(t+1))=\bar N_0 + D_NF(\E(Y_1(t+1))-\bar N_0) + \rho D_{\xi}F(\E\xi).    
\end{split}    
\end{equation}
By arguing as in Proposition 3.6.4. from \cite{cuello2019persistence} we get that $Y_1(t)\to\mu_1$ in distribution to some random variable $Y_1$ which has distribution $\mu_1$. Since $Y_1$ is bounded, note that $$\E(Y_1(t))\to \E(Y_1)$$ as $t\to\infty$. Assuming that $\left(I-D_NF(\bar N,0)\right)^{-1}$ exists we get that
\[
\E(Y_1)= \bar N_0 + \rho (I-D_NF)^{-1} D_{\xi}F(\E\xi).
\]
Since we have $\E(N)=E(Y_1)+O(\rho^2)$ we get 

\begin{equation}
\E(N)=  \bar N_0 + \rho (I-D_NF)^{-1} D_{\xi}F(\E\xi) + O(\rho^2).  
\end{equation}
Now to get an expression for $Cov(Y_2)$. We define $Y_2(t)$ in the following manner.

\begin{equation}
\begin{split}
Y_2(t+1)=&\bar N_0 + D_NF(Y_2(t)-)+ \rho D_{\xi}F\xi(t+1)+ \frac{1}{2}D_{NN}F(Y_2(t)-\bar N_0,Y_2(t)-\bar N_0)\\
+&\frac{\rho^2}{2}D_{\xi\xi}F(\xi(t+1),\xi(t+1))+\rho D_{N\xi}F(Y_2(t)-\bar N_0,\xi(t+1))
\end{split}  
\end{equation}
Taking expectation on both sides we get
\begin{equation}
\begin{split}
\mathbb{E}(Y_2(t+1))&=\bar N_0 + D_NF_I(\E Y_2(t)-\bar N_0)+ \rho D_{\xi}F(\E\xi)+ \E\left(\frac{1}{2}D_{NN}F_I(Y_2(t)-\bar N_0,Y_2(t)-\bar N_0)\right)\\
&~+\frac{\rho^2}{2}\E\left(D_{\xi\xi}F(\xi,\xi)\right)+\rho \E\left(D_{N\xi}F(Y_2(t)-\bar N_0,\xi)\right)\\
&=\bar N_0 + D_xF(\E Y_2(t)-\bar N_0)+ \rho D_{\xi}F(\E\xi)+ \frac{1}{2}\sum_{j,k=1}^{n} \frac{\partial^2F}{\partial N_j\partial N_k}\E\left((Y_2(t)-\bar N_0)_j(Y_2(t)-\bar N_0)_k \right)\\
&~+\frac{\rho^2}{2}\sum_{j,k=1}^m \frac{\partial^2F}{\partial \xi_j\partial \xi_k}\E(\xi_j(t+1)\xi_k(t+1)) + \rho \sum_{j=1}^{n}\sum_{k=1}^m \frac{\partial^2 F}{\partial N_j\partial \xi_k}\E( (Y_2(t)-\bar N_0)_j\xi_k(t+1)).
\end{split}    
\end{equation}
Using the same idea of as above we get
\begin{equation}
\begin{split}
\E(Y_2)=&\bar N_0 + D_NF(\E Y_2-\bar N_0)+ \rho D_{\xi}F(\E\xi)+ \frac{1}{2}\sum_{j,k=1}^{n} \frac{\partial^2F}{\partial N_j\partial N_k}\E\left((Y_2-\bar N_0)_j(Y_2-\bar N_0)_k \right)\\
+&\frac{\rho^2}{2}\sum_{j,k=1}^m \frac{\partial^2F}{\partial \xi_j\partial \xi_k}\E(\xi_j\xi_k) + \rho \sum_{j=1}^{n}\sum_{k=1}^m \frac{\partial^2 F}{\partial N_j\partial \xi_k}\E( (Y_2-\bar N_0)_j\xi_k)   \end{split}.    
\end{equation}
We see that
\begin{equation}
\begin{split}
\E(Y_2)\E(Y_2)'=&\bar N_0\E(Y_2)'+ D_NF(\E Y_2-\bar N_0)\bar N_0'+ D_NF(\E Y_2-\bar N_0)D_NF(\E Y_2-\bar N_0)'\\
+&\rho D_NF(\E Y_2-\bar N_0)D_\xi F(\E(\xi))'+\rho D_\xi F(\xi)\bar N_0'+\rho D_\xi F(\E(\xi))D_{\xi}F(\E Y_2-\bar N_0)'\\    
+&\rho^2D_{\xi}F(\E(\xi))\left(D_{\xi}F(\E(\xi) )\right)' + \frac{1}{2}\E(D_{NN}F(Y_2-\bar N_0,Y_2-\bar N_0))N_0'\\
+&\frac{\rho^2}{2}\E(D_{\xi\xi}F(\xi,\xi))\bar N_0'+ \rho \E(D_{N\xi}F(Y_2-\bar N_0,\xi))\bar N_0' + O(\rho^3).
\end{split}    
\end{equation}
Expanding $Y_2(t+1)Y_2(t+1)'$ and taking expectations yields
\begin{equation}
\begin{split}
\E(Y_2(t+1)Y_2(t+1)') =& \bar N_0\E(Y_2(t+1))'+ D_NF(\E(Y_2(t)-\bar N_0))\bar N_0'\\
+& \E\left( D_NF(Y_2(t)-\bar N_0)D_NF(Y_2-\bar N_0)'\right)\\
+& \rho\E\left(D_NF(Y_2(t)-\bar N_0)D_{\xi}F\xi(t+1)' \right)+\rho D_{\xi}F\E(\xi)\bar N_0'\\ 
+& \rho \E\left(D_{\xi}F\xi(t+1)D_NF(Y_2(t)-\bar N_0)'\right)\\
+& \rho^2\left(D_{\xi}F\xi(t+1)D_{\xi}F\xi(t+1)' \right)+ \frac{1}{2}\E(D_{NN}F(Y_2(t)-\bar N_0,Y_2(t)-\bar N_0))\bar N_0'\\
+&\frac{\rho^2}{2}\E\left(D_{\xi\xi}F(\xi(t+1),\xi(t+1))\right)\bar N_0\\
+& \rho\E\left(D_{N\xi}F(Y_2(t)-\bar N_0,\xi(t+1))\right)\bar N_0'
+ O(\rho^3). 
\end{split}
\end{equation}
Since $(\xi(t))_{t\in\mathbb{N}_0}$ is iid we have 
\begin{equation}
  \begin{split}
  \E\left( D_NF(Y_2(t)-\bar N_0)D_\xi F\xi(t+1)'\right)=&\E\left(D_NF(Y_2(t)-\bar N_0)\right)\E(D_\xi F\xi(t+1)')\\
  =&D_NF\E(Y_2(t)-\bar N_0)[D_\xi F][\E(\xi(t+1))]'.    
  \end{split}  
\end{equation}
Because $Y_2(t)\to Y_2$ in distribution by Proposition 3.6.4. from \cite{cuello2019persistence}, and for $t$ large enough $\|Y(t)-\bar N_0\|\leq C\rho$, we see that $Y_2(t)Y_2'(t)\to Y_2Y_2'$ in distribution. As a result
\begin{equation}
\begin{split}
\E(Y_2Y_2')=&\bar N_0\E(Y_2)+ D_NF(\E(Y_2-\bar N_0))\bar N_0'+ \E\left( D_NF(Y_2-\bar N_0)D_NF(Y_2-\bar N_0)'\right)\\
+& \rho\E\left(D_NF(Y_2-\bar N_0)D_{\xi}F\xi' \right)+\rho D_\xi F\E(\xi)\bar N_0 + \rho \E\left(D_eF\xi D_NF(Y_2-\bar N_0)'\right)\\
+& \rho^2\left(D_\xi F\xi D_\xi F\xi' \right)+ \frac{1}{2}\E(D_{NN}F(Y_2-\bar N_0,Y_2-\bar N_0))\bar N_0\\
+&\frac{\rho^2}{2}\E\left(D_{\xi\xi}F(\xi,\xi\right)\bar N_0'+ \rho\E\left(D_{N\xi}F(Y_2-\bar N_0,\xi)\right)\bar N_0+ O(\rho^3)     
\end{split}    
\end{equation}
Hence we have the expression for $\text{Cov}(Y_2)$ 
\begin{equation}\label{Cov_1}
\begin{split}
\text{Cov}(Y_2)=\E(Y_2Y_2')-\E(Y_2)\E(Y_2')=&   \E\left( D_NF(Y_2-\bar N_0)D_NF(Y_2-\bar N_0)'\right)\\
-& D_NF(\E(Y_2)-\bar N_0)D_NF(\E(Y_2)-\bar N_0)'
+\rho^2\left(\E(D_{\xi}F(\xi)D_{\xi}F(\xi)')\right)\\
-&\rho^2\left(D_{\xi}F\E(\xi)D_{\xi}F\E(\xi)'\right)+ O(\rho^3).   
\end{split}    
\end{equation}
We simplify \eqref{Cov_1} to get
\begin{equation}
\begin{split}
\E(Y_2Y_2')-\E(Y_2)\E(Y_2')=& \E\left( D_NF(Y_2-\E(Y_2))D_NF(Y_2-\E(Y_2))'\right)\\
+&\rho^2\E\left(D_{\xi}F(\xi-\E(\xi))D_{\xi}F(\xi-\E(\xi)'\right)+O(\rho^3).
\end{split}    
\end{equation}
This implies
\begin{equation}
\begin{split}
\text{Cov}(Y_2)=& D_NF\text{Cov}(Y_2)D_NF +\rho^2D_\xi F\text{Cov}(\xi)D_\xi F+O(\rho^3).
\end{split}
\end{equation}
As a result 
\begin{equation}
\begin{split}
Vec(Cov(Y_2))=&[D_NF'\otimes D_NF]Vec(Cov(Y_2))+\rho^2[D_{\xi}F\otimes D_{\xi}F]Vec(Cov(\xi))+O(\rho^3).
\end{split}
\end{equation}
Hence we get 
\begin{equation}\label{eqCOV}
\begin{split}
Vec(Cov(N))=\rho^2[I-D_NF_I'\otimes D_NF]^{-1}[D_{\xi}F\otimes D_{\xi}F]Vec(Cov(\xi)) + O(\rho^3).
\end{split}    
\end{equation}
Recall that for the expectation we have 
\begin{equation}\label{eqexp}
\E(N)=  \bar N_0 + \rho (I-D_NF)^{-1} D_{\xi}F_I(\E\xi) + O(\rho^2).    
\end{equation}
By taking expectations in
\begin{equation}
\begin{split}
N(t+1)=&\bar N_0 +  D_NF(\bar N_0,0)(X(t)-\bar N_0) + \rho D_{\xi}F(\bar N_0,0)(\xi(t+1))\\
+& \frac{1}{2}(X(t)-\bar N_0)'D_{NN}F(\bar N_0,0)(X(t)-\bar N_0)\\
+&\frac{\rho^2}{2}(\xi(t+1))'D_{\xi\xi}F(\bar N_0,0)(\xi(t+1))\\
+& \rho\sum_{i=1}^n\sum_{j=1}^n\frac{\partial^2 F}{\partial N_i\partial \xi_j}(\bar N_0,0)(\xi(t+1))_j(X(t)-\bar N_0)_i + O(\rho^3)
\end{split}   
\end{equation}
we get
\begin{equation}
\begin{split}
\mathbb{E}(N(t+1))=&\bar N_0 +  D_NF(\bar N_0,0)(\E(N(t))-\bar N_0) + \rho D_{\xi}F(\bar N_0,0)(\E(\xi))\\
+& \frac{1}{2}\E\left[(N(t)-\bar N_0)'D_{NN}F(\bar N_0,0)(N(t)-\bar N_0)\right]\\
+&\frac{\rho^2}{2}\E\left[(\xi)'D_{\xi\xi}F(\bar N_0,0)(\xi)\right]\\
+& \rho\sum_{i=1}^n\sum_{j=1}^n\frac{\partial^2 F}{\partial N_i\partial \xi_j}(\bar N_0,0)(\E(\xi))_j(\E(N(t))-\bar N_0)_i + O(\rho^3).    
\end{split}    
\end{equation}
Using the convergence in distribution yields
\begin{equation}
\begin{split}
\E(N)=&\bar N_0 +  D_NF(\bar N_0,0)(\E(N)-\bar N_0) + \rho D_{\xi}F(\bar N_0,0)(\E(\xi))\\
+& \frac{1}{2}\E\left((N-\bar N_0)'D_{NN}F(\bar N_0,0)(N-\bar N_0)\right)\\ 
+&\frac{\rho^2}{2}\E\left((\xi)'D_{\xi\xi}F(\bar N_0,0)(\xi)\right)\\
+& \rho\sum_{i=1}^n\sum_{j=1}^n\frac{\partial^2 F}{\partial N_i\partial \xi_j}(\bar N_0,0)(\E(\xi))_j(\E(X)-\bar N_0)_i + O(\rho^3)
\end{split}    
\end{equation}
The matrix $D_NF(\bar N_0,0)$ is invertible so
\begin{equation}\label{expression_exp}
\begin{split}
\E(N)=& \bar N_0 + \rho(I-D_NF)^{-1}D_{\xi}F(\bar N_0,0)(\E(\xi))\\
+&\frac{1}{2}\left(I-D_{N}F(\bar N_0,0)\right)^{-1}\E\left((N-\bar N_0)'D_{NN}F(\bar N_0,0)(N-\bar N_0)\right)\\
+&\frac{\rho^2}{2}(I-D_NF(\bar N_0,0))^{-1}\E\left((\xi)'D_{\xi\xi}F(\bar N_0,0)(\xi)\right)\\
+&\rho(I-D_NF)^{-1}\sum_{i=1}^n\sum_{j=1}^n\frac{\partial^2 F}{\partial N_i\partial \xi_j}(\bar N_0,0)(\E(\xi))_j(\E(N)-\bar N_0)_i.
\end{split}    
\end{equation}
Using \eqref{eqexp} we can modify the last term of \eqref{expression_exp}
\begin{equation}
\begin{split}
\E(N)=& x^* + \rho(I-D_NF(\bar N_0,0))^{-1}D_\xi F(\bar N_0,0)(\E(\xi))\\
+&\frac{1}{2}\left(I-D_N F(\bar N_0,0)\right)^{-1}\E\left((N-\bar N_0)'D_{NN}F(\bar N_0,0)(N-\bar N_0)\right)\\
+&\frac{\rho^2}{2}(I-D_NF(\bar N_0,0))^{-1}\E\left((\xi)'D_{\xi\xi}F(\bar N_0,0)(\xi)\right)\\
+&\rho^2(I-D_NF(\bar N_0,0))^{-1}\sum_{i=1}^n\sum_{j=1}^n\frac{\partial^2 F}{\partial N_i\partial \xi_j}(\bar N_0,0)(\E(\xi))_j((I-D_NF)^{-1} D_\xi F(\E\xi))_i\\
+&O(\rho^3).
\end{split}    
\end{equation}
We rewrite the expression with the modification that $B$ denotes $(I-D_NF)^{-1}$
and using the fact that $[\text{Cov}(\xi)]_{ij}+\E(\xi_i)\E(\xi_j)=\E(\xi_i\xi_j)$ 
\begin{equation}\label{taylor_fin}
\begin{split}
\E(N)=& \bar N_0 + \rho B D_\xi F(\bar N_0,0)(\E(\xi)) + \frac{1}{2}B\left(\sum_{i,j=1}^n \frac{\partial^2 F}{\partial N_i\partial N_j}(\bar N_0,0)[\text{Cov}(N)]_{ij}\right)\\
+&\frac{\rho^2}{2}B\left(\frac{\partial^2 F}{\partial \xi_i\partial \xi_j}(\bar N_0,0)\E(\xi_i\xi_j)\right)+\frac{\rho^2}{2}B\left(\sum_{i,j=1}^n \frac{\partial^2 F}{\partial N_i\partial N_j}(\bar N_0,0)(B D_{\xi}F(\E\xi))_i(B D_{\xi}F(\E\xi))_j \right)\\
+&\rho^2B\sum_{i=1}^n\sum_{j=1}^n\frac{\partial^2 F}{\partial N_i\partial \xi_j}(\bar N_0,0)(\E(\xi))_j(B D_\xi F(\E\xi))_i+ O(\rho^3).
\end{split}    
\end{equation}
\end{proof}

\section{Verification of Assumptions in the Beverton-Holt model with fluctuating dispersal}\label{a:BH_d}
We want to make use of Theorem 1 from \cite{BS09}. We note that the computations are similar to those done in \cite{BS09} and require minimal modifications. They are added here for completeness. 
\begin{asp}\label{a:BH_d}
Suppose the dynamics is given by

\begin{equation}
    \begin{split}
      \BX(t+1)& =F(\BX(t),\xi(t))= D(t)\Lambda(t) \BX(t):\\
&\left(D(t)\diag\left(  \frac{r_1(t)}{1+\kappa_1(t)X_1(t)},\dots,\frac{r_n(t)}{1+\kappa_n(t)X_n(t)}\right)\right)\BX(t)  
    \end{split}
\end{equation}

where $D(t):=(d_{ij}(t)), r_i(t)$ and $\kappa_i(t)$ can be random. More specifically, the randomness is given by $$\xi(t)=(r_i(t), \kappa_i(t), d_{ij}(t))$$ which is assumed to form an iid sequence in $\R^\ell$ where $\ell=2n+n^2$.
We will assume the following:
\begin{itemize}
    \item $\E\left(\max_j \frac{r_j}{\kappa_j}\right)<\infty$.
    \item $\E\left(\ln^+\left(\max_j \frac{r_j}{\kappa_j}+ \max_j r_j + 2 \max_j (r_j\kappa_j)\right)\right)<\infty$
    \item With probability one: $r_i(1)>0, \kappa_i(1)\geq 0, 0<d_{ij}(1)<1$ and $\sum_{j=1}^nd_{ij}(1)=1$ for $i,j=1,\dots n$. 
\end{itemize}

\end{asp}

\begin{prop} Assumptions A1-A6 from \cite{BS09} hold. The Lyapunov exponent $M$ exists and as a result, if $M>0$, then $\BX(t)\to \BX(\infty)$ in distribution as $t\to\infty$ while if $M<0$ then $\NN(t)\to 0$.
\end{prop}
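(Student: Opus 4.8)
The plan is to recognize this Proposition as a direct application of Theorem \ref{t:BS09}: once we check that the model of Assumption \ref{a:BH_d} satisfies Assumptions \ref{asp:lyapunov} and \ref{a:BS09} (the conditions labelled A1--A6 in \cite{BS09}), the existence of the metapopulation growth rate $M$ and the dichotomy --- $M>0$ forces $\BX(t)\to\BX(\infty)$ in distribution for every $\BX(0)>0$, while $M<0$ forces $\NN(t)\to0$ --- follow verbatim from Theorem \ref{t:BS09} (equivalently, Theorems~1 and~3 and Proposition~1 of \cite{BS09}). Thus the whole argument reduces to verifying a list of hypotheses, and, as the authors note, the required estimates are essentially those of \cite{BS09}; I would only indicate the choices.

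For the $V$-drift condition (first bullet of Assumption \ref{a:BS09}) I would use the proper function $V(x)=\sum_{i=1}^n x_i$ on $\R_+^n$. Since each column of $D(\xi)$ sums to $1$ (mass conservation; cf.\ $d_{ii}=1-\sum_{k\neq i}d_{ki}$ in \eqref{e:dispersal_det}) one has $\sum_i(D(\xi)y)_i=\sum_i y_i$, and since $s\mapsto rs/(1+\kappa s)\le r/\kappa$,
\[ V\big(A(x,\xi)x\big)=\sum_{j=1}^n\frac{r_j(\xi)\,x_j}{1+\kappa_j(\xi)\,x_j}\le\sum_{j=1}^n\frac{r_j(\xi)}{\kappa_j(\xi)}=:\beta(\xi), \]
which is finite a.s.\ because $\E(\max_j r_j/\kappa_j)<\infty$ (this also shows $\kappa_j(\xi)>0$ a.s.). Hence the first bullet holds with, say, $\alpha(\xi)\equiv\tfrac12$: then $\E\ln\alpha=\ln\tfrac12<0$ and $\E\ln^+\beta\le\E\ln^+(n\max_j r_j/\kappa_j)<\infty$ by the moment hypothesis. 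This is the step producing tightness and stochastic boundedness of $(\BX(t))$.

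For the remaining bullets, write $F(x,\xi)=A(x,\xi)x$ (the map called $G$ in Assumption \ref{a:BS09}), with components $F_i=\sum_j d_{ij}(\xi)\,r_j(\xi)x_j/(1+\kappa_j(\xi)x_j)$; these are $C^\infty$ in $x$ on $\R_+^n$ since the denominators never vanish. One computes $\partial F_i/\partial x_l=d_{il}r_l/(1+\kappa_l x_l)^2\ge0$, so $DF\ge0$ entrywise (fourth bullet); $\partial A_{ij}/\partial x_l=-\mathbf{1}_{\{l=j\}}d_{ij}\kappa_j r_j/(1+\kappa_j x_j)^2\le0$, strictly when $l=j$ because $d_{ij},r_j,\kappa_j>0$ a.s.\ (third bullet); and the mixed second $x$-derivatives of $F_i$ vanish while $|\partial^2 F_i/\partial x_j^2|\le 2r_j\kappa_j$, so on $\{\|x\|\le1\}$ one gets $\|F\|+\|DF\|+\|D^2F\|\le C(n)(\max_j r_j/\kappa_j+\max_j r_j+2\max_j r_j\kappa_j)$ and hence $\E\,\sup_{\|x\|\le1}\ln^+(\cdots)<\infty$ (second bullet) --- the combination appearing here is exactly the one whose $\ln^+$ is assumed integrable in Assumption \ref{a:BH_d}. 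Finally Assumption \ref{asp:lyapunov} holds because, with probability one, all entries of $D(\xi)$ lie in $(0,1)$, so $D(\xi)$ is positive and hence primitive with exponent $1$; $\Lambda(0,\xi)=\diag(r_1(\xi),\dots,r_n(\xi))$ has strictly positive diagonal a.s.; and, in the $\ell^1$ operator norm, $\|A(0,\xi)\|_1=\max_j r_j(\xi)$, so $\E|\ln\|A(0,\xi(0))\||<\infty$ follows from the moment hypotheses. Theorem \ref{t:BS09} then gives the conclusion.

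I do not expect a genuine obstacle; the Proposition is a careful application of \cite{BS09}. The one step needing real attention is the derivative bound for the second bullet: one must differentiate $F$ twice in $x$, keep track of the $d_{ij},r_j,\kappa_j$ factors, fix a convenient matrix norm, and confirm that the dominating function matches --- up to a dimensional constant --- the combination $\max_j r_j/\kappa_j+\max_j r_j+2\max_j r_j\kappa_j$ for which $\E\ln^+(\cdot)<\infty$ was assumed (which is precisely why that combination was built into Assumption \ref{a:BH_d}). A minor subtlety is the lower-tail part $\E\ln^-\|A(0,\xi(0))\|<\infty$ implicit in Assumption \ref{asp:lyapunov}: if one prefers not to extract it from the standing moment hypotheses, one can invoke Kingman's subadditive ergodic theorem directly, which needs only $\E\ln^+\|A(0,\xi(0))\|<\infty$ and still yields $M\in[-\infty,\infty)$, which is enough for the stated dichotomy.
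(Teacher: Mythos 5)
Your proposal is correct and follows essentially the same route as the paper's Appendix~\ref{a:BH_d}: the same Lyapunov function $V(x)=\sum_i x_i$ with the bound $V(A(x,\xi)x)\le\sum_j r_j/\kappa_j$ (the paper takes $\gamma\equiv 0$ where you take $\alpha\equiv\tfrac12$, an immaterial difference), the same first- and second-derivative computations for the integrability, sign, and monotonicity conditions, and the same appeal to Theorems~1 and~3 and Proposition~1 of \cite{BS09}. Your closing remark about the lower tail of $\ln\|A(0,\xi(0))\|$ and the Kingman fallback is a sensible extra precaution that the paper glosses over but does not change the argument.
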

\begin{proof}

\textbf{Assumption A1} from \cite{BS09} holds because we assume the $\xi(t)$'s form an iid sequence in $\R^\ell$.

We check \textbf{Assumption A2} from \cite{BS09}. We find a proper function $W:\R_+^n\to \R$ and random variables $\gamma,\tau:\Omega\to [0,\infty)$ such that the following hold
\begin{itemize}
    \item [(i)] $W(D(\omega)\Lambda(x,\xi(\omega))x)\leq \gamma(\omega)W(x)+\tau(\omega)$
    \item [(ii)] $\E(\ln(\gamma))<0$
    \item [(iii)] $\E(\ln^{+}(\tau))<\infty$
\end{itemize}
Instead of writing $r_i(t), \kappa_i(t), d_{ij}(t)$ we slightly abuse notation and just write $r_i, \kappa_i, d_{ji}$ below.

Following \cite{BS09} let $W(x) := \sum_{i=1}^n x_i$ and suppose $D=(d_{ij})$ is a random dispersal matrix. Note that 
$$(D\Lambda)_{ij}=\left(D\diag\left( \frac{r_i}{1+\kappa_ix_i}\right)\right)_{ij} = \left(\frac{d_{ij}r_j}{1+\kappa_j x_j}\right).$$ 
Since $x_j\geq 0$ we have
\[
\frac{d_{ij}r_jx_j}{1+\kappa_j x_j} \leq \frac{d_{ij}r_j}{\kappa_j}.
\]

Using this and the fact that $\sum_i d_{ij}=1$ we get the following estimates
\begin{equation}
    \begin{split}
 W\left(D\Lambda x\right) =& W\left(D\diag\left( \frac{r_i}{1+\kappa_ix_i}\right) x\right) \\
 &=   \sum_{i,j} \frac{d_{ij}r_jx_j}{1+\kappa_j x_j}    \\
 &= \sum_j \frac{r_jx_j}{1+\kappa_j x_j}\\
 &\leq \sum_j\frac{r_j}{\kappa_j}\\
 &\leq n \max_j \frac{r_j}{\kappa_j}
    \end{split}
\end{equation}
As such, if we take $\gamma=0$ and $\tau = n \max_j \frac{r_j}{\kappa_j}$ (as in \cite{BS09}) the conditions are satisfied since using Jensen's inequality and Assumption \ref{a:BH_d} we get
\[
\E \ln \tau < \ln n + \ln \E \left(\max_j \frac{r_j}{\kappa_j}\right) < \infty.
\]

\textbf{Assumption A3} from \cite{BS09}: Since the matrix $D$, which can be random, has coefficients $d_{ij}\in(0,1)$ it is clearly primitive. $\Lambda$ is a diagonal matrix with strictly positive entries on the diagonal. The product of a primitive matrix and a diagonal matrix with strictly positive diagonal entries is always primitive which shows that the matrix $D\Lambda(x)$ is primitive. 

We verify \textbf{Assumption A4} from \cite{BS09}, that is $$\E\left(\sup_{\|x\|\leq 1} \ln^{+}(\|F(x)\|+\|DF(x)\|+\|D^2F(x))\|\right)<\infty$$
For this we note that
\[
\sum_i (D\Lambda \bx)_i =\sum_i \sum_k \left(\frac{d_{ik} r_kx_k}{1+\kappa_k x_k}\right)_i \leq n \max_k \frac{r_k}{\kappa_k}
\]
and 
\[
\frac{\partial}{\partial x_l} \sum_k\left(\frac{ d_{ik} r_kx_k}{1+\kappa_k x_k}\right) = \frac{d_{il}r_l}{(1+\kappa_l x_l)^2} \leq d_{il} r_l.
\]
Differentiating again we get for $l\neq j$
\[
\left|\frac{\partial^2}{\partial x_l\partial x_j} \left(\frac{\sum_k d_{ik} r_kx_k}{1+\kappa_k x_k}\right) \right| = 0
\]
and
\[
\left|\frac{\partial^2}{\partial x_l\partial x_l} \sum_k\left(\frac{ d_{ik} r_kx_k}{1+\kappa_k x_k}\right) \right| = \left|-\frac{2d_{il}r_l\kappa_l}{(1+\kappa_k x_l)^3}\right|\leq 2d_{il}r_l\kappa_l
\]
Therefore
\begin{equation*}
    \begin{split}
    \E\left(\sup_{\|x\|\leq 1} \ln^{+}(\|F(x)\|+\|DF(x)\|+\|D^2F(x))\|\right) &\leq \E \left(\ln^+ \left(\max_j \frac{r_j}{\kappa_j}+ \max_j r_j + 2 \max_j r_j\kappa_j\right)\right) \\
    &<\infty.
    \end{split}
\end{equation*}
\\
Verification of \textbf{Assumption A5} from \cite{BS09}: For the matrix $D\Lambda(x,\xi)$ one has
    $$[D\Lambda(x)]_{ij}=\frac{d_{ij}r_j}{1+\kappa_jx_j}$$
which implies
   $$\frac{\partial [D\Lambda(x)]_{ij}}{\partial x_l}= \delta_{lj} \frac{-2d_{ij}r_j\kappa_j}{(1+\kappa_jx_j)^2}$$
clearly we have $\frac{\partial [D\Lambda(x)]_{ij}}{\partial x_l}\leq 0$. Moreover, if $l=j$ we see that $\frac{\partial [D\Lambda(x,\xi)]_{ij}}{\partial x_j}< 0.$
\\
Verification of \textbf{Assumption A6} from \cite{BS09}: Since $$(D\Lambda \bx)_i=\sum_{k=1}^n \frac{d_{ik}r_kx_k}{1+\kappa_kx_k}  $$
we get $$\frac{\partial F_i}{\partial x_j}=\frac{d_{ij}r_j}{(1+r_jx_j)^2}\geq 0.$$

 By the calculations and arguments above we also get that
\[
\E \left| \ln \|A(0,\xi(0)\|\right|<\infty
\]
implying that Assumption \ref{asp:lyapunov} holds. By \cite{ruelle1979analycity, BS09} we get that the Lyapunov exponent $M$ exists. The result now follows from Theorem 1 and Proposition 1 in \cite{BS09}.
\end{proof}
The last statement of Theorem \ref{t:stoc_m} follows from Theorem \ref{t:dispersal_det} because all the assumptions of Theorem \ref{t:dispersal_det} hold, since in particular $\bar D$ is primitive because $0<\bar d_{ij} = \E d_{ij} <1$ from Assumption \ref{a:BH_dd}.

\section{Verification of Assumptions in the Hassell model with fluctuating dispersal} \label{a:hassell}
In this appendix we treat the Hassell functional response. 
\begin{asp}\label{a:H_d}
The dynamics is given by
\begin{equation}
    \begin{split}
\BX(t+1) &=F(\BX(t),\xi(t))= D(t)\Lambda(t) \BX(t):\\
&\left(D(t)\diag\left(  \frac{\alpha_1(t)}{(1+K_1(t)X_1(t))^c},\dots,\frac{\alpha_n(t)}{(1+K_n(t)X_n(t))^c}\right)\right)\BX(t)
    \end{split}
\end{equation}
where $D(t):=(d_{ij}(t)), r_i(t)$ and $\kappa_i(t)$ can be random. More specifically, the randomness is given by $$\xi(t)=(\alpha_i(t), K_i(t), d_{ij}(t))$$ which is assumed to form an iid sequence in $\R^\ell$ where $\ell=2n+n^2$.
We will assume the following:
\begin{itemize}
    \item $c\in (0,1]$
    \item $\E\left(\max_j \frac{\alpha_j}{K_j^c}\right)<1$.
    \item $\E\left(\ln^+\left(\max_j \frac{\alpha_j}{K_j^c}+ \max_j \alpha_j +  \max_j \alpha_jK_j^2\right)\right)<\infty$
    \item With probability one: $\alpha_i(1)>0, K_i(1)> 0, 0<d_{ij}(1)<1$ and $\sum_{j=1}^nd_{ij}(1)=1$ for $i,j=1,\dots n$. 
\end{itemize}
\end{asp}

\textbf{Assumption A1} from \cite{BS09} holds because we assume the $\xi(t)$'s form an iid sequence in $\R^\ell$.

We check \textbf{Assumption A2}. We find a proper function $W:\R_+^n\to \R$ and random variables $\gamma,\tau:\Omega\to [0,\infty)$ such that the following hold
\begin{itemize}
    \item [(i)] $W(D(\omega)\Lambda(x,\xi(\omega))x)\leq \gamma(\omega)W(x)+\tau(\omega)$
    \item [(ii)] $\E(\ln(\gamma))<0$
    \item [(iii)] $\E(\ln^{+}(\tau))<\infty$
\end{itemize}
We write $\alpha_i, K_i,d_{ij}$ instead of $\alpha_i(t),K_i(t),d_{ij}(t)$, which is a slight abuse of notation. 
We proceed similarly to the Beverton-Holt case above. 
$$(D\Lambda)_{ij}=\left(D\diag\left( \frac{\alpha_i}{(1+K_ix_i)^c}\right)\right)_{ij} = \left(\frac{d_{ij}\alpha_j}{(1+K_j x_j)^c}\right).$$
Since we know that $x_j\geq 0$ we have 
\[
\frac{d_{ij}\alpha_jx_j}{(1+K_jx_j)^c}\leq \frac{d_{ij}\alpha_jx_j}{K_j^cx_j^c}=\frac{d_{ij}\alpha_j}{K_j^c}x_j^{1-c}.
\]
We can see that 
\[
D\Lambda x=\left(\sum_i^n\frac{d_{1i}\alpha_ix_i}{(1+K_ix_i)^c}, \dots, \sum_i^n \frac{d_{ni}\alpha_ix_i}{(1+K_ix_i)^c}\right)
\]
We consider $W(x)=\sum_i^n x_i$, and hence we get the estimate
\[
 W(D\Lambda x)=\sum_{i=1}^n\sum_{j=1}^n\frac{d_{ij}\alpha_jx_j}{(1+K_jx_j)^c}\leq \sum_{i=1}^n\sum_{j=1}^n\frac{d_{ij}\alpha_j}{K_j^c}x_j^{1-c}   
\]     
      
Now observe that if $x_j\geq 1$, then $x_j^{1-c}\leq x_j$, else we use $x_j^{1-c}\leq 1$, as a result we get
\[
\sum_{i=1}^n\sum_{j=1}^n\frac{d_{ij}\alpha_j}{K_j^c}x_j^{1-c}\leq \sum_{i=1}^n\sum_{x_j<1}\frac{d_{ij}\alpha_j}{K_j^c}+\sum_{i=1}^n\sum_{x_j\geq 1}\frac{d_{ij}\alpha_j}{K_j^c}x_j
\]
so now using the property that $\sum_i d_{ij}=1$, we can get the following estimate
\[
W(D\Lambda x) = \sum_{j=1}^n\frac{\alpha_j}{K_j^c}+  \left(\max_j\frac{\alpha_j}{K_j^c}\right)W(x)
\]
consider $\gamma=\left(\max_j\frac{\alpha_j}{K_j^c}\right)$ and $\tau=\sum_{j=1}^n\frac{\alpha_j}{K_j^c}$
and we have the desired inequality 
\[
W(D\Lambda x)\leq \gamma W(x)+\tau.
\]
Note that due to the Assumptions on the coefficients we also have $\E(\ln(\gamma))<0$ and $\E(\ln^{+}(\tau))<\infty$.

\textbf{Assumption A3: } The matrix $D$ is primitive, and since $\Lambda$ is a diagonal matrix with strictly positive entries, their product is also primitive. 

\textbf{Assumption A4: }: Under the condition $\|x\|\leq 1$ we have
\[
\sum_i (D\Lambda x)_i =\sum_i\sum_j\left( \frac{d_{ij}\alpha_jx_j}{(1+K_jx_j)^c}\right)\leq  \max_j\frac{\alpha_j}{K_j^c}x_j^{1-c}\leq \max_j \frac{\alpha_j}{K_j^c}
\]
\[
\frac{\partial F_i}{\partial x_j}=\frac{d_{ij}\alpha_j+ (1-c)K_j\alpha_jx_j}{(1+K_jx_j)^{c+1}}\leq d_{ij}\alpha_j+(1-c)\alpha_jK_jx_j
\]
\[
\sum_i\sum_j \frac{\partial F_i}{\partial x_j} \leq n \max_j \alpha_j + n(1-c)\max_j \alpha_jK_j^{1-c}.
\]
Since we have that $\frac{\partial^2F_i}{\partial x_k\partial x_j}=0$ for $j\neq k$, 
\[
\left| \frac{\partial^2F_i}{\partial x_j^2}\right|= \left|\frac{K_j\alpha_j((1-c)-(1+c)d_{ij}-c(1-c)K_jx_j)}{(1+K_jx_j)^{c+2}}\right |\leq K_j\alpha_j(1-c) + (1+c)K_j\alpha_jd_{ij}+ c(1-c)K_j^2\alpha_jx_j
\]
using the above inequality and keeping in mind that $\|x\|\leq 1 $ we get
\[
\sum_i\sum_j \frac{\partial^2 F_i}{\partial x_j^2}\leq 2n\max_j K_j\alpha_j + nc(1-c)\max_j K_j^2\alpha_j.
\]
\\
This implies
\[
\sup_{\|x\|\leq 1}\E(\ln^{+}(\|F(x)\| +\|DF(x)\|+ \|D^2F(x)\|)) <\infty.
\]

\textbf{Assumption A5: } We have seen that $(D\Lambda)_{ij}=\frac{d_{ij}\alpha_j}{(1+K_jx_j)^c}$.
As a result we have 
$$ \frac{\partial (D\Lambda)_{ij}}{\partial x_l}=-\delta_{jl}\frac{cd_{ij}\alpha_jK_j}{(1+K_jx_j)^x}.$$
which shows that $\frac{\partial (D\Lambda)_{ij}}{\partial x_l}\leq 0$ and that if $j=l$, then $\frac{\partial (D\Lambda)_{ij}}{\partial x_l}<0$.

\textbf{Assumption A6: } Since $$D\Lambda x= \sum_{i=1}^n \frac{d_{1i}\alpha_ix_i}{(1+K_ix_i)^c} $$
we get
\[
\frac{\partial F_i}{\partial x_j}=\frac{d_{ij}\alpha_j+ (1-c)K_j\alpha_jx_j}{(1+K_jx_j)^c}.
\]
Since we have assumed that $c\in (0,1]$ the above implies $\frac{\partial F_i}{\partial x_j}\geq 0$.

Using these computations, we can argue as in the Beverton-Holt case from Appendix \ref{a:BH_d} and prove Theorem \ref{t:stoc_m_h}.

\bibliography{LV}

\end{document}